\PassOptionsToPackage{unicode}{hyperref}
\PassOptionsToPackage{hyphens}{url}
\documentclass[
  english,
]{article}
\usepackage{xcolor}
\usepackage{amsmath,amssymb}
\setcounter{secnumdepth}{5}
\usepackage{iftex}
\ifPDFTeX
  \usepackage[T1]{fontenc}
  \usepackage[utf8]{inputenc}
  \usepackage{textcomp} 
\else 
  \usepackage{unicode-math} 
  \defaultfontfeatures{Scale=MatchLowercase}
  \defaultfontfeatures[\rmfamily]{Ligatures=TeX,Scale=1}
\fi
\usepackage{lmodern}
\ifPDFTeX\else
  \setmainfont[]{Latin Modern Roman}
  \setmathfont[]{Latin Modern Math}
\fi
\IfFileExists{upquote.sty}{\usepackage{upquote}}{}
\IfFileExists{microtype.sty}{
  \usepackage[]{microtype}
  \UseMicrotypeSet[protrusion]{basicmath} 
}{}
\makeatletter
\@ifundefined{KOMAClassName}{
  \IfFileExists{parskip.sty}{%
    \usepackage{parskip}
  }{
    \setlength{\parindent}{0pt}
    \setlength{\parskip}{6pt plus 2pt minus 1pt}}
}{
  \KOMAoptions{parskip=half}}
\makeatother
\makeatletter
\ifx\paragraph\undefined\else
  \let\oldparagraph\paragraph
  \renewcommand{\paragraph}{
    \@ifstar
      \xxxParagraphStar
      \xxxParagraphNoStar
  }
  \newcommand{\xxxParagraphStar}[1]{\oldparagraph*{#1}\mbox{}}
  \newcommand{\xxxParagraphNoStar}[1]{\oldparagraph{#1}\mbox{}}
\fi
\ifx\subparagraph\undefined\else
  \let\oldsubparagraph\subparagraph
  \renewcommand{\subparagraph}{
    \@ifstar
      \xxxSubParagraphStar
      \xxxSubParagraphNoStar
  }
  \newcommand{\xxxSubParagraphStar}[1]{\oldsubparagraph*{#1}\mbox{}}
  \newcommand{\xxxSubParagraphNoStar}[1]{\oldsubparagraph{#1}\mbox{}}
\fi
\makeatother

\usepackage{longtable,booktabs,array}
\usepackage{calc} 
\usepackage{etoolbox}
\makeatletter
\patchcmd\longtable{\par}{\if@noskipsec\mbox{}\fi\par}{}{}
\makeatother
\IfFileExists{footnotehyper.sty}{\usepackage{footnotehyper}}{\usepackage{footnote}}
\makesavenoteenv{longtable}
\usepackage{graphicx}
\makeatletter
\newsavebox\pandoc@box
\newcommand*\pandocbounded[1]{
  \sbox\pandoc@box{#1}%
  \Gscale@div\@tempa{\textheight}{\dimexpr\ht\pandoc@box+\dp\pandoc@box\relax}%
  \Gscale@div\@tempb{\linewidth}{\wd\pandoc@box}%
  \ifdim\@tempb\p@<\@tempa\p@\let\@tempa\@tempb\fi
  \ifdim\@tempa\p@<\p@\scalebox{\@tempa}{\usebox\pandoc@box}%
  \else\usebox{\pandoc@box}%
  \fi%
}
\def\fps@figure{htbp}
\makeatother

\ifLuaTeX
\usepackage[bidi=basic]{babel}
\else
\usepackage[bidi=default]{babel}
\fi
\ifPDFTeX
\else
\babelfont{rm}[]{Latin Modern Roman}
\fi

\def\languageshorthands#1{}
\ifLuaTeX
  \usepackage[english]{selnolig} 
\fi

\setlength{\emergencystretch}{3em} 

\providecommand{\tightlist}{%
  \setlength{\itemsep}{0pt}\setlength{\parskip}{0pt}}

\usepackage[style=authoryear,backend=biber,backref=true]{biblatex}
\addbibresource{references.bib}

\usepackage{arxiv}
\usepackage{orcidlink}
\usepackage{amsmath}
\usepackage[T1]{fontenc}
\makeatletter
\@ifpackageloaded{caption}{}{\usepackage{caption}}
\AtBeginDocument{%
\ifdefined\contentsname
  \renewcommand*\contentsname{Table of contents}
\else
  \newcommand\contentsname{Table of contents}
\fi
\ifdefined\listfigurename
  \renewcommand*\listfigurename{List of Figures}
\else
  \newcommand\listfigurename{List of Figures}
\fi
\ifdefined\listtablename
  \renewcommand*\listtablename{List of Tables}
\else
  \newcommand\listtablename{List of Tables}
\fi
\ifdefined\figurename
  \renewcommand*\figurename{Figure}
\else
  \newcommand\figurename{Figure}
\fi
\ifdefined\tablename
  \renewcommand*\tablename{Table}
\else
  \newcommand\tablename{Table}
\fi
}
\@ifpackageloaded{float}{}{\usepackage{float}}
\floatstyle{ruled}
\@ifundefined{c@chapter}{\newfloat{codelisting}{h}{lop}}{\newfloat{codelisting}{h}{lop}[chapter]}
\floatname{codelisting}{Listing}

\usepackage{amsthm}
\theoremstyle{definition}
\newtheorem{definition}{Definition}
\theoremstyle{plain}
\newtheorem{proposition}{Proposition}
\theoremstyle{definition}
\newtheorem{example}{Example}
\theoremstyle{plain}
\newtheorem{corollary}{Corollary}
\theoremstyle{remark}
\AtBeginDocument{}
\newtheorem*{remark}{Remark}

\makeatother
\makeatletter
\makeatother
\makeatletter
\@ifpackageloaded{caption}{}{\usepackage{caption}}
\@ifpackageloaded{subcaption}{}{\usepackage{subcaption}}
\makeatother
\usepackage{bookmark}
\IfFileExists{xurl.sty}{\usepackage{xurl}}{} 
\urlstyle{same}
\hypersetup{
  pdftitle={ICS for complex data with application to outlier detection for density data},
  pdfauthor={Camille Mondon; Huong Thi Trinh; Anne Ruiz-Gazen; Christine Thomas-Agnan},
  pdflang={en},
  pdfkeywords={Bayes spaces, Distributional data, Extreme
weather, Functional data, Invariant coordinate selection, Outlier
detection, Temperature distribution, 62H25, 62R10, 62G07, 65D07},
  hidelinks,
  pdfcreator={LaTeX via pandoc}}

\title{ICS for complex data with application to outlier detection for
density data}
\def\asep{\\\\\\ } 
\def\asep{\And }
\author{\textbf{Camille
Mondon}~\orcidlink{0009-0007-4569-990X}\\Mathematics and
Statistics\\Toulouse School of
Economics\\Toulouse,\ 31000\\\href{mailto:camille.mondon@tse-fr.eu}{camille.mondon@tse-fr.eu}\asep\textbf{Huong
Thi Trinh}~\orcidlink{0000-0002-5615-5787}\\Faculty of Mathematical
Economics\\Thuongmai
University\\Hanoi\\\href{mailto:trinhthihuong@tmu.edu.vn}{trinhthihuong@tmu.edu.vn}\asep\textbf{Anne
Ruiz-Gazen}~\orcidlink{0000-0001-8970-8061}\\Mathematics and
Statistics\\Toulouse School of
Economics\\Toulouse,\ 31000\\\href{mailto:anne.ruiz-gazen@tse-fr.eu}{anne.ruiz-gazen@tse-fr.eu}\asep\textbf{Christine
Thomas-Agnan}~\orcidlink{0000-0002-6430-3110}\\Mathematics and
Statistics\\Toulouse School of
Economics\\Toulouse,\ 31000\\\href{mailto:christine.thomas@tse-fr.eu}{christine.thomas@tse-fr.eu}}
\date{May 20, 2025}
\begin{document}
\maketitle
\begin{abstract}
Invariant coordinate selection (ICS) is a dimension reduction method,
used as a preliminary step for clustering and outlier detection. It has
been primarily applied to multivariate data. This work introduces a
coordinate-free definition of ICS in an abstract Euclidean space and
extends the method to complex data. Functional and distributional data
are preprocessed into a finite-dimensional subspace. For example, in the
framework of Bayes Hilbert spaces, distributional data are smoothed into
compositional spline functions through the Maximum Penalised Likelihood
method. We describe an outlier detection procedure for complex data and
study the impact of some preprocessing parameters on the results. We
compare our approach with other outlier detection methods through
simulations, producing promising results in scenarios with a low
proportion of outliers. ICS allows detecting abnormal climate events in
a sample of daily maximum temperature distributions recorded across the
provinces of Northern Vietnam between 1987 and 2016.
\end{abstract}
{\bfseries \emph Keywords}
\def\sep{\textbullet\ }
Bayes spaces \sep Distributional data \sep Extreme
weather \sep Functional data \sep Invariant coordinate
selection \sep Outlier detection \sep Temperature
distribution \sep 62H25 \sep 62R10 \sep 62G07 \sep 
65D07

\section{Introduction}\label{introduction}

The invariant coordinate selection (ICS) method was introduced in a
multivariate data analysis framework by \textcite{tyler_invariant_2009}.
ICS is one of the dimension reduction methods that extend beyond
Principal Component Analysis (PCA) and second moments. ICS seeks
projection directions associated with the largest and/or smallest
eigenvalues of the simultaneous diagonalisation of two scatter matrices
\autocites[see][]{loperfido_theoretical_2021}[ for recent
references]{nordhausen_usage_2022}. This approach enables ICS to uncover
underlying structures, such as outliers and clusters, that might be
hidden in high-dimensional spaces. ICS is termed ``invariant'' because
it produces components, linear combinations of the original features of
the data, that remain invariant (up to their sign and some permutation)
under affine transformations of the data, including translations,
rotations and scaling. Moreover, Theorem 4 in
\autocite{tyler_invariant_2009} demonstrates that, for a mixture of
elliptical distributions, the projection directions of ICS associated
with the largest or smallest eigenvalues usually generate the Fisher
discriminant subspace, regardless of the chosen pair of scatter matrices
and without prior knowledge of group assignments. Once the pair of
scatter matrices is chosen, invariant components can be readily
computed, and dimension reduction is achieved by selecting the
components that reveal the underlying structure. Recent articles have
examined in detail the implementation of ICS in a multivariate
framework, focusing on objectives such as anomaly detection
\autocite{archimbaud_ics_2018} or clustering
\autocite{alfons_tandem_2024}. These studies particularly address the
choice of pairs of scatter matrices and the selection of relevant
invariant components. Note that this idea of joint diagonalisation of
scatter matrices is also used in the context of blind source separation
and more precisely for Independent Component Analysis (ICA) which is a
model-based approach as opposed to ICS \autocite[see][ for more
details]{nordhausen_usage_2022}. ICS has later been adapted to more
complex data, namely compositional data
\autocite{ruiz-gazen_detecting_2023}, functional data
\autocites{rendon_aguirre_clustering_2017}[ for ICA]{li_functional_2021}
and multivariate functional data \autocites{archimbaud_ics_2022}[ for
ICA]{virta_independent_2020}.

A significant contribution of the present work is the formulation of a
coordinate-free variant of ICS, considering data objects in an abstract
Euclidean space, without having to choose a specific basis. This
formulation allows ICS to be consistently defined in a very general
framework, unifying its original definition for multivariate data and
its past adaptations to specific types of complex data. In the case of
compositional data, the coordinate-free approach yields an alternative
implementation of ICS that is more computationally efficient. We are
also able to propose a new version of invariant coordinate selection
adapted to distributional data. Note that a coordinate-free version of
ICS has already been mentioned in \autocite{tyler_invariant_2009}, in
the discussion by Mervyn Stone, who proposed to follow the approach of
\textcite{stone_coordinate-free_1987}. In their response, Tyler and
co-authors agree that this could offer a theoretically elegant and
concise view of the topic. A coordinate-free approach of ICA is proposed
by \textcite{li_functional_2021}, but to our knowledge, no
coordinate-free approach to ICS exists for a general Euclidean space.

As mentioned above, a possible application of ICS is outlier detection.
In the context of a small proportion of outliers, a complete detection
procedure integrating a dimension reduction step based on the selection
of invariant coordinates is described by \textcite{archimbaud_ics_2018}.
This method, called ICSOutlier, flags outlying observations and has been
implemented for multivariate data by
\textcite{nordhausen_icsoutlier_2023}. It has been adapted to
compositional data by \textcite{ruiz-gazen_detecting_2023} and to
multivariate functional data by \textcite{archimbaud_ics_2022}. We
propose to extend this detection procedure to complex data and
illustrate it on distributional data.

Detecting outliers is already challenging in a classical multivariate
context because outliers may differ from the other observations in their
correlation pattern \autocite[see][ for an overview on outlier detection
and analysis]{aggarwal_outlier_2017}. \textcite{archimbaud_ics_2018}
demonstrate how the ICS procedure outperforms those based on the
Mahalanobis distance and PCA (robust or not). For compositional data,
the constraints of positivity and constant sum of coordinates must be
taken into account as detailed in \autocite{ruiz-gazen_detecting_2023}
and further examined in this paper. For univariate functional data,
outliers are categorised as either magnitude or shape outliers, with
shape outliers being more challenging to detect because they are hidden
among the other curves. Many existing detection methods for functional
data rely on depth measures, including the Mahalanobis distance
\autocite[see, e.g., the recent paper][ and the included
references]{dai_functional_2020}. Density data are constrained
functional data, and thus combine the challenges associated with both
compositional and functional data. The literature on outlier detection
for density data is very sparse and recent with, as far as we know, the
papers by \textcite{menafoglio_anomaly_2021},
\textcite{lei_functional_2023} and \textcite{murph_visualisation_2024}
only. Two types of outliers have been identified for density data: the
horizontal-shift outliers and the shape outliers, with shape outliers
being again more challenging to detect \autocite[see][ for
details]{lei_functional_2023}. The procedure proposed by
\textcite{menafoglio_anomaly_2021} is based on an adapted version of
functional PCA to density objects in a control chart context. In order
to derive a robust distribution-to-distribution regression method,
\textcite{lei_functional_2023} propose a transformation tree approach
that incorporates many different outlier detection methods adapted to
densities. Their methods involve transforming density data into
unconstrained data and using standard functional outlier detection
methods. \textcite{murph_visualisation_2024} continue the work of the
previously cited article by comparing more methods through simulations,
and give an application to gas transport data. ICS is not mentioned in
these references.

Our coordinate-free definition of ICS enables direct adaptation of the
ICSOutlier method to complex data. Through a case study on temperature
distributions in Vietnam, we assess the impact of preprocessing
parameters and provide practical recommendations for their selection. In
addition, the results of a simulation study demonstrate that our method
performs favourably compared with other approaches. An original
application to Vietnamese data provides a detailed description of the
various stages involved in detecting low-proportion outliers using ICS,
as well as interpreting them from the dual eigendensities.

Section~\ref{sec-icsfree} presents ICS in a coordinate-free framework,
states a useful result to link ICS in different spaces, and treats the
specific cases of compositional, functional and distributional data. For
the latter, we develop a Bayes space approach and discuss the maximum
penalised likelihood method to preprocess the original samples of
real-valued data into a sample of compositional splines.
Section~\ref{sec-icscomplexout} describes the ICS-based outlier
detection procedure adapted to complex data, discusses the impact of the
preprocessing parameters on outlier detection through a toy example.
Simulating data from multiple generating schemes, we compare ICS with
other outlier detection methods for density data.
Section~\ref{sec-appliout} provides an application of the outlier
detection methodology to maximum temperature data in Vietnam over 30
years. Section~\ref{sec-conclusion} concludes the paper and offers some
perspectives. Supplementary material on ICS, a reminder on Bayes spaces,
as well as proofs of the propositions and corollaries are given in the
Appendix.

\section{ICS for complex data}\label{sec-icsfree}

A naive approach to ICS for complex data would be to apply multivariate
ICS to coordinate vectors in a basis. This not only ignores the metric
on the space when the basis is not orthonormal, but also gives a
potentially different ICS method for each choice of basis \autocite[as
in][]{archimbaud_ics_2022}. Defining a unique coordinate-free ICS
problem avoids defining multiple ICS methods and having to discuss the
potential links between them, thus making our approach more intrinsic.
In particular, it leads to more interpretable invariant components that
are of the same nature as the considered complex random objects. In the
case of functional or distributional data, the usual framework assumes
that the data objects reside in an infinite-dimensional Hilbert space,
which leads to non-orthonormal bases and incomplete inner product
spaces. We choose to restrict our attention to finite-dimensional
approximations of the data in the framework of Euclidean spaces, which
are particularly suitable here because ICS is known to fail when the
dimension is larger than the sample size \autocite{tyler_note_2010}.
This suggests that an ICS method for infinite-dimensional Hilbert spaces
would require modifying the core of the method, which is beyond the
scope of this work.

\subsection{A coordinate-free ICS problem}\label{sec-icsproblem}

In order to generalise invariant coordinate selection \autocite[
def.~1]{tyler_invariant_2009} to a coordinate-free framework in a
Euclidean space \(E\), we need to eliminate any reference to a
coordinate system, which means replacing coordinate vectors by abstract
vectors, matrices by linear mappings, bases or quadratic forms,
depending on the context. This coordinate emancipation procedure will
ensure that our definition of ICS for an \(E\)-valued random object
\(X\) does not depend on any particular choice of basis of \(E\) to
represent \(X\).

Following this methodology, we are able to immediately generalise the
definition of (affine equivariant) scatter operators from random vectors
in \(E = \mathbb R^p\) \autocite[as defined in][
eq.~3]{tyler_invariant_2009} to random objects in a Hilbert space \(E\).
This is a perfect example of how the coordinate-free framework can be
used to extend existing work to infinite-dimensional spaces. For further
details, see Definition~\ref{def-scatter} in the Appendix. A notable
difference from \autocite{tyler_invariant_2009} is that we work directly
with random objects instead of their underlying distributions. In
particular, we introduce an affine invariant space \(\mathcal E\) of
random objects on which the scatter operators are defined and to which
we assume that \(X\) belongs. For example,
\(\mathcal E = L^p (\Omega, E)\) corresponds to assuming the existence
of the \(p\) first moments of \(\| X \|\).

Again, emancipating from coordinates allows us to naturally generalise
ICS to complex random objects in a Euclidean space.

\begin{definition}[Coordinate-free
ICS]\protect\hypertarget{def-ics}{}\label{def-ics}

Let \((E, \langle \cdot, \cdot \rangle)\) be a Euclidean space of
dimension \(p\), \(\mathcal E \subseteq L^1 (\Omega, E)\) an affine
invariant set of integrable \(E\)-valued random objects, \(S_1\) and
\(S_2\) two scatter operators on \(\mathcal E\) and
\(X \in \mathcal E\). The invariant coordinate selection problem
\(\operatorname{ICS} (X,{S_1},{S_2})\) is to find a basis
\({H} = (h_1, \dots, h_p)\) of \(E\) and a finite non-increasing real
sequence \(\Lambda = (\lambda_1 \geq \ldots \geq \lambda_p)\) such that
\begin{equation}\phantomsection\label{eq-icsdef}{
\operatorname{ICS} (X,{S_1},{S_2}) : \left\{ \begin{array}{lcl}
\langle{S_1} [X] h_j, h_{j'} \rangle &= \delta_{jj'} \\
\langle{S_2} [X] h_j, h_{j'} \rangle &= \delta_{jj'} \lambda_j
\end{array} \text{ for all } 1 \leq j,j' \leq p, \right.
}\end{equation} where \(\delta_{jj'}\) equals \(1\) if \(j=j'\) and
\(0\) otherwise. Such a basis \(H\) is called an
\(\operatorname{ICS} (X,{S_1},{S_2})\) eigenbasis, whose elements are
\(\operatorname{ICS} (X,{S_1},{S_2})\) eigenobjects. Such a \(\Lambda\)
is called an \(\operatorname{ICS} (X,{S_1},{S_2})\) spectrum, whose
elements are called \(\operatorname{ICS} (X,{S_1},{S_2})\) eigenvalues
or generalised kurtosis. Given an \(\operatorname{ICS} (X,{S_1},{S_2})\)
eigenbasis \(H\) and \(1 \leq j \leq p\), the real number
\begin{equation}\phantomsection\label{eq-ic}{z_j = \langle X - \mathbb EX, h_j \rangle}\end{equation}
is called the \(j\)-th invariant coordinate (in the eigenbasis \(H\)).

\end{definition}

In Definition~\ref{def-ics}, our coordinate emancipation procedure does
not yield a generalisation to infinite-dimensional Hilbert spaces, where
a basis \(H\) would not be properly defined as it is not necessarily
orthonormal.

\begin{remark}[Multivariate case]
If \(E = \mathbb R^p\), we identify \(S_1\) and \(S_2\) with their
associated \((p\times p)\)-matrices in the canonical basis, and we
identify an ICS eigenbasis \(H\) with the \((p\times p)\)-matrix of its
vectors stacked column-wise, so that we retrieve the classical
formulation of invariant coordinate selection by
\textcite{tyler_invariant_2009}.
\end{remark}

In the ICS problem Equation~\ref{eq-icsdef}, the scatter operators
\(S_1\) and \(S_2\) do not play symmetrical roles. This is because the
usual method of solving \(\operatorname{ICS} (X,{S_1},{S_2})\) is to use
the associated inner product of \(S_1 [X]\), which requires \(S_1 [X]\)
to be injective. In that case, Proposition~\ref{prp-existence} in the
Appendix proves the existence of solutions to the ICS problem.

Another way to understand the coordinate-free nature of this ICS problem
is to work with data isometrically represented in two spaces and to
understand how we can relate a given ICS problem in the first space to a
corresponding ICS problem in the second. This is the object of the
following proposition, which will be used in Section~\ref{sec-icscoord}.

\begin{proposition}[]\protect\hypertarget{prp-isometry}{}\label{prp-isometry}

Let
\(\varphi : (E, \langle \cdot, \cdot \rangle_E) \rightarrow (F, \langle \cdot, \cdot \rangle_F)\)
be an isometry between two Euclidean spaces of dimension \(p\),
\(\mathcal E \subseteq L^1 (\Omega, E)\) an affine invariant set of
integrable \(E\)-valued random objects, \(S_1^{\mathcal E}\) and
\(S_2^{\mathcal E}\) two affine equivariant scatter operators on
\(\mathcal E\). Then:

\begin{enumerate}
\def\labelenumi{(\alph{enumi})}
\item
  \(\mathcal F = \varphi (\mathcal E) = \{ \varphi (X^{\mathcal E}), X^{\mathcal E} \in \mathcal E \}\)
  is an affine invariant set of integrable \(F\)-valued random objects,
  and we denote
  \(X^{\mathcal F} = \varphi(X^{\mathcal E}) \in \mathcal F\) whenever
  \(X^{\mathcal E} \in \mathcal E\);
\item
  \(S_\ell^{\mathcal F} :  X^{\mathcal F} \in \mathcal F \mapsto \varphi \circ S_\ell^{\mathcal E} [X^{\mathcal E}] \circ \varphi^{-1}, \ell \in \{1,2\},\)
  are two affine equivariant scatter operators on \(\mathcal F\);
\item
  \(H^{\mathcal F} = \varphi(H^{\mathcal E}) = (\varphi (h_1^{\mathcal E}), \dots, \varphi (h_p^{\mathcal E}))\)
  is a basis of \(F\) whenever
  \(H^{\mathcal E}  = (h_1^{\mathcal E}, \dots, h_p^{\mathcal E})\) is a
  basis of \(E\).
\end{enumerate}

For any \(E\)-valued random object \(X^{\mathcal E} \in \mathcal E\),
any basis
\(H^{\mathcal E} = (h_1^{\mathcal E}, \dots, h_p^{\mathcal E})\) of
\(E\), and any finite non-increasing real sequence
\(\Lambda = (\lambda_1 \geq \ldots \geq \lambda_p)\) the following
assertions are equivalent:

\begin{enumerate}
\def\labelenumi{(\roman{enumi})}
\item
  \((H^{\mathcal E}, \Lambda)\) solves
  \(\operatorname{ICS} (X^{\mathcal E}, S_1^{\mathcal E}, S_2^{\mathcal E})\)
  in the space \(E\)
\item
  \((H^{\mathcal F}, \Lambda)\) solves
  \(\operatorname{ICS} (X^{\mathcal F}, S_1^{\mathcal F}, S_2^{\mathcal F})\)
  in the space \(F\).
\end{enumerate}

\end{proposition}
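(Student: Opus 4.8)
The plan is to verify parts (a)–(c) first, since these are largely definitional bookkeeping transported across the isometry $\varphi$, and then to use them to set up the equivalence (i)$\Leftrightarrow$(ii) by a direct computation with inner products. For part (a), I would check that $\mathcal F = \varphi(\mathcal E)$ is affine invariant: an affine map on $F$ pulls back through $\varphi^{-1}$ to an affine map on $E$ (using that $\varphi$ is linear and bijective, hence sends affine maps to affine maps in both directions), so closure of $\mathcal E$ under affine transformations gives closure of $\mathcal F$; integrability is preserved because $\|\varphi(X^{\mathcal E})\|_F = \|X^{\mathcal E}\|_E$ by the isometry property, so $\mathcal F \subseteq L^1(\Omega, F)$. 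For part (b), I would verify the two defining properties of a scatter operator for $S_\ell^{\mathcal F}$: symmetry and positive-definiteness of $\varphi \circ S_\ell^{\mathcal E}[X^{\mathcal E}] \circ \varphi^{-1}$ follow from conjugation by the isometry $\varphi$ (which preserves self-adjointness and positivity since $\varphi^* = \varphi^{-1}$), and affine equivariance follows by tracking how an affine transformation of $X^{\mathcal F}$ corresponds to an affine transformation of $X^{\mathcal E}$ and invoking equivariance of $S_\ell^{\mathcal E}$. Part (c) is immediate: $\varphi$ is a linear isomorphism, so it maps bases to bases.

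The core of the argument is the equivalence. I would start from the characterisation in Definition~\ref{def-ics}: $(H^{\mathcal E}, \Lambda)$ solves $\operatorname{ICS}(X^{\mathcal E}, S_1^{\mathcal E}, S_2^{\mathcal E})$ iff
$\langle S_1^{\mathcal E}[X^{\mathcal E}] h_j^{\mathcal E}, h_{j'}^{\mathcal E}\rangle_E = \delta_{jj'}$ and $\langle S_2^{\mathcal E}[X^{\mathcal E}] h_j^{\mathcal E}, h_{j'}^{\mathcal E}\rangle_E = \delta_{jj'}\lambda_j$ for all $j, j'$. The key identity is that for any $\ell$, any $j, j'$,
\[
\langle S_\ell^{\mathcal F}[X^{\mathcal F}] h_j^{\mathcal F}, h_{j'}^{\mathcal F}\rangle_F
= \langle \varphi \circ S_\ell^{\mathcal E}[X^{\mathcal E}] \circ \varphi^{-1} \varphi(h_j^{\mathcal E}), \varphi(h_{j'}^{\mathcal E})\rangle_F
= \langle \varphi\bigl(S_\ell^{\mathcal E}[X^{\mathcal E}] h_j^{\mathcal E}\bigr), \varphi(h_{j'}^{\mathcal E})\rangle_F
= \langle S_\ell^{\mathcal E}[X^{\mathcal E}] h_j^{\mathcal E}, h_{j'}^{\mathcal E}\rangle_E,
\]
where the last equality is precisely the fact that $\varphi$ is an isometry. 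Since the right-hand side equals $\delta_{jj'}$ (for $\ell=1$) or $\delta_{jj'}\lambda_j$ (for $\ell=2$) exactly when $(H^{\mathcal E},\Lambda)$ solves the ICS problem in $E$, the two systems of equations are term-by-term identical, giving (i)$\Leftrightarrow$(ii).

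I do not expect a genuine obstacle here; the proposition is essentially a statement that the ICS problem is natural with respect to isometries, and the proof is a routine transport-of-structure argument. The one point that deserves care — the closest thing to a subtlety — is checking affine equivariance in part (b): one must make the correspondence between affine transformations on $F$ and on $E$ explicit (namely, an affine map $Y^{\mathcal F} \mapsto A^{\mathcal F} Y^{\mathcal F} + b^{\mathcal F}$ on $F$ corresponds to $Y^{\mathcal E} \mapsto (\varphi^{-1} A^{\mathcal F} \varphi) Y^{\mathcal E} + \varphi^{-1}(b^{\mathcal F})$ on $E$) and then verify that the conjugation formula for $S_\ell^{\mathcal F}$ produces exactly the transformation law required of a scatter operator. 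Everything else is direct substitution, using repeatedly that $\varphi$ is linear, bijective, and inner-product preserving (equivalently $\varphi^{-1} = \varphi^*$), and that $\mathbb E$ commutes with the bounded linear map $\varphi$ so that $\varphi(X^{\mathcal E} - \mathbb E X^{\mathcal E}) = X^{\mathcal F} - \mathbb E X^{\mathcal F}$, which also identifies the invariant coordinates in the two spaces.
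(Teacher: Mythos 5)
Your proposal follows essentially the same route as the paper: transport of structure through the isometry for (a)--(c), followed by the term-by-term identity $\langle S_\ell^{\mathcal F}[X^{\mathcal F}]\varphi(h_j^{\mathcal E}), \varphi(h_{j'}^{\mathcal E})\rangle_F = \langle S_\ell^{\mathcal E}[X^{\mathcal E}]h_j^{\mathcal E}, h_{j'}^{\mathcal E}\rangle_E$ for the equivalence (i)$\Leftrightarrow$(ii). The only point you gloss over in (b) is that a scatter operator must also satisfy invariance under equality in distribution (the first property in Definition~\ref{def-scatter}, distinct from symmetry and non-negativity), which the paper verifies by observing that $\varphi(X)\sim\varphi(Y)$ forces $X\sim Y$ because the Borel $\sigma$-algebra on $E$ is the pullback by $\varphi$ of that on $F$.
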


\subsection{The case of weighted covariance
operators}\label{sec-weighted}

A difficulty in ICS is to find interesting scatter operators that
capture the non-ellipticity of the random object. Usually, for
multivariate data, we use the pair of scatter matrices
\((\operatorname{Cov}, \operatorname{Cov}_4)\). In this section, we
define an important family of scatter operators, namely the weighted
covariance operators, which contains both \(\operatorname{Cov}\) and
\(\operatorname{Cov}_4\). They are explicitly defined by coordinate-free
formulas which allow us to relate ICS problems using weighted covariance
operators between any two Euclidean spaces. We denote by
\(\mathcal{GL} (E)\) the group of linear automorphisms of \(E\) and by
\(A^{1/2}\) the unique non-negative square root of a linear mapping
\(A\).

\begin{definition}[Weighted covariance
operators]\protect\hypertarget{def-covw}{}\label{def-covw}

For any measurable function \(w: \mathbb R^+ \rightarrow \mathbb R\),
let \[
\mathcal E_w = \left\{ X \in L^2 (\Omega, E) \left| \, \operatorname{Cov}[X] \in \mathcal{GL}(E) \\
\text{ and } w \left( \left\| \operatorname{Cov}[X]^{-1/2} (X - \mathbb EX) \right\| \right) \| X - \mathbb EX \| \in L^2 (\Omega, \mathbb R) \right. \right\}.
\] Note that \(\mathcal E_w\) is an affine invariant set of integrable
\(E\)-valued random objects. For \(X \in \mathcal E_w\), we define the
\(w\)-weighted covariance operator \(\operatorname{Cov}_w [X]\) by
\begin{equation}\phantomsection\label{eq-covw}{
\forall (x,y) \in E^2, \langle \operatorname{Cov}_w [X] x, y \rangle =
\mathbb E \left[ w^2 \left( \left\| \operatorname{Cov}[X]^{-1/2} (X - \mathbb EX) \right\| \right) \langle X - \mathbb EX, x \rangle \langle X - \mathbb EX, y \rangle \right].
}\end{equation} When necessary, we will also write
\(\operatorname{Cov}_w^E\) for the \(w\)-weighted covariance operator on
\(E\) to avoid any ambiguity. It is easy to check that weighted
covariance operators are affine equivariant scatter operators in the
sense of Definition~\ref{def-scatter}.

\end{definition}

\begin{example}[]\protect\hypertarget{exm-cov}{}\label{exm-cov}

If \(w=1\), we retrieve \(\operatorname{Cov}\), the usual covariance
operator on \(L^2 (\Omega, E)\).

\end{example}

\begin{example}[]\protect\hypertarget{exm-cov4}{}\label{exm-cov4}

If for \(x \in \mathbb R^+\), \(w(x) = (p+2)^{-1/2} x\), we obtain the
fourth-order moment operator \(\operatorname{Cov}_4\) \autocite[as in][
for the case \(E = \mathbb R^p\)]{nordhausen_usage_2022} on
\(\mathcal E_w = \left\{ X \in L^4 (\Omega, E) \, | \, \operatorname{Cov}[X] \in \mathcal{GL} (E) \right\}\).

\end{example}

The following corollary applies Proposition~\ref{prp-isometry} to the
pair of \(w_\ell\)-weighted covariance operators
\(S_\ell^{\mathcal E} = \operatorname{Cov}_{w_\ell}, \ell \in  \{ 1, 2 \}\),
for which the corresponding \(S_\ell^{\mathcal F}\) are exactly the
\(w_\ell\)-weighted covariance operators on \(F\).

\begin{corollary}[]\protect\hypertarget{cor-isometry-covw}{}\label{cor-isometry-covw}

Let
\((E, \langle \cdot, \cdot \rangle_E) \overset{\varphi}{\rightarrow} (F, \langle \cdot, \cdot \rangle_F)\)
be an isometry between two Euclidean spaces of dimension \(p\) and
\(w_1, w_2 : \mathbb R^+ \rightarrow \mathbb R\) two measurable
functions. For any integrable \(E\)-valued random object
\(X \in \mathcal E_{w_1} \cap \mathcal E_{w_2}\) (with the notations
from Definition~\ref{def-covw}), the equality
\begin{equation}\phantomsection\label{eq-isometry-covw}{\operatorname{Cov}_{w_\ell}^F [\varphi(X)] = \varphi \circ \operatorname{Cov}_{w_\ell}^E [X] \circ \varphi^{-1}}\end{equation}
holds for \(\ell \in  \{ 1, 2 \}\), as well as the equivalence between
the following assertions, for any basis \(H = (h_1, \dots, h_p)\) of
\(E\), and any finite non-increasing real sequence
\(\Lambda = (\lambda_1 \geq \ldots \geq \lambda_p)\):

\begin{enumerate}
\def\labelenumi{(\roman{enumi})}
\item
  \((H, \Lambda)\) solves
  \(\operatorname{ICS} (X, \operatorname{Cov}_{w_1}^E, \operatorname{Cov}_{w_2}^E)\)
  in the space \(E\).
\item
  \((\varphi(H), \Lambda)\) solves
  \(\operatorname{ICS} (\varphi(X), \operatorname{Cov}_{w_1}^F, \operatorname{Cov}_{w_2}^F)\)
  in the space \(F\).
\end{enumerate}

\end{corollary}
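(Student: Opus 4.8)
The plan is to derive Corollary~\ref{cor-isometry-covw} from Proposition~\ref{prp-isometry}: it suffices to check that, for $S_\ell^{\mathcal E} = \operatorname{Cov}_{w_\ell}^E$, the transported scatter operator furnished by Proposition~\ref{prp-isometry}(b), namely $X^{\mathcal F} \mapsto \varphi \circ \operatorname{Cov}_{w_\ell}^E[\varphi^{-1}(X^{\mathcal F})] \circ \varphi^{-1}$, is exactly $\operatorname{Cov}_{w_\ell}^F[\cdot]$, i.e. that Equation~\ref{eq-isometry-covw} holds. First I would record the elementary consequences of $\varphi$ being a linear isometry: it commutes with expectation, so $\varphi(X) - \mathbb E[\varphi(X)] = \varphi(X - \mathbb E X)$, and it preserves inner products, so $\langle \varphi(a), \varphi(b) \rangle_F = \langle a, b \rangle_E$ and $\|\varphi(a)\|_F = \|a\|_E$ for all $a, b \in E$. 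Unfolding Equation~\ref{eq-covw} with $w \equiv 1$ at $x = \varphi(a)$, $y = \varphi(b)$ then gives the $w=1$ case $\operatorname{Cov}^F[\varphi(X)] = \varphi \circ \operatorname{Cov}^E[X] \circ \varphi^{-1}$, with no square root involved; in particular $\operatorname{Cov}^F[\varphi(X)] \in \mathcal{GL}(F)$. Since conjugation $A \mapsto \varphi \circ A \circ \varphi^{-1}$ preserves composition, adjoints and positivity of operators, it commutes with the non-negative square root, whence $\operatorname{Cov}^F[\varphi(X)]^{-1/2} = \varphi \circ \operatorname{Cov}^E[X]^{-1/2} \circ \varphi^{-1}$.

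The crucial point is that the standardised norm inside the weights is then $\varphi$-invariant:
\[
\left\| \operatorname{Cov}^F[\varphi(X)]^{-1/2}\big(\varphi(X) - \mathbb E[\varphi(X)]\big) \right\|_F
= \left\| \varphi\!\left( \operatorname{Cov}^E[X]^{-1/2}(X - \mathbb E X) \right) \right\|_F
= \left\| \operatorname{Cov}^E[X]^{-1/2}(X - \mathbb E X) \right\|_E,
\]
so $w_\ell$ evaluates to the same random variable whether computed in $E$ or in $F$; this also shows that $\varphi(X)$ meets the defining conditions of $\mathcal E_{w_\ell}^F$, so $\operatorname{Cov}_{w_\ell}^F[\varphi(X)]$ is well defined, and running the same argument with $\varphi^{-1}$ gives $\varphi(\mathcal E_{w_1}^E \cap \mathcal E_{w_2}^E) = \mathcal E_{w_1}^F \cap \mathcal E_{w_2}^F$. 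Now unfold Equation~\ref{eq-covw} for $\operatorname{Cov}_{w_\ell}^F[\varphi(X)]$ at an arbitrary pair $x = \varphi(a)$, $y = \varphi(b)$ with $a, b \in E$: the two inner-product factors become $\langle X - \mathbb E X, a \rangle_E$ and $\langle X - \mathbb E X, b \rangle_E$, the weight factor becomes $w_\ell^2$ of the $E$-norm above, and the whole expectation equals $\langle \operatorname{Cov}_{w_\ell}^E[X] a, b \rangle_E = \langle (\varphi \circ \operatorname{Cov}_{w_\ell}^E[X] \circ \varphi^{-1}) x, y \rangle_F$. Letting $x, y$ range over $F$ yields the operator identity~\ref{eq-isometry-covw}.

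Finally I would take $\mathcal E := \mathcal E_{w_1}^E \cap \mathcal E_{w_2}^E$, which is affine invariant as an intersection of affine invariant sets and on which both $\operatorname{Cov}_{w_1}^E$ and $\operatorname{Cov}_{w_2}^E$ are affine equivariant scatter operators, and apply Proposition~\ref{prp-isometry}. Its item (b) produces transported scatter operators that, by~\ref{eq-isometry-covw}, coincide with $\operatorname{Cov}_{w_1}^F$ and $\operatorname{Cov}_{w_2}^F$ on $\mathcal F = \varphi(\mathcal E) = \mathcal E_{w_1}^F \cap \mathcal E_{w_2}^F$; hence assertions (i) and (ii) of the corollary are verbatim assertions (i) and (ii) of Proposition~\ref{prp-isometry} for $X^{\mathcal E} = X$, and the stated equivalence follows. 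I expect the only genuinely non-formal step to be the commutation of conjugation-by-$\varphi$ with the non-negative square root (equivalently, that an isometric conjugation carries the spectral resolution of $\operatorname{Cov}^E[X]$ to that of $\operatorname{Cov}^F[\varphi(X)]$); everything else is bilinearity together with the change of variables $x = \varphi(a)$, $y = \varphi(b)$.
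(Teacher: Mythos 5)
Your proof is correct and follows essentially the same route as the paper: establish Equation~\ref{eq-isometry-covw} by first handling the $w_\ell = 1$ case, deducing $\operatorname{Cov}^F[\varphi(X)]^{-1/2} = \varphi \circ \operatorname{Cov}^E[X]^{-1/2} \circ \varphi^{-1}$ and hence the invariance of the standardised norm inside the weight, then unfolding Definition~\ref{def-covw} and concluding via Proposition~\ref{prp-isometry}. If anything, you are slightly more explicit than the paper on the one non-trivial point (that conjugation by an isometry commutes with the non-negative square root) and on the well-definedness of $\operatorname{Cov}_{w_\ell}^F[\varphi(X)]$, both of which the paper leaves implicit.
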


\subsection{Implementation}\label{sec-icscoord}

In order to implement coordinate-free ICS in any Euclidean space \(E\),
we restrict our attention to the pair
\((\operatorname{Cov}_{w_1}, \operatorname{Cov}_{w_2})\) of weighted
covariance operators defined in Section~\ref{sec-weighted}. Note that we
could also transport other known scatter matrices, such as the Minimum
Covariance Determinant \autocite[defined
in][]{rousseeuw_multivariate_1985}, back to the space \(E\) using
Proposition~\ref{prp-isometry}, but this approach would no longer be
coordinate-free.

We now choose a basis \(B=(b_1, \dots, b_p)\) of \(E\) in order to
represent each element \(x\) of \(E\) by its coordinate vector
\([x]_B = ([x]_{b_1} \dots [x]_{b_p})^\top \in \mathbb R^p\). Then, the
following corollary of Proposition~\ref{prp-isometry} allows one to
relate the coordinate-free approach in \(E\) to three different
multivariate approaches applied to the coordinate vectors in any basis
\(B\) of \(E\), where the Gram matrix
\(G_B = (\langle b_j, b_{j'} \rangle)_{1 \leq j,j' \leq p}\) appears,
accounting for the non-orthonormality of \(B\). Notice that, since the
ICS problem has been defined in Section~\ref{sec-icsproblem} without any
reference to a particular basis, it is obvious that the basis \(B\) has
no influence on ICS.

\begin{corollary}[]\protect\hypertarget{cor-coord}{}\label{cor-coord}

Let \((E, \langle \cdot, \cdot \rangle)\) be a Euclidean space of
dimension \(p\), \(w_1, w_2 : \mathbb R^+ \rightarrow \mathbb R\) two
measurable functions. Let \(B\) be any basis of \(E\),
\(G_B = (\langle b_j, b_{j'} \rangle)_{1 \leq j,j' \leq p}\) its Gram
matrix and \([ \cdot ]_B\) the linear map giving the coordinates in
\(B\). For any \(X \in \mathcal E_{w_1} \cap \mathcal E_{w_2}\) (with
the notations from Definition~\ref{def-covw}), any basis
\({H} = (h_1, \dots, h_p)\) of \(E\), and any finite non-increasing real
sequence \(\Lambda = (\lambda_1 \geq \ldots \geq \lambda_p)\) the
following assertions are equivalent:

\begin{enumerate}
\def\labelenumi{(\arabic{enumi})}
\item
  \((H, \Lambda)\) solves
  \(\operatorname{ICS} (X, \operatorname{Cov}_{w_1}^E, \operatorname{Cov}_{w_2}^E)\)
  in the space \(E\)
\item
  \(({G_B^{1/2}} [H]_B, \Lambda)\) solves
  \(\operatorname{ICS} ({G_B^{1/2}} [X]_B, \operatorname{Cov}_{w_1}, \operatorname{Cov}_{w_2})\)
  in the space \(\mathbb R^p\)
\item
  \(([H]_B, \Lambda)\) solves
  \(\operatorname{ICS} ({G_B} [X]_B, \operatorname{Cov}_{w_1}, \operatorname{Cov}_{w_2})\)
  in the space \(\mathbb R^p\)\{\#eq-third\}
\item
  \(({G_B} [H]_B, \Lambda)\) solves
  \(\operatorname{ICS} ([X]_B, \operatorname{Cov}_{w_1}, \operatorname{Cov}_{w_2})\)
  in the space \(\mathbb R^p\)
\end{enumerate}

\emph{where \([H]_B\) denotes the non-singular \(p \times p\) matrix
representing the basis \(([h_1]_B, \dots, [h_p]_B)\) of
\(\mathbb R^p\).}

\end{corollary}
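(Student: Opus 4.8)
The plan is to exhibit one explicit isometry $E \to \mathbb{R}^p$ that converts the coordinate-free problem of assertion~(1) into the multivariate problem of assertion~(2), and then to observe that assertions~(2), (3) and (4) are literally the same system of matrix equations once the weighted covariance operators are pushed through by affine equivariance.

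First I would set $\varphi : E \to \mathbb{R}^p$, $\varphi(x) = G_B^{1/2}[x]_B$, with $\mathbb{R}^p$ equipped with its canonical inner product. The one genuine verification here is that $\varphi$ is an isometry: expanding $x$ and $y$ in the basis $B$ and using $\langle b_j, b_{j'}\rangle = (G_B)_{jj'}$ gives $\langle x, y\rangle = [x]_B^\top G_B [y]_B = (G_B^{1/2}[x]_B)^\top (G_B^{1/2}[y]_B) = \langle \varphi(x), \varphi(y)\rangle_{\mathbb{R}^p}$; since $G_B^{1/2}$ and $[\cdot]_B$ are linear bijections between $p$-dimensional spaces, $\varphi$ is a linear isometry. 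Applying Corollary~\ref{cor-isometry-covw} to $\varphi$ — the hypothesis $X \in \mathcal{E}_{w_1} \cap \mathcal{E}_{w_2}$ on the $E$-side being exactly what is needed, and Proposition~\ref{prp-isometry}(a) guaranteeing that $\varphi(X)$ lands in the corresponding sets on the $\mathbb{R}^p$-side — yields the equivalence of (1) with: $(\varphi(H), \Lambda)$ solves $\operatorname{ICS}(\varphi(X), \operatorname{Cov}_{w_1}, \operatorname{Cov}_{w_2})$ in $\mathbb{R}^p$. Since $\varphi(X) = G_B^{1/2}[X]_B$ and the matrix with columns $\varphi(h_1), \dots, \varphi(h_p)$ is $G_B^{1/2}[H]_B$, this is precisely assertion~(2), so (1) $\Leftrightarrow$ (2).

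Next I would unfold the ICS equations in $\mathbb{R}^p$. Writing $Y = [X]_B$, $D_\Lambda = \operatorname{diag}(\lambda_1, \dots, \lambda_p)$, and recalling from Definition~\ref{def-ics} that a non-singular matrix $M$ with $(M, \Lambda)$ solving $\operatorname{ICS}(Z, \operatorname{Cov}_{w_1}, \operatorname{Cov}_{w_2})$ means $M^\top \operatorname{Cov}_{w_1}[Z] M = I_p$ and $M^\top \operatorname{Cov}_{w_2}[Z] M = D_\Lambda$, I would substitute the three pairs $(M, Z)$ appearing in (2), (3), (4), namely $(G_B^{1/2}[H]_B,\, G_B^{1/2}Y)$, $([H]_B,\, G_B Y)$ and $(G_B[H]_B,\, Y)$. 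Using the affine equivariance of weighted covariance operators (Definition~\ref{def-covw}), $\operatorname{Cov}_{w_\ell}[AY] = A\,\operatorname{Cov}_{w_\ell}[Y]\,A^\top$ for $A \in \mathcal{GL}_p(\mathbb{R})$, together with the symmetry of $G_B$ and hence of $G_B^{1/2}$, each of the three substitutions collapses the left-hand sides to the common expression $[H]_B^\top G_B\, \operatorname{Cov}_{w_\ell}[Y]\, G_B [H]_B$ for $\ell \in \{1,2\}$. Therefore the three systems are identical and (2) $\Leftrightarrow$ (3) $\Leftrightarrow$ (4), which completes the argument.

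There is no real obstacle, only a few administrative checks: that $\varphi$ is a bijective isometry so Corollary~\ref{cor-isometry-covw} genuinely applies; that the defining conditions of $\mathcal{E}_{w_1}\cap\mathcal{E}_{w_2}$ (invertibility of the covariance and the $L^2$ weight condition) transfer along $\varphi$ and along the internal linear reparametrisations of $\mathbb{R}^p$, which follows from affine invariance of these sets and from the affine invariance of the Mahalanobis-type norm entering the weight; and careful bookkeeping of which power of $G_B$ multiplies $[H]_B$ in each formulation — the symmetry of $G_B$ being precisely what makes all the transposes vanish.
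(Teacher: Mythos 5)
Your proposal is correct and takes essentially the same route as the paper: the authors also apply Corollary~\ref{cor-isometry-covw} to the isometry $\varphi_B : x \mapsto G_B^{1/2}[x]_B$ to get $(1)\Leftrightarrow(2)$, and then deduce $(2)\Leftrightarrow(3)\Leftrightarrow(4)$ from the affine equivariance of $\operatorname{Cov}_{w_\ell}$, which they phrase as a chain of equalities of the bilinear forms $\langle \operatorname{Cov}_{w_\ell}^E[X]x, y\rangle_E$ rather than as your equivalent matrix identity $[H]_B^\top G_B \operatorname{Cov}_{w_\ell}[Y] G_B [H]_B$. The administrative checks you flag (isometry, transfer of $\mathcal E_{w_\ell}$ membership, affine invariance of the Mahalanobis weight) are exactly the ones the paper delegates to Proposition~\ref{prp-isometry} and Corollary~\ref{cor-isometry-covw}.
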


In practice, we prefer Assertion (3) (transforming the data by the Gram
matrix of the basis) because it is the only one that does not require
inverting the Gram matrix in order to recover the eigenobjects. Then,
the problem is reduced to multivariate ICS, already implemented in the R
package \texttt{ICS} using the QR decomposition
\autocite{archimbaud_numerical_2023}. This QR approach enhances
stability compared to methods based on a joint diagonalisation of two
scatter matrices, which can be numerically unstable in some
ill-conditioned situations.

After we obtain the ICS eigenelements, we can use them to reconstruct
the original random object, in order to interpret the contribution of
each invariant component. Proposition~\ref{prp-reconstruction} in the
Appendix generalises the multivariate reconstruction formula to complex
data. In order to implement this reconstruction, we need the coordinates
of the elements of the dual ICS eigenbasis. Identifying the basis
\([H]_B\) with the matrix whose columns are its vectors, the dual basis
\([H^*]_B\) is the matrix
\[[H^*]_B = \left( [H]_B^\top G_B \right)^{-1}.\]

\begin{remark}[Empirical ICS and estimation]
In order to work with samples of complex random objects, we can study
the particular case of a finite \(E\)-valued random object \(X\) where
we have a fixed sample \(D_n=(x_1, \dots, x_n)\) and we assume that
\(X\) follows the empirical probability distribution \(P_{D_n}\) of
\((x_1, \dots, x_n)\). In that case, the expressions (in
Definition~\ref{def-covw}) for instance) of the form \(\mathbb E f(X)\)
for any function \(f\) are discrete and equal to
\(\frac1n \sum_{i=1}^n f(x_i)\).

Now, let us assume that we observe an i.i.d.~sample
\(D_n = (X_1, \dots, X_n)\) following the distribution of an unknown
\(E\)-valued random object \(X_0\). We can estimate solutions of the
problem \(\operatorname{ICS} (X_0, S_1, S_2)\) from
Definition~\ref{def-ics} by working conditionally on the data
\((X_1, \dots, X_n)\) and taking the particular case where \(X\) follows
the empirical probability distribution \(P_{D_n}\). This defines
estimates of the \(\operatorname{ICS} (X_0, S_1, S_2)\) eigenobjects as
solutions of an ICS problem involving empirical scatter operators. Since
the population version of ICS for a complex random object \(X \in E\) is
more concise than its sample counterpart for
\(D_n = (X_1, \dots, X_n)\), we shall use the notations of the former in
the next sections.
\end{remark}

\subsection{ICS for compositional data}\label{sec-icscoda}

The specific case of coordinate-free ICS for compositional data is
equivalent to the approach of \textcite{ruiz-gazen_detecting_2023}. To
see this, let us consider the simplex
\(E = (\mathcal S^{p+1}, \oplus, \odot, \langle \cdot, \cdot \rangle_{\mathcal S^{p+1}})\)
of dimension \(p\) in \(\mathbb R^{p+1}\) with the Aitchison structure
\autocite{pawlowskyglahn_modeling_2015}. The results from 5.1
(resp.~5.2) in \autocite{ruiz-gazen_detecting_2023} can be recovered by
applying Corollary~\ref{cor-isometry-covw} to any isometric log-ratio
transformation \autocite[see][ for a
definition]{pawlowskyglahn_modeling_2015} (resp.~the centred log-ratio
transformation).

Corollary~\ref{cor-coord} gives a new characterisation of the problem
\(\operatorname{ICS} (X, \operatorname{Cov}_{w_1}, \operatorname{Cov}_{w_2})\)
using additive log-ratio transformations. For a given index
\(1 \leq j \leq p\), let \(B_j = (b_1, \dots, b_p)\) denote the basis of
\(\mathcal S^{p+1}\) corresponding to the \(\operatorname{alr}_j\)
transformation, i.e.~obtained by taking the canonical basis of
\(\mathbb R^{p+1}\), removing the \(j\)-th vector and applying the
exponential. In that case, it is easy to compute the \(p \times p\) Gram
matrix of \(B_j\):
\[G_{B_j} = I_p - \frac1{p+1} \mathbf 1_p \mathbf 1_p ^\top = \begin{pmatrix}
        1 - \frac1{p+1} & -\frac1{p+1} & \dots & -\frac1{p+1} \\
        -\frac1{p+1} & \ddots & \ddots & \vdots \\
        \vdots & \ddots & \ddots & - \frac1{p+1} \\
        -\frac1{p+1} & \dots & -\frac1{p+1} & 1 - \frac1{p+1}
    \end{pmatrix}.\] Then, we get the equivalence between the following
two ICS problems:

\begin{enumerate}
\def\labelenumi{\arabic{enumi}.}
\item
  \((H, \Lambda)\) solves
  \(\operatorname{ICS} (X, \operatorname{Cov}_{w_1}, \operatorname{Cov}_{w_2})\)
  in the space \(\mathcal S^{p+1}\)
\item
  \((\operatorname{alr}_j (H), \Lambda)\) solves
  \(\operatorname{ICS} (\operatorname{clr} (X)^{(j)}, \operatorname{Cov}_{w_1}, \operatorname{Cov}_{w_2})\)
  in the space \(\mathbb R^p\)
\end{enumerate}

where
\(\operatorname{clr} (x)^{(j)} = G_{B_j} \operatorname{alr}_j (x)\) is
the centred log-ratio transform of \(x \in \mathcal S^{p+1}\) from which
the \(j\)-th coordinate has been removed. This suggests a new and
fastest implementation of invariant coordinate selection for
compositional data, in an unconstrained space and only requiring the
choice of an index \(j\) instead of a full contrast matrix.

\subsection{ICS for functional data}\label{sec-icsfun}

The difficulty of functional data (in the broader sense, encompassing
density data) is twofold: first, functions are usually analysed within
the infinite-dimensional Hilbert space \(L^2 (a,b)\), second, a random
function is almost never observed for every argument, but rather on a
discrete grid. This grid can be regular or irregular, deterministic or
random, dense (the grid spacing goes to zero) or sparse. We describe a
general framework for adapting coordinate-free ICS to functional data,
solving both difficulties at the same time by smoothing the observed
values into a random function \(u\) that belongs to a Euclidean subspace
\(E\) of \(L^2 (a,b)\).

\subsubsection{Choosing an approximating Euclidean
subspace}\label{choosing-an-approximating-euclidean-subspace}

We usually choose polynomial spaces, spline spaces with given knots and
order, or spaces spanned by a truncated Hilbert basis of \(L^2 (a,b)\).
In practice, this choice also depends on the preprocessing method that
we have in mind to smooth discrete observations into functions.

\subsubsection{Preprocessing the observations into the approximating
space}\label{preprocessing-the-observations-into-the-approximating-space}

Considering a dense, deterministic grid \((t_1, \dots, t_N)\), we need
to reconstruct an \(E\)-valued random function \(u\) from its noisy
observed values
\((u(t_1) + \varepsilon_1, \dots, u(t_N) + \varepsilon_N)\). There are
many well-documented approximation techniques to carry out this
preprocessing step, such as interpolation, spline smoothing, or Fourier
methods \autocite[for a detailed presentation,
see][]{eubank_nonparametric_2014}.

\subsubsection{Solving ICS in the approximating
space}\label{solving-ics-in-the-approximating-space}

Once we have obtained an \(E\)-valued random function \(u\), we can
apply the method described in Section~\ref{sec-icscoord} to reduce
\(\operatorname{ICS} (u, \operatorname{Cov}_{w_1}, \operatorname{Cov}_{w_2})\)
to a multivariate problem on the coordinates in a basis of \(E\). In
particular, for an orthonormal basis \(B\) of \(E\) (such as a Fourier
basis or a Hermite polynomial basis), Corollary~\ref{cor-coord} gives
the equivalence between the following two assertions:

\begin{enumerate}
\def\labelenumi{\arabic{enumi}.}
\item
  \((H, \Lambda)\) solves
  \(\operatorname{ICS} (u, \operatorname{Cov}_{w_1}, \operatorname{Cov}_{w_2})\)
  in the space \(E\)
\item
  \(([H]_B, \Lambda)\) solves
  \(\operatorname{ICS} ([u]_B, \operatorname{Cov}_{w_1}, \operatorname{Cov}_{w_2})\)
  in the space \(\mathbb R^p\).
\end{enumerate}

If \(E\) is a finite-dimensional spline space, we usually work with the
coordinates of \(u\) in a B-spline basis of \(E\), but then we should
take into account its Gram matrix, as in Corollary~\ref{cor-coord}.\\
ICS has previously been defined for multivariate functional data by
\textcite{archimbaud_ics_2022}, who define a pointwise method and a
global method. Unlike the pointwise approach, which is specific to
multivariate functional data, the global method can also be applied to
univariate functional data in \(L^2 (a,b)\), as it corresponds to
applying multivariate ICS to truncated coordinate vectors in a Hilbert
basis of \(L^2 (a,b)\). The above framework retrieves the global method
in \autocite{archimbaud_ics_2022} as a particular case when taking a
Hilbert basis \(B\) of \(L^2 (a,b)\) and solving coordinate-free ICS in
the space \(E\) spanned by the \(p\) first elements of \(B\).

\subsection{ICS for distributional data}\label{sec-icsdens}

A first option to adapt ICS to density data would be to consider it as
constrained functional data and directly follow the approach of
Section~\ref{sec-icsfun}. However, distributional data does not reduce
to density data \autocite[such as absorbance spectra studied
in][]{ferraty_functional_2002}, as it can also be histogram data or
sample data (such as the dataset of temperature samples analysed in
Section~\ref{sec-appliout}). Moreover, the framework of Bayes Hilbert
spaces, described by \autocite{van_den_boogaart_bayes_2014} and recalled
in the Appendix, is specifically adapted to the study of distributional
data. Taking into account the infinite-dimensional nature of
distributional data, we follow a similar framework as the one of Section
Section~\ref{sec-icsfun}, restricting our attention to
finite-dimensional subspaces \(E\) of the Bayes space \(B^2(a,b)\) with
the Lebesgue measure as reference.

\subsubsection{Choosing an approximating Euclidean
space}\label{choosing-an-approximating-euclidean-space}

Following smoothing splines methods, adapted to Bayes spaces by
\textcite{machalova_preprocessing_2016} and recalled in the Appendix, we
choose to work in the space \(E = \mathcal C^{\Delta \gamma}_d (a,b)\)
of compositional splines on \((a,b)\) of order \(d+1\) with knots
\(\Delta \gamma = (\gamma_1, \dots, \gamma_k)\). Note that the centred
log-ratio transform \(\operatorname{clr}\) is an isometry between \(E\)
and the space \(F = \mathcal Z^{\Delta \gamma}_d (a,b)\) of
zero-integral splines on \((a,b)\) of order \(d+1\) (degree less than or
equal to \(d\)) and with knots
\(\Delta \gamma = (\gamma_1, \dots, \gamma_k)\). They both have
dimension \(p = k + d\).

\subsubsection{Preprocessing the observations into the approximating
space}\label{sec-preproc}

We consider the special cases of histogram data and of sample data. In
the former, we follow \autocite{machalova_compositional_2021} to smooth
each histogram into a compositional spline in \(E\). In the latter, we
assume that a random density is observed through a finite random sample
\((X_1, \dots, X_N)\) drawn from it. The preprocessing step consists in
estimating the density from the observed sample. To perform the
estimation, we need a nonparametric estimation procedure that yields a
compositional spline belonging to \(E\). That is why we opt for maximum
penalised likelihood (MPL) density estimation, introduced by
\textcite{silverman_estimation_1982}. The principle of MPL is to
maximise a penalised version of the log-likelihood over an
infinite-dimensional space of densities without parametric assumptions.
The penalty is the product of a smoothing parameter \(\lambda\) by the
integral over the interval of interest of the square of the \(m\)-th
derivative of the log density. Therefore, the objective functional is a
functional of the log density. Due to the infinite dimension of the
ambient space, the likelihood term alone is unbounded above, hence the
penalty term is necessary. In our case of densities on an interval
\((a,b)\), we select the value \(m=3\) so that \autocite[according to][
Theorem 2.1]{silverman_estimation_1982} when the smoothing parameter
tends to infinity, the estimated density converges to the parametric
maximum likelihood estimate in the exponential family of densities whose
logarithm is a polynomial of degree less than or equal to 2. This family
comprises the uniform density, exponential and Gaussian densities
truncated to \((a,b)\). In order to use MPL in \(B^2(a,b)\), we need to
add extra smoothness conditions and therefore we restrict attention to
the densities of \(B^2(a,b)\) whose log belongs to the Sobolev space of
order \(m\) on \((a,b)\), thus ensuring the existence of the penalty
term. Note that compositional splines verify these conditions. With
Theorem 4.1 in \autocite{silverman_estimation_1982}, the optimisation
problem has at least a solution. Since the estimate \(f\) of the density
of \((X_1, \dots, X_N)\) needs to belong to the chosen
finite-dimensional subspace \(E = \mathcal C^{\Delta \gamma}_d (a,b)\),
we restrict MPL to \(E\), using the R function \texttt{fda::density.fd},
designed by \textcite{ramsay_fda_2024}. This function returns the
coordinates of \(\log (f)\) in the B-spline basis with knots
\(\Delta \gamma\) and order \(d+1\), that we project onto
\(\mathcal Z^{\Delta \gamma}_d (a,b)\) and to which we apply
\(\operatorname{clr}^{-1}\) so that we obtain an element of
\(\mathcal C^{\Delta \gamma}_d (a,b)\).

\subsubsection{Solving ICS in the approximating
space}\label{solving-ics-in-the-approximating-space-1}

We have now obtained an \(E\)-valued random compositional spline \(f\).
In order to work with two weighted covariance operators
\(\operatorname{Cov}_{w_1}\) and \(\operatorname{Cov}_{w_2}\), where
\(w_1, w_2: \mathbb R^+ \rightarrow \mathbb R\) are two measurable
functions, we assume that
\(f \in \mathcal E_{w_1} \cap \mathcal E_{w_2}\), using the notations of
Definition~\ref{def-covw}. Now, we refer to Section~\ref{sec-icscoord}
to reduce the problem
\(\operatorname{ICS} (f, \operatorname{Cov}_{w_1}, \operatorname{Cov}_{w_2})\)
to a multivariate ICS problem on the coordinates of \(f\) in the
CB-spline basis of \(\mathcal C^{\Delta \gamma}_d (a,b)\)
\autocite[defined in][]{machalova_compositional_2021}, transformed by
the Gram matrix of said CB-spline basis. Note that
Corollary~\ref{cor-isometry-covw} applied to the centred log-ratio
isometry between \(\mathcal C^{\Delta \gamma}_d (a,b)\) and
\(\mathcal Z^{\Delta \gamma}_d (a,b)\) gives the equivalence between:

\begin{enumerate}
\def\labelenumi{\arabic{enumi}.}
\item
  \((H, \Lambda)\) solves
  \(\operatorname{ICS} (f, \operatorname{Cov}_{w_1}, \operatorname{Cov}_{w_2})\)
  in the space \(E = \mathcal C^{\Delta \gamma}_d (a,b)\)
\item
  \((\operatorname{clr} (H), \Lambda)\) solves
  \(\operatorname{ICS} (\operatorname{clr} (f), \operatorname{Cov}_{w_1}, \operatorname{Cov}_{w_2})\)
  in the space \(F = \mathcal Z^{\Delta \gamma}_d (a,b)\).
\end{enumerate}

Then, it is completely equivalent, and useful for implementation, to
work with the coordinates of \(\operatorname{clr} (f)\) in the ZB-spline
basis of \(\mathcal Z^{\Delta \gamma}_d (a,b)\).

\section{Outlier detection for complex data using
ICS}\label{sec-icscomplexout}

\subsection{Implementation of ICS on complex data for outlier
detection}\label{sec-icsout}

We propose using ICS to detect outliers in complex data, specifically in
scenarios with a small proportion of outliers (typically 1 to 2\%). For
this, we follow the three-step procedure defined by
\textcite{archimbaud_ics_2018}, modifying the first step based on the
implementation of coordinate-free ICS in Section~\ref{sec-icscoord}.

\subsubsection{Computing the invariant
coordinates}\label{computing-the-invariant-coordinates}

For the scatter operators, we follow the recommendation of
\textcite{archimbaud_ics_2018} who compare several pairs of more or less
robust scatter estimators in the context of a small proportion of
outliers, and conclude that
\((\operatorname{Cov}, \operatorname{Cov}_4)\) is the best choice. Thus,
we use the empirical scatter pair
\((\operatorname{Cov}, \operatorname{Cov}_4)\) (see
Example~\ref{exm-cov} and Example~\ref{exm-cov4}), and compute the
eigenvalues \(\lambda_1 \geq \ldots \geq \lambda_p\), and the invariant
coordinates \(z_{ji}, 1 \leq j \leq p\), for each observation
\(X_i, 1 \leq i \leq n\). As outlined in Section~\ref{sec-icscoord}, for
a given sample of random complex objects \(D_n= \{X_1, \dots, X_n\}\) in
a Euclidean space \(E\), solving the empirical version of ICS is
equivalent to solving an ICS problem in a multivariate framework
\autocite[see][]{tyler_invariant_2009} with the coordinates of the
objects in a basis \(B\) of \(E\). In order to choose a basis, we follow
the specific recommendations for each type of data from
Section~\ref{sec-icscoda} and Section~\ref{sec-icsdens}.

\subsubsection{Selecting the invariant
components}\label{selecting-the-invariant-components}

The second step of the outlier detection procedure based on ICS is the
selection of the \(\kappa<p\) relevant invariant components and the
computation of the ICS distances. For each of the \(n\) observations,
the ICS distance is equal to the Euclidean norm of the reconstructed
data using the \(\kappa\) selected invariant components. In the case of
a small proportion of outliers and for the scatter pair
\((\operatorname{Cov}, \operatorname{Cov}_4)\), the invariant components
of interest are associated with the largest eigenvalues and the squared
ICS distances are equal to \(\displaystyle \sum_{j=1}^\kappa z_{ji}^2\).
As noted by \textcite{archimbaud_ics_2018}, there exist several methods
for the selection of the number of invariant components. One approach is
to examine the scree plot, as in PCA. This method, recommended by
\textcite{archimbaud_ics_2018}, is not automatic. Alternative automatic
selection methods apply univariate normality tests to each component,
starting with the first one, and using some Bonferroni correction
\autocite[for further details see page 13 of][]{archimbaud_ics_2018}. In
the present paper, we use the scree plot approach when there is no need
of an automatic method, and we use the D'Agostino normality test for
automatic selection. The level for the first test (before Bonferroni
correction) is 5\%. Dimension reduction involves retaining only the
first \(\kappa\) components of ICS instead of the original \(p\)
variables. Note that when all the invariant components are retained, the
ICS distance is equal to the Mahalanobis distance.

\subsubsection{Choosing a cut-off}\label{choosing-a-cut-off}

The computation of ICS distances allows to rank the observations in
decreasing order, with those having the largest distances potentially
being outliers. However, in order to identify the outlying densities, we
need to define a cut-off, and this constitutes the last step of the
procedure. Following \textcite{archimbaud_ics_2018}, we derive cut-offs
based on Monte Carlo simulations from the standard Gaussian
distribution. For a given sample size and number of variables, we
generate 10,000 standard Gaussian samples and compute the empirical
quantile of order 97.5\% of the ICS-distances using the three steps
previously described. An observation with an ICS distance larger than
this quantile is flagged as an outlier.\\
The procedure described above has been illustrated in several examples
\autocite[see][]{archimbaud_ics_2018}, and is implemented in the R
package \texttt{ICSOutlier}
\autocite[see][]{nordhausen_icsoutlier_2023}. However, in the context of
densities, the impact of preprocessing parameters (see
Section~\ref{sec-icsdens}) on the ICSOutlier procedure emerges as a
crucial question that needs to be examined.

\subsection{Influence of the preprocessing parameters for the density
data application}\label{sec-toyex}

As a toy example, consider the densities of the maximum daily
temperatures for the 26 provinces of the two regions Red River Delta and
Northern Midlands and Mountains in Northern Vietnam between 2013 and
2016. We augment this data set made of 104 densities by adding the
provinces AN GIANG and BAC LIEU from Southern Vietnam in the same time
period. The total number of observations is thus 112. Details on the
original data and their source are provided in
Section~\ref{sec-datades}.

\phantomsection\label{cell-fig-vnt_north_south_map}
\begin{figure}[H]

\centering{

\pandocbounded{\includegraphics[keepaspectratio]{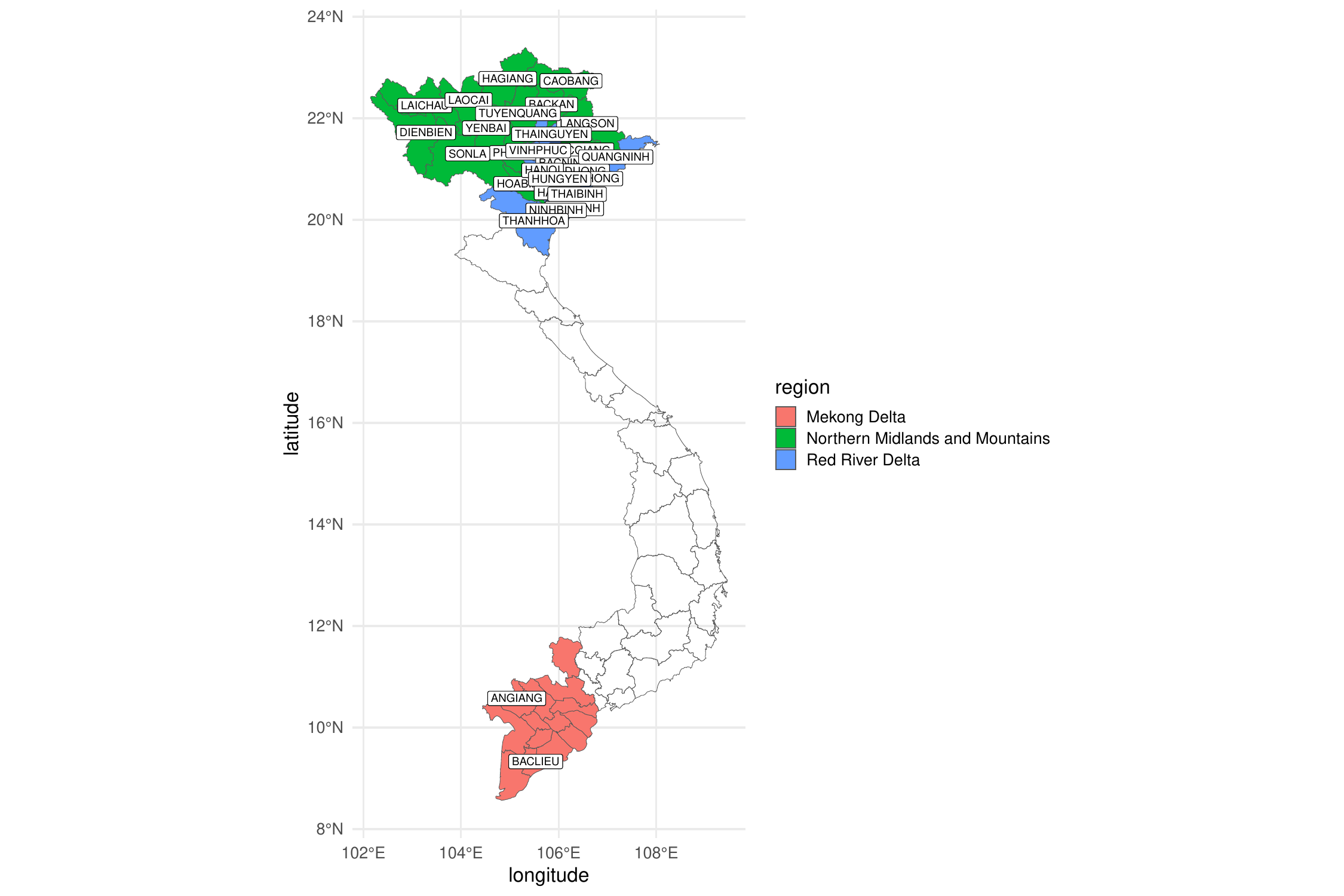}}

}

\caption{\label{fig-vnt_north_south_map}Map of Vietnam showing the 63
provinces, with the three regions under study colour-coded. The 28
provinces included in the toy example are labelled.}

\end{figure}%

\begin{figure}

\begin{minipage}{0.50\linewidth}

\centering{

\pandocbounded{\includegraphics[keepaspectratio]{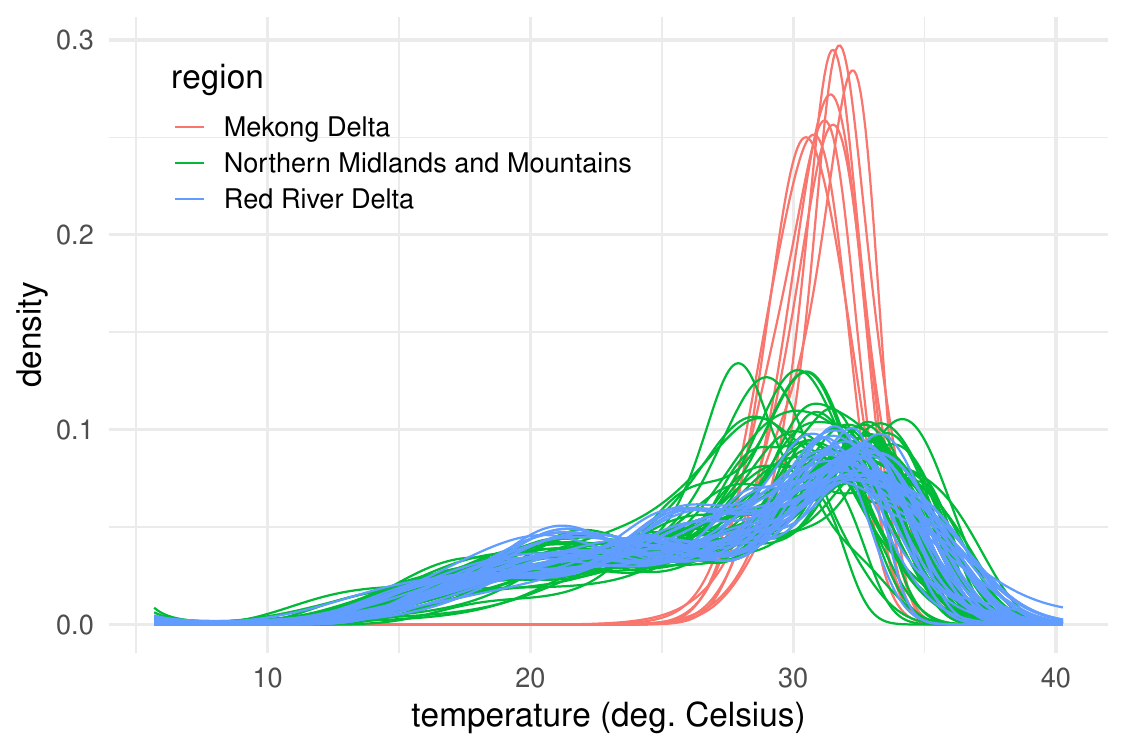}}

}

\subcaption{\label{fig-vnt_north_south_densclr-1}}

\end{minipage}%
\begin{minipage}{0.50\linewidth}

\centering{

\pandocbounded{\includegraphics[keepaspectratio]{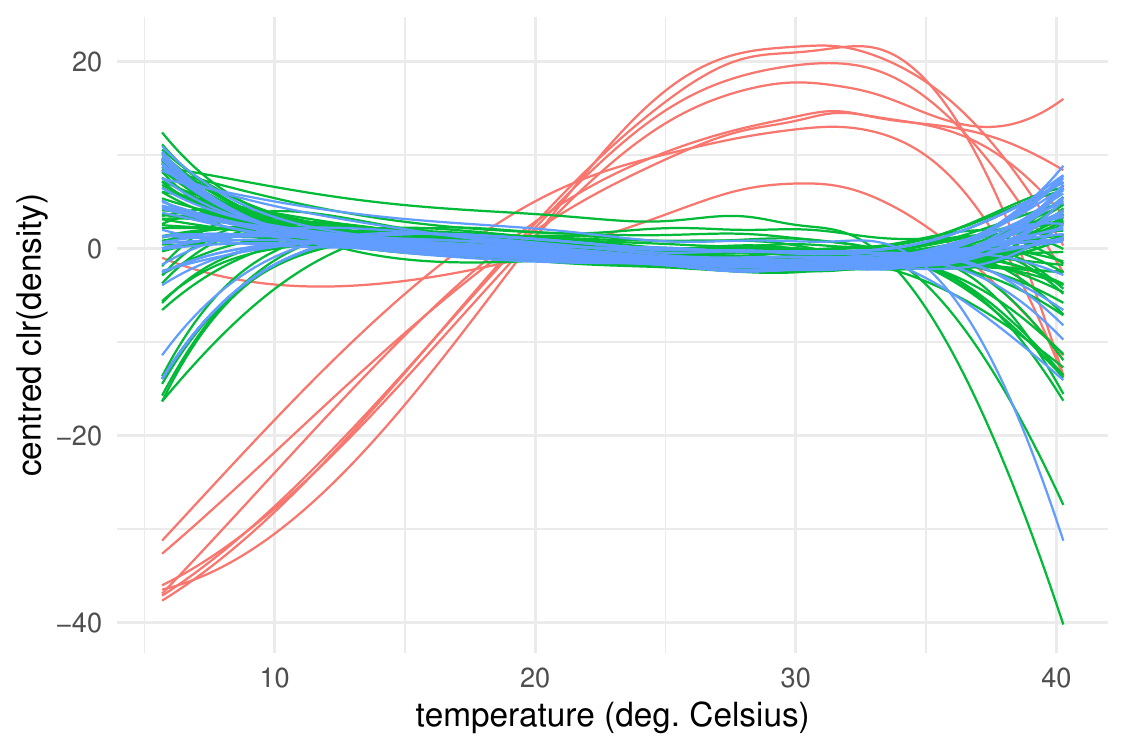}}

}

\subcaption{\label{fig-vnt_north_south_densclr-2}}

\end{minipage}%

\caption{\label{fig-vnt_north_south_densclr}Plots of the 112 densities
(left panel) and clr densities (right panel), colour-coded by region for
the toy example.}

\end{figure}%

Figure~\ref{fig-vnt_north_south_map} displays a map of Vietnam with the
contours of all provinces and coloured according to their administrative
region, allowing the reader to locate the 26 provinces in the North and
the two in the South. As shown on the left panel of
Figure~\ref{fig-vnt_north_south_densclr}, the eight densities of the two
provinces from the South for the four years exhibit a very different
shape (in red) compared to the northern provinces (in blue and green),
with much more concentrated maximum temperatures. These two provinces
should be detected as outliers when applying the ICSOutlier methodology.
However, the results may vary depending on the choice of preprocessing
parameters (see Section~\ref{sec-preproc}). Our goal is to analyse how
the detected outliers vary depending on the preprocessing when using the
maximum penalised likelihood method with splines of degree less than or
equal to \(d=4\). Specifically, we study the influence on the results of
ICSOutlier of the smoothing parameter \(\lambda\), the number of inside
knots \(k\), and the location of the knots defining the spline basis.

The number \(\kappa\) of selected invariant components is fixed at four
in all experiments to facilitate interpretation. This value has been
chosen after viewing the scree plots of the ICS eigenvalues following
the recommendations in Section~\ref{sec-icsout}. For each of the
experimental scenarios detailed below, we compute the squared ICS
distances of the 112 observations as defined in
Section~\ref{sec-icsout}, using \(\kappa=4\). Observations are
classified as outliers when their squared ICS distance exceeds the
threshold defined in Section~\ref{sec-icsout}, using a level of
\(2.5\%\). For each experiment, we plot on
Figure~\ref{fig-vnt_north_south_grid} the indices of the observations
from 1 to 112 on the \(y\)-axis, marking outlying observations with dark
squares. The eight densities from Southern Vietnam are in red and
correspond to indices 1 to 8. We consider the following scenarios:

\begin{itemize}
\tightlist
\item
  the knots are either located at the quantiles of the temperature
  values (top panel on Figure~\ref{fig-vnt_north_south_grid}) or equally
  spaced (bottom panel on Figure~\ref{fig-vnt_north_south_grid}),
\item
  from the left to the right of Figure~\ref{fig-vnt_north_south_grid},
  the number of knots varies from 0 to 14 by increments of 2, and then
  takes the values 25 and 35 (overall 10 different values). Note that
  when increasing the number of knots beyond 35, the code returns more
  and more errors due to multicollinearity issues and the results are
  not reported.
\item
  the base-10 logarithm of the parameter \(\lambda\) varies from -8 to 8
  with an increment of 1 on the \(x\)-axis of each plot.
\end{itemize}

Altogether we have \(2\times 10\times 17=340\) scenarios.
Figure~\ref{fig-vnt_north_south_grid_summary} is a bar plot showing the
observations indices on the \(x\)-axis and the frequency of outlier
detection across scenarios on the \(y\)-axis color-coded by region. The
eight densities from the two southern provinces (AN GIANG and BAC LIEU)
across the four years are most frequently detected as outliers, along
with the province of LAI CHAU (indices 33 to 36), which is located in a
mountainous region in northwest of Vietnam. On the original data, we can
see that the LAI CHAU province corresponds to densities with low values
for high maximum temperatures (above 35°C) coupled with relatively high
density values for maximum temperatures below 35°C. A few other
observations are detected several times as outliers, but less
frequently: indices 53 (TUYEN QUANG in 2013), 96 (QUANG NINH in 2016),
and 107 (THANH HOA in 2015).

Looking at Figure~\ref{fig-vnt_north_south_grid}, we examine the impact
of the preprocessing parameters on the detection of outlying
observations. First, note that the ICS algorithm returns an error when
the \(\lambda\) parameter is large (shown as white bands in some plots).
This is due to a multicollinearity problem. Even though the QR version
of the ICS algorithm is quite stable, it may still encounter problems
when multicollinearity is severe. Indeed, when \(\lambda\) is large, the
estimated densities converge to densities whose logarithm is a
polynomial of degree less than or equal to 2 (see details in
Section~\ref{sec-preproc}), and belongs to a 3-dimensional affine
subspace of the Bayes space, potentially with a dimension smaller than
that of the approximating spline space. If we compare the top and the
bottom plots, we do not observe large differences in the outlying
pattern, except for a few observations rarely detected as outliers.
Thus, the knot location has a rather small impact on the ICS results for
this data set. Regarding the impact of the \(\lambda\) parameter, the
outlier pattern remains relatively stable when the number of knots is
small (less than or equal to 6), especially when looking at the
densities from the south of Vietnam in red. For a large number of knots,
the observations detected as outliers vary with \(\lambda\). The number
of knots has more impact than their location or the \(\lambda\)
parameter. When the number of knots is smaller than or equal to 6
(corresponding to \(p=10\) variables), the plots are very similar.
However, as \(p\) increases, some observations from Southern Vietnam are
not detected for all \(\lambda\) values, while another density (QUANG
NINH in 2016) is detected for large \(\lambda\) values with equally
spaced knots, and to a lesser extent for knots at temperature quantiles.
In \autocite{archimbaud_ics_2022}, ICS is applied to multivariate
functional data with B-splines preprocessing. Based on their empirical
experience, the authors recommend using a dimension \(p\) (in their
case, the number of functional components times the number of B-splines
coefficients) no larger than the number of observations divided by 10.
Typically in multivariate analysis, the rule of thumb is that the
dimension should not exceed the number of observations divided by 5. For
functional or distributional data, it appears that even more
observations per variable are needed. The reason for this is not
entirely clear, but in the case of ICS, we can suspect that the presence
of multicollinearity, even approximate, degrades the results. By
increasing the number of knots, we precisely increase the
multicollinearity problem, especially for large values of \(\lambda\).

\phantomsection\label{cell-fig-vnt_north_south_grid}
\begin{figure}[H]

\centering{

\pandocbounded{\includegraphics[keepaspectratio]{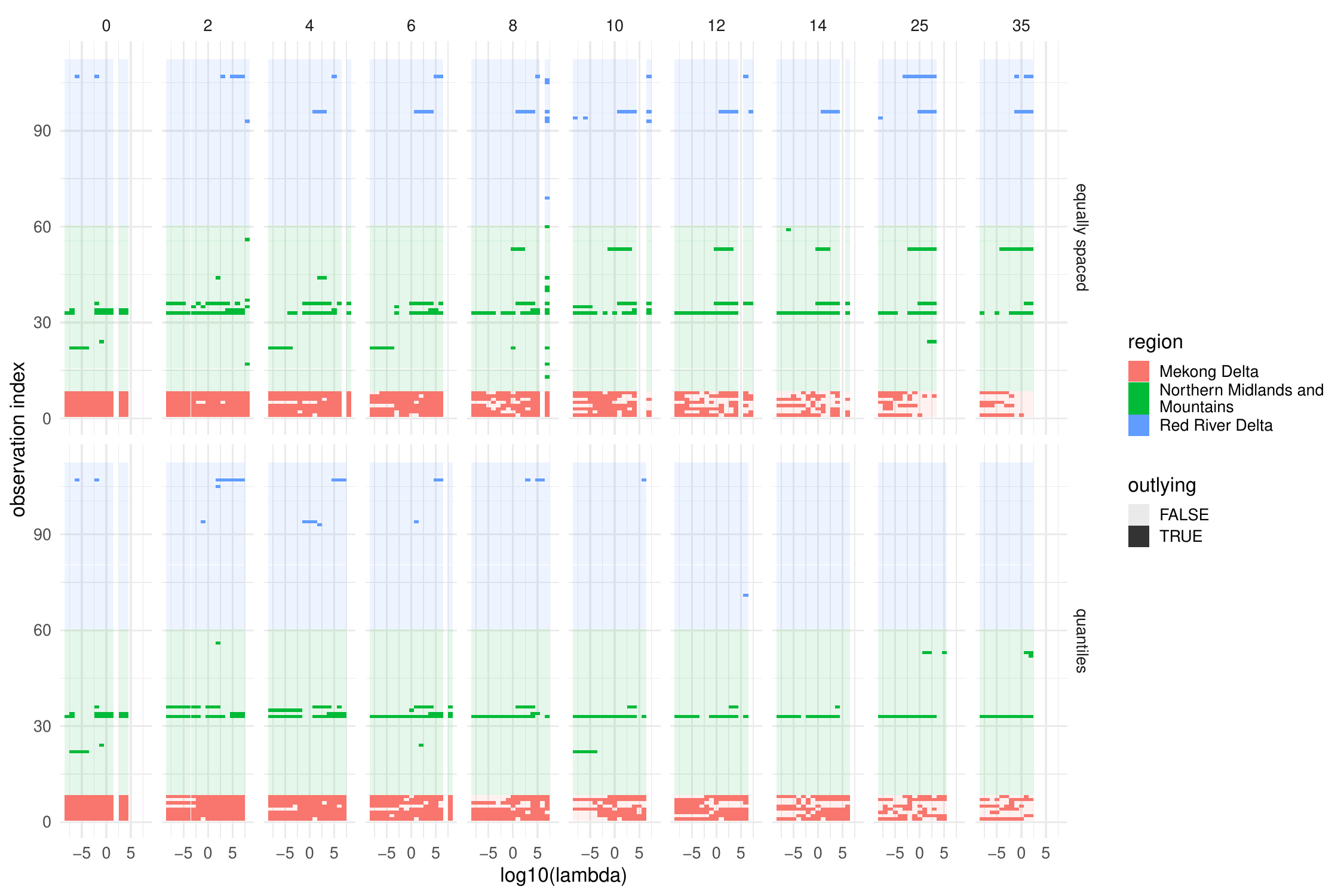}}

}

\caption{\label{fig-vnt_north_south_grid}Outlier detection by ICS across
smoothing parameters for the Vietnam toy example. \emph{Top:} knots at
quantiles; \emph{Bottom:} equally spaced knots. \emph{\(y\)-axis:}
observation indices; \emph{\(x\)-axis:} \(\lambda\) parameter. Columns
correspond to knot numbers (0-35). Outliers are dark and colour-coded by
region.}

\end{figure}%

\phantomsection\label{cell-fig-vnt_north_south_grid_summary}
\begin{figure}[H]

\centering{

\pandocbounded{\includegraphics[keepaspectratio]{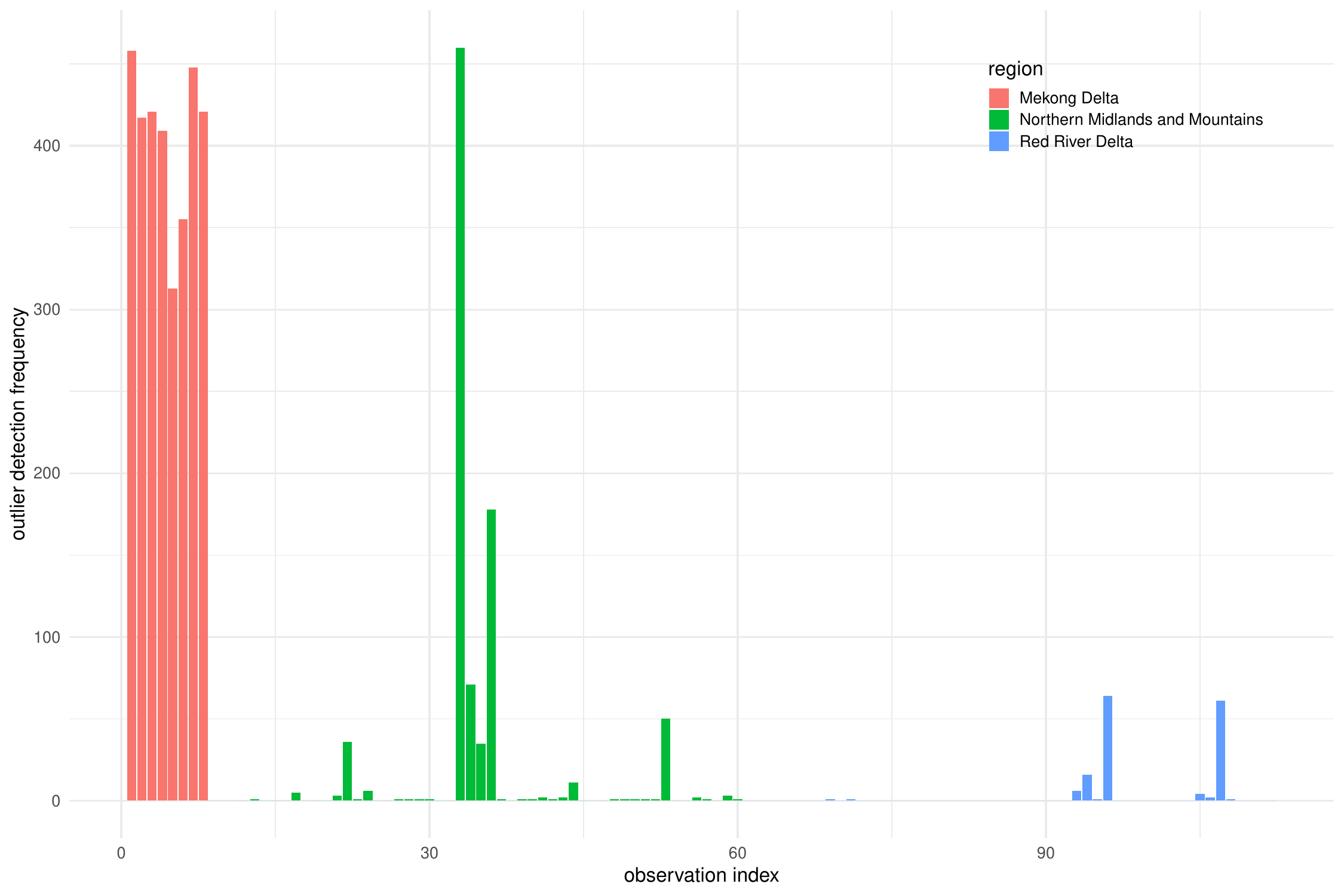}}

}

\caption{\label{fig-vnt_north_south_grid_summary}Frequency of outlier
detection by ICS across 340 scenarios with varying smoothing parameters,
for each observation in the Vietnam toy example.}

\end{figure}%

From this experimentation, we recommend using knots located at the
quantiles of the measured variable, and a number of knots such that the
number of observations is around 10 times the dimension \(p\) (here: the
dimension of the B-spline basis). The base-10 logarithm of parameter
\(\lambda\) can be chosen between -2 and 2 to avoid extreme cases and
multicollinearity problems. Moreover, the idea of launching ICS multiple
times with different preprocessing parameter values to confirm an
observation's atypical nature by its repeated detection is a strategy we
retain for real applications, as detailed in
Section~\ref{sec-app_smooth}.

\subsection{Comparison with other
methods}\label{comparison-with-other-methods}

We now compare ICS for functional data (presented in
Section~\ref{sec-icsfun}) to eight outlier detection methods already
existing in the literature, such as median-based approaches
\autocite{murph_visualisation_2024}, the modified band depth method
\autocite{sun_functional_2011} and MUOD indices
\autocite{ojo_detecting_2022}.

Our simulation uses three density-generating processes with \(2\%\) of
outliers. The scheme named \texttt{GP\_clr}, based on model 4 of the
\texttt{fdaoutlier} package \autocite[ section 4.1]{ojo_detecting_2022},
first simulates a discretised random function in \(L^2(0,1)\) from a
mixture of two Gaussian processes with different means, and applies the
inverse \(\operatorname{clr}\) transformation to obtain a random density
in the Bayes space \(B^2 (0,1)\). The scheme named \texttt{GP\_margin}
first simulates a discretised random function in \(L^2 (0,1)\) using
model 5 of the \texttt{fdaoutlier} package, which consists in a mixture
of two Gaussian processes with different covariance operators. Then, the
random density is obtained as a kind of marginal distribution of the
discrete values of the random function, where the \(x\)-axis is
discarded: theses values are considered as a random sample and smoothed
using MPL (see Section~\ref{sec-icsdens} with parameters
\(\lambda = 1\), \(10\) basis functions and knots (as well as interval
bounds) at quantiles of the full sample. This scheme is similar to the
data generating process of the Vietnamese climate dataset. Finally, the
\texttt{Gumbel} scheme first draws parameters from a mixture of two
Gaussian distributions in \(\mathbb R^2\) and computes the Gumbel
density functions corresponding to these parameters (it generates shift
outliers as described in \autocite{murph_visualisation_2024}). Note that
the output of all the schemes is a set of discretised densities on a
regular grid of size \(p=100\) that covers an interval \((a,b)\) (which
is \((0,1)\) for \texttt{GP\_clr} and \texttt{Gumbel} and the range of
the full sample for \texttt{GP\_margin}). In each sample, there are
\(n=200\) densities.

For the outlier detection methods, we denote them as
\texttt{\textless{}Approach\textgreater{}\_\textless{}Metric\textgreater{}}
so that for instance, \texttt{ICS\_B2} refers to ICS for density data in
the Bayes space \(B^2 (a,b)\). The steps of the \texttt{ICS\_B2} method
are as follows. After applying the discrete clr transformation to each
discretised density function, we approximate the underlying clr
transformed smooth density by a smoothing spline in \(L^2_0 (a,b)\)
using the preprocessing described in
\autocite{machalova_preprocessing_2016}. During this process, densities
should not take values too close to \(0\) to avoid diverging clr, so we
replace by \(10^{-8}\) all density values below this threshold. The
parameters of the compositional spline spaces are chosen by the function
\texttt{fda.usc::fdata2fd}. Then, we solve ICS in the chosen
compositional spline space, automatically selecting the components with
tests as before. The \texttt{ICS\_L2} method first smooths each
discretised density using splines in \(L^2 (a,b)\) treating the
densities as ordinary functional parameters. In the second step, we
apply ICS in the chosen spline space, selecting the components
automatically through D'Agostino normality tests. The MBD
\autocite{lopez-pintado_concept_2009} and MUOD
\autocite{azcorra_unsupervised_2018} approaches are implemented using
the package \texttt{fdaoutlier} \autocite{ojo_fdaoutlier_2023}, either
directly (\texttt{\textless{}Approach\textgreater{}\_L2}) or after
transforming the densities into log quantile densities
(\texttt{\textless{}Approach\textgreater{}\_LQD}) or into quantile
functions (\texttt{\textless{}Approach\textgreater{}\_QF}). The
median-based methods such as \texttt{Median\_LQD} and
\texttt{Median\_Wasserstein} are described in
\autocite{murph_visualisation_2024} and implemented in the
\texttt{DeBoinR} package from \autocite{murph_deboinr_2023} using the
recommended default parameters.

For each combination between a generating scheme and a method, we
average the TPR (True Positive Rate, or sensitivity) and the FPR (False
Positive Rate, one minus specificity) over \(N=200\) repetitions, for
each value of PP (the number of predicted positive) which scales from
\(0\) to \(n\). We also compute point-wise confidence bounds using the
standard deviation of the TPR over the \(N\) repetitions and the
standard Gaussian quantile of order 97.5\%. The ROC curves together with
their confidence bands are represented in
Figure~\ref{fig-comparison_methods}, separately for the three
density-generating processes. Table~\ref{tbl-comparison_auc} summarises
the performance of the methods across the schemes, by means of the
average area under the curve (AUC).

We can see that both ICS methods give quite similar results except for
the \texttt{GP\_clr} generating process where \texttt{ICS\_B2}
outperforms \texttt{ICS\_L2}. Together with \texttt{MUOD\_L2} and
\texttt{MUOD\_QF}, these methods are the best in terms of average AUC,
although ICS-based methods perform more consistently across the
different generating schemes. The \texttt{Median\_LQD} and
\texttt{MBD\_LQD} methods are worse than the others for all generating
schemes. Overall, we can recommend ICS versus the other outlier
detection methods in this situation where the proportion of outliers is
small.

\phantomsection\label{cell-fig-comparison_methods}
\begin{figure}[H]

\centering{

\pandocbounded{\includegraphics[keepaspectratio]{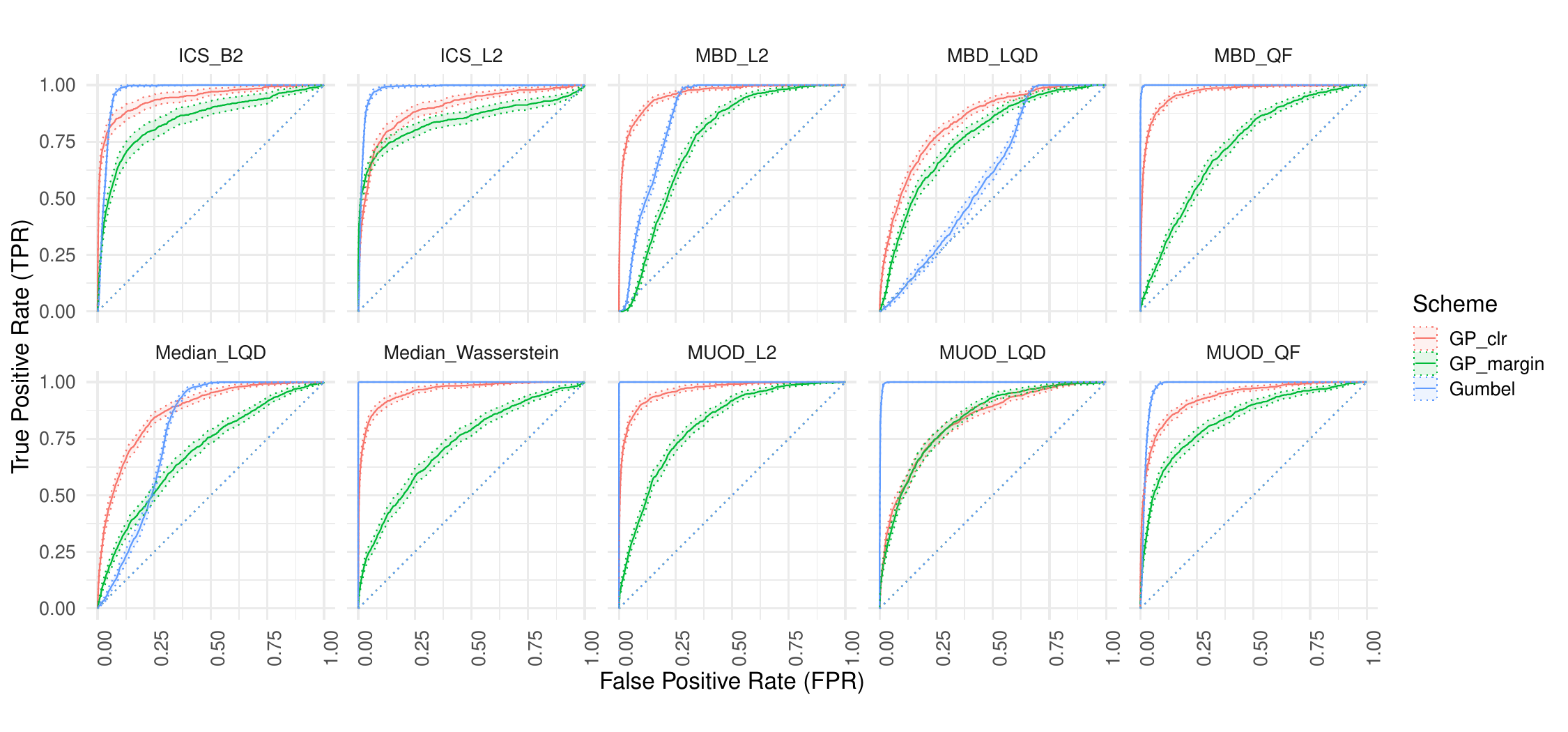}}

}

\caption{\label{fig-comparison_methods}ROC curves of 10 different
outlier detection methods for density data with 3 generating schemes.}

\end{figure}%

\begin{longtable}[]{@{}llr@{}}

\caption{\label{tbl-comparison_auc}AUC for the 10 outlier detection
methods, averaged across the 3 generating schemes.}

\tabularnewline

\toprule\noalign{}
Approach & Metric & Average AUC \\
\midrule\noalign{}
\endhead
\bottomrule\noalign{}
\endlastfoot
MUOD & L2 & 0.92 \\
ICS & B2 & 0.92 \\
MUOD & QF & 0.91 \\
ICS & L2 & 0.91 \\
MBD & QF & 0.90 \\
Median & Wasserstein & 0.90 \\
MUOD & LQD & 0.88 \\
MBD & L2 & 0.86 \\
Median & LQD & 0.78 \\
MBD & LQD & 0.74 \\

\end{longtable}

\section{An application to Vietnamese climate data}\label{sec-appliout}

\subsection{Data description and preprocessing}\label{sec-datades}

\phantomsection\label{cell-fig-vnt_climate_regions_map}
\begin{figure}[H]

\centering{

\pandocbounded{\includegraphics[keepaspectratio]{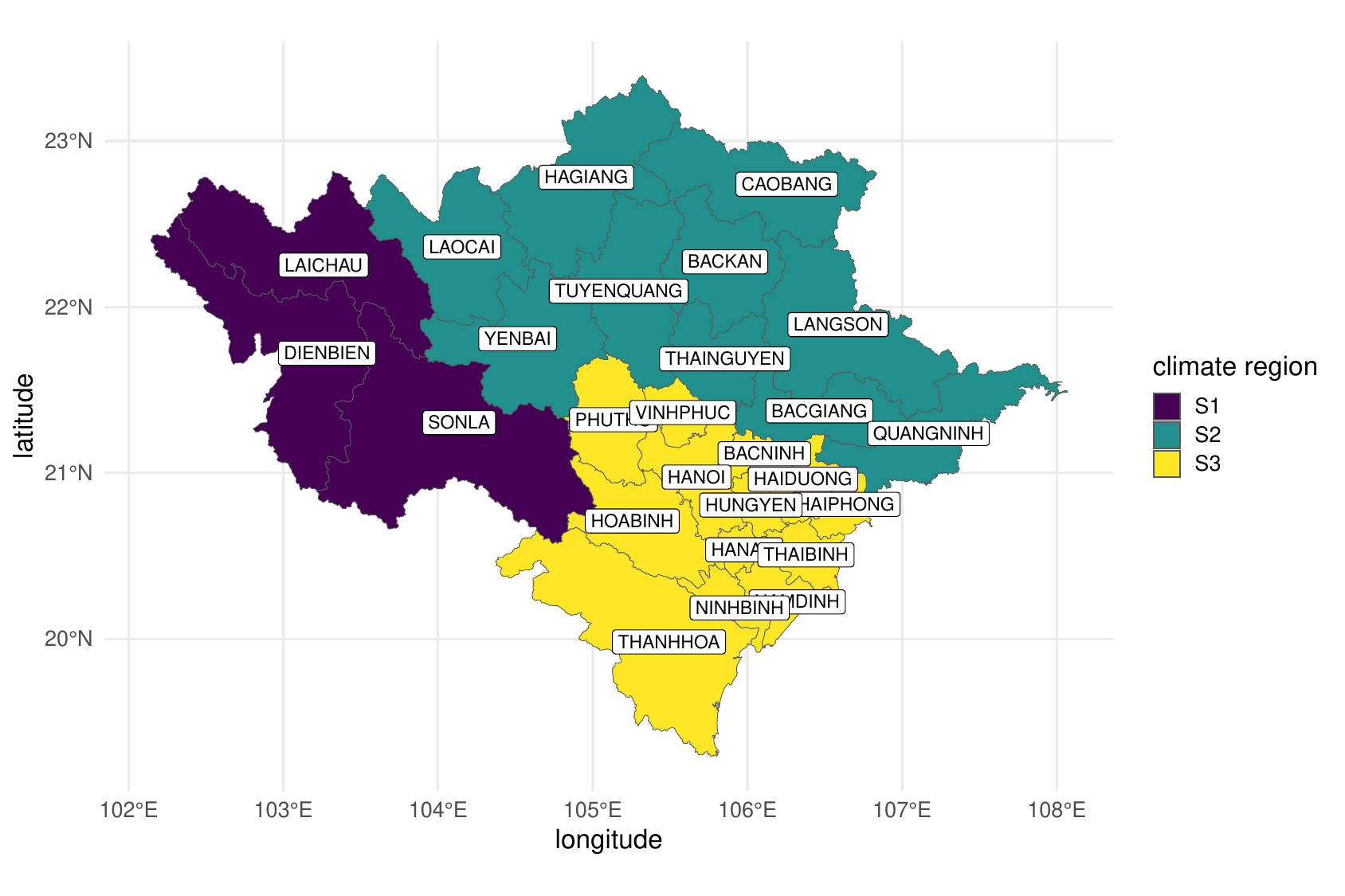}}

}

\caption{\label{fig-vnt_climate_regions_map}The three climate regions of
Northern Vietnam.}

\end{figure}%

In this application, we study daily maximum temperatures for each of the
\(I=63\) Vietnamese provinces over a \(T=30\)-year period (1987-2016).
Originally from the Climate Prediction Center (CPC) database, developed
and maintained by the National Oceanic and Atmospheric Administration
(NOAA), the data underwent a preliminary treatment presented in
\autocite{trinh_discrete_2023}. From the daily 365 or 366 values for
each year, we derive the yearly maximum temperature distribution for
each of the 1,890 province-year units. We assume that the temperature
samples are independent across years and spatially across provinces,
which is a simplifying assumption. Figure~\ref{fig-vnt_north_south_map}
depicts the six administrative regions of Vietnam, and the corresponding
provinces. However, these regions cover areas with varied climates. To
achieve more climatically homogeneous groupings, we use clusters of
provinces based on climatic regions as defined by
\textcite{stojanovic_trends_2020}.
Figure~\ref{fig-vnt_climate_regions_map} displays the three climatic
regions covering Northern Vietnam. We focus on region S3, composed of 13
provinces, by similarity with the North Plain (Red River Delta) (S3) in
\autocite{stojanovic_trends_2020}.

Figure~\ref{fig-vnt_climate_regions_dens} shows the maximum temperature
densities for the 13 provinces of S3, plotted by year, using the
preprocessing detailed in Section~\ref{sec-preproc} with degree less
than or equal to \(d=4\), smoothing parameter \(\lambda=10\) and
\(k=10\) knots located at quantiles of the pooled sample (across space
and time). We observe more variability across time than across space
which confirms that the spatial homogeneity objective is achieved.

\phantomsection\label{cell-fig-vnt_climate_regions_dens}
\begin{figure}[H]

\centering{

\pandocbounded{\includegraphics[keepaspectratio]{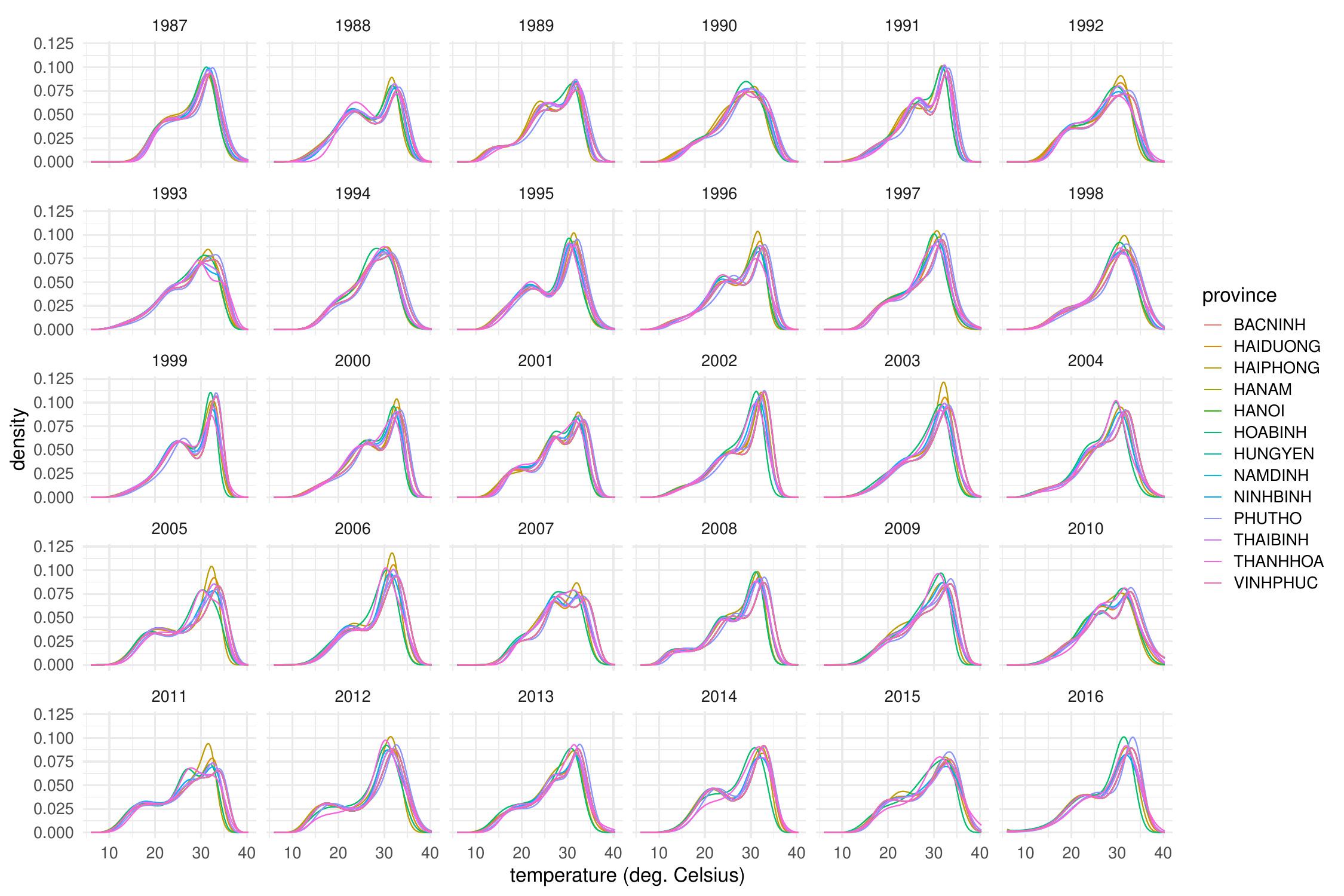}}

}

\caption{\label{fig-vnt_climate_regions_dens}Maximum temperature
densities for the 13 provinces in the S3 climate region of Northern
Vietnam, 1987-2016, colour-coded by province.}

\end{figure}%

\subsection{Outlier detection using ICS for the S3 climate region of
Vietnam}\label{outlier-detection-using-ics-for-the-s3-climate-region-of-vietnam}

We follow the different steps described in Section~\ref{sec-icsout}, and
examine the results of ICS outlier detection using the scatter pair
\((\operatorname{Cov}, \operatorname{Cov}_4)\) on the 390 (13 provinces
\(\times\) 30 years) densities from region S3, obtained after the
preprocessing detailed above.

\begin{figure}

\begin{minipage}{0.50\linewidth}

\centering{

\pandocbounded{\includegraphics[keepaspectratio]{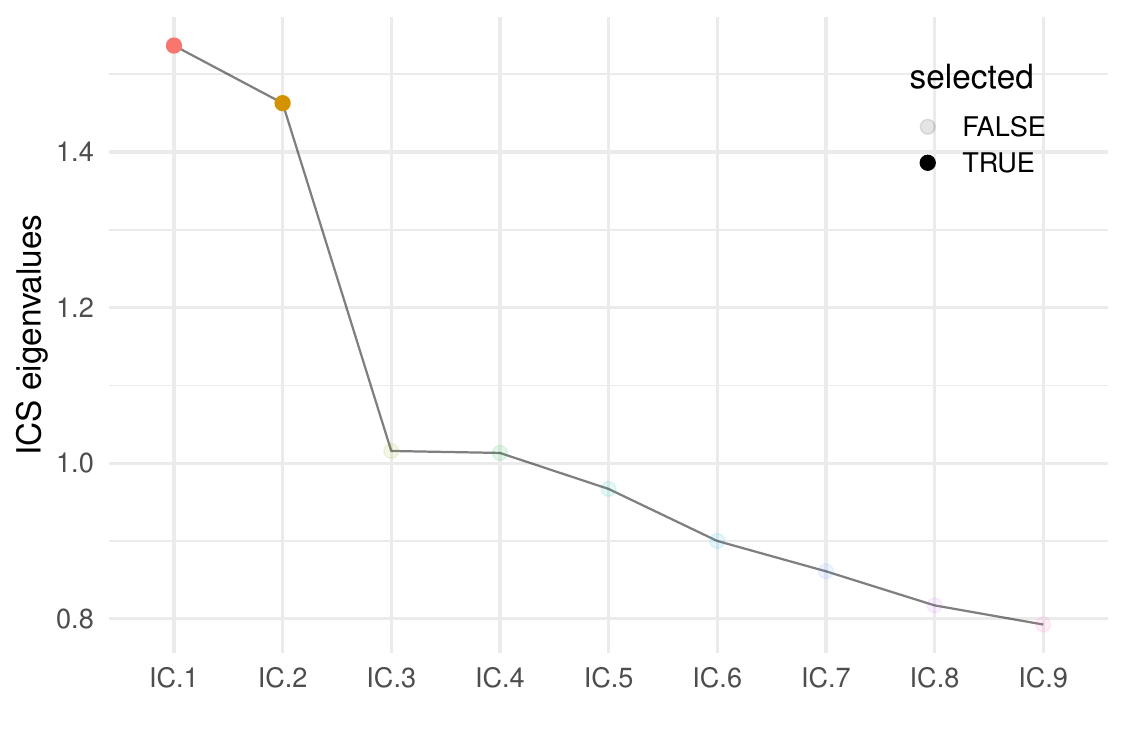}}

}

\subcaption{\label{fig-vnt_climate_regions_screedist-1}}

\end{minipage}%
\begin{minipage}{0.50\linewidth}

\centering{

\pandocbounded{\includegraphics[keepaspectratio]{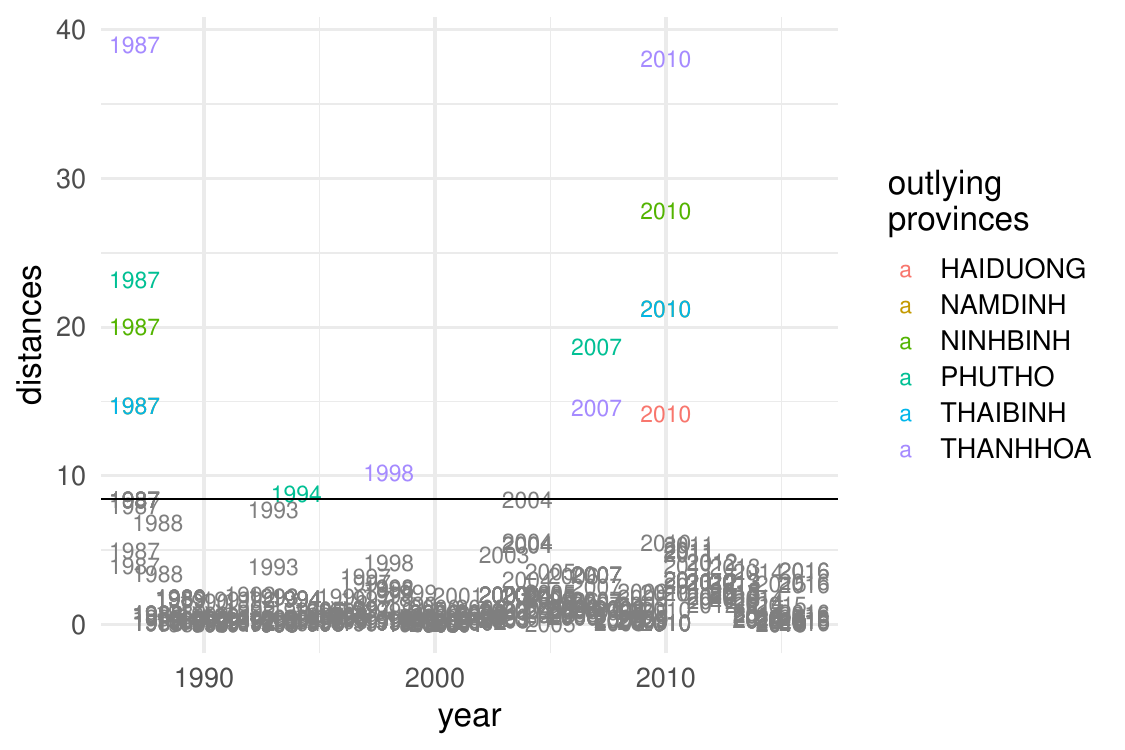}}

}

\subcaption{\label{fig-vnt_climate_regions_screedist-2}}

\end{minipage}%

\caption{\label{fig-vnt_climate_regions_screedist}Scree plot of the ICS
eigenvalues (left panel), and the ICS distances based on the first two
components (right panel) for maximum temperature densities for the 13
provinces in the S3 climate region of Northern Vietnam, 1987-2016.}

\end{figure}%

\begin{figure}

\begin{minipage}{0.50\linewidth}

\centering{

\pandocbounded{\includegraphics[keepaspectratio]{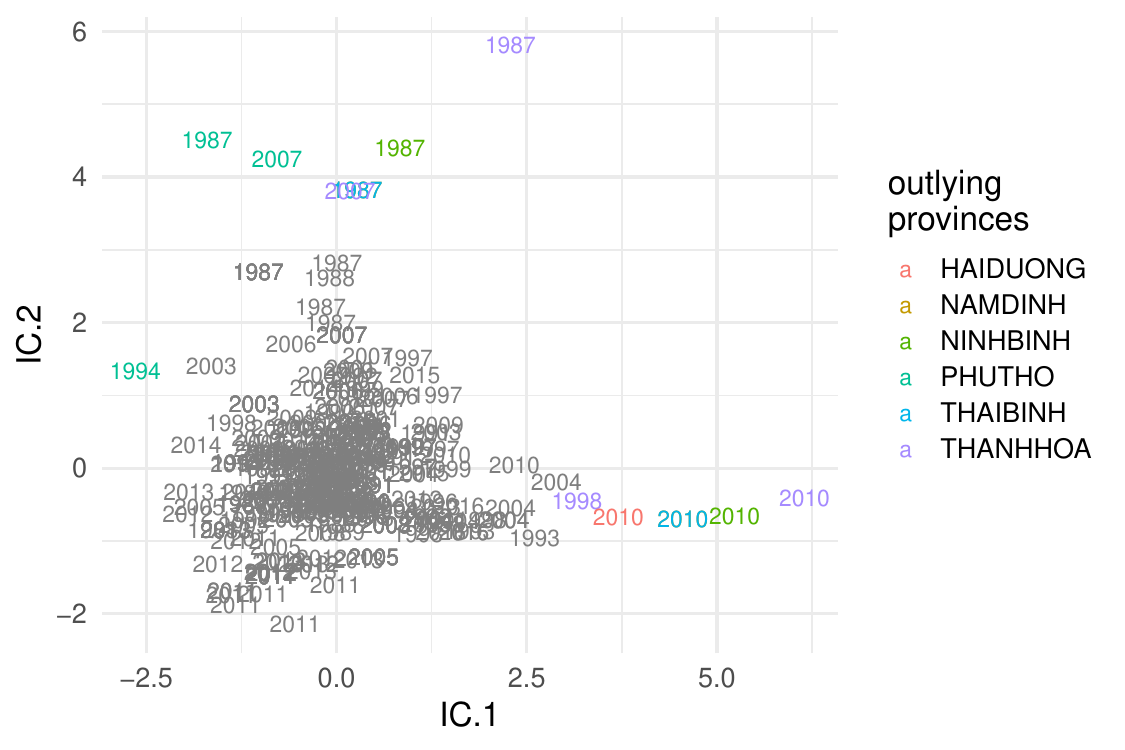}}

}

\subcaption{\label{fig-vnt_climate_regions_eigdensscat-1}}

\end{minipage}%
\begin{minipage}{0.50\linewidth}

\centering{

\pandocbounded{\includegraphics[keepaspectratio]{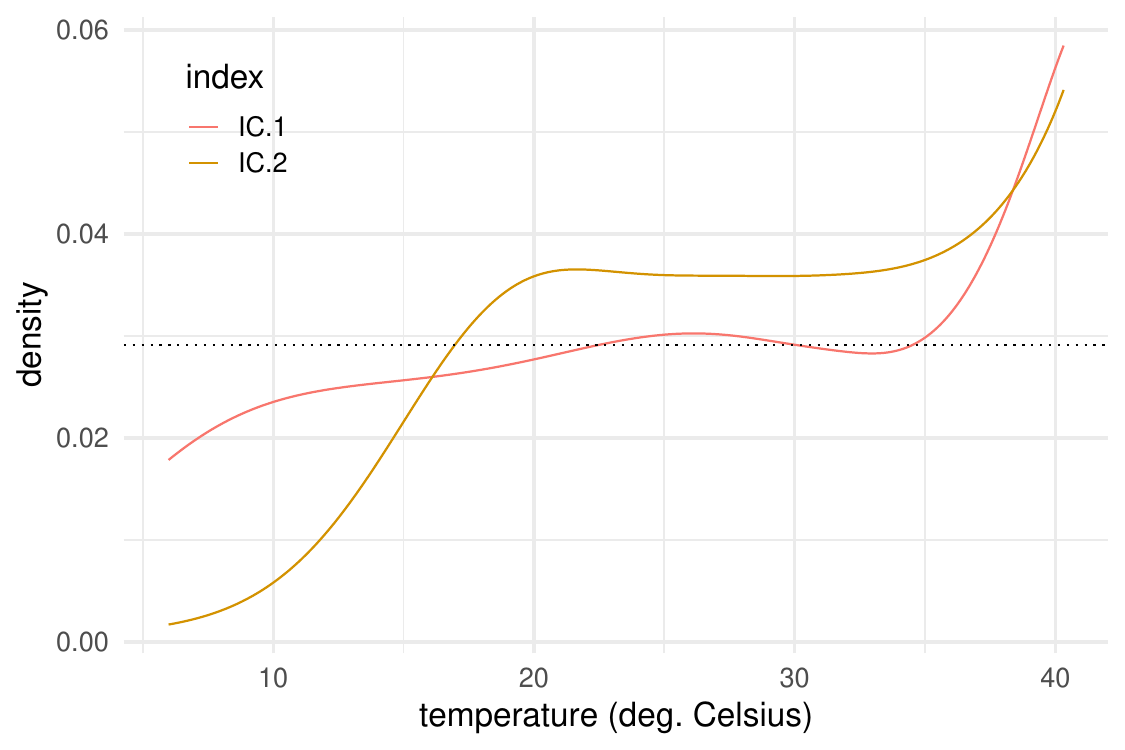}}

}

\subcaption{\label{fig-vnt_climate_regions_eigdensscat-2}}

\end{minipage}%

\caption{\label{fig-vnt_climate_regions_eigdensscat}Scatter plot of the
first two invariant components (left panel) labelled by year and
coloured by province, and the first two ICS dual eigendensities (right
panel) of the maximum temperature densities for the 13 provinces in the
S3 climate region of Northern Vietnam, 1987-2016.}

\end{figure}%

The scree plot on the left panel of
Figure~\ref{fig-vnt_climate_regions_screedist} clearly indicates that we
should retain the first two invariant components. The right panel of
Figure~\ref{fig-vnt_climate_regions_screedist} shows the squared ICS
distances based on these first two components, with the observations
index on the \(x\)-axis and with a threshold (horizontal line)
corresponding to a significance level of \(2.5\%\). This plot reveals
that several observations are distinctly above this threshold,
especially for the years 1987 and 2010.

The left panel of Figure~\ref{fig-vnt_climate_regions_eigdensscat}
displays the scatter plot of the first two components, labelled by year.
The densities are coloured by province for the outliers and coloured in
grey for the other provinces. This plot reveals that the outliers are
either densities from 2010 (and one density from 1998) that are outlying
on the first component, or densities from 1987 and 2007 that are
outlying on the second component.

To interpret the outlyingness, we can use the dual eigendensities
plotted in the right panel of
Figure~\ref{fig-vnt_climate_regions_eigdensscat} together with
Figure~\ref{fig-vnt_climate_regions_densout}, which represents the
densities and their centred log-ratio transformation, colour-coded by
year for the outliers and in grey for the other observations. This is
justified by the reconstruction formula of
Proposition~\ref{prp-reconstruction} in the Appendix. The horizontal
line on the eigendensities plot (right plot of
Figure~\ref{fig-vnt_climate_regions_eigdensscat}) corresponds to the
uniform density on the interval \([5;40]\). Four provinces in 2010 are
outlying with large positive values on the first invariant component
(see the left panel of
Figure~\ref{fig-vnt_climate_regions_eigdensscat}). The first
eigendensity IC.1 is characterised by a smaller mass of the temperature
values on the interval \([5;20]\), compared to the uniform distribution,
a mass similar to the uniform on \([20;35]\), and a much larger mass
than the uniform on the interval \([35;40]\). These four observations
correspond to the four blue curves on the left and right panels of
Figure~\ref{fig-vnt_climate_regions_densout}. Compared to the other
densities, these four densities exhibit relatively lighter tails on the
lower end of the temperature spectrum and heavier tails on the higher
end. For temperature values in the medium range, these four observations
fall in the middle of the cloud of densities and of clr transformed
densities. On the second invariant component, six observations take
large values and are detected as outliers. They correspond to four
provinces in 1987 and three in 2007 (see the left panel of
Figure~\ref{fig-vnt_climate_regions_eigdensscat}). The second
eigendensity IC.2 differs greatly from the uniform distribution on the
whole interval of temperature values. The left tail is much lighter
while the right tail is much heavier. Besides the six observations
flagged as outliers, other provinces in 1987 and 2007 take large values
on IC.2, and correspond to densities with very few days with maximum
temperature less than 15 degrees Celsius compared to other densities.

\begin{figure}

\begin{minipage}{0.50\linewidth}

\centering{

\pandocbounded{\includegraphics[keepaspectratio]{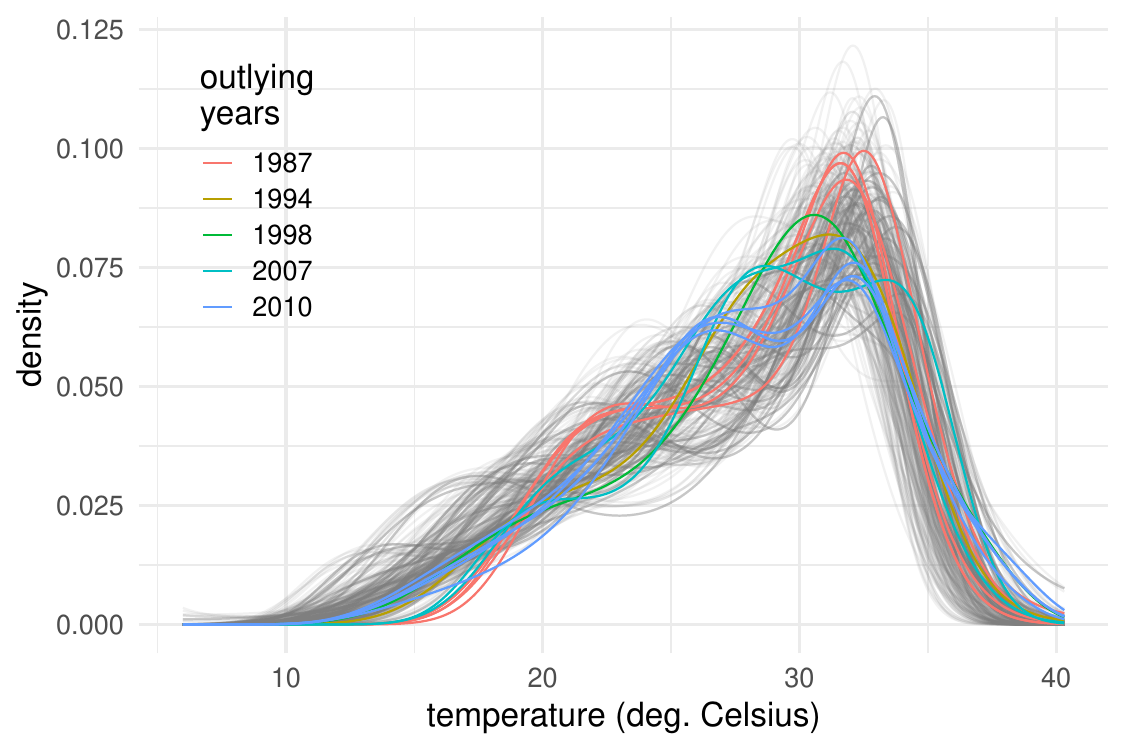}}

}

\subcaption{\label{fig-vnt_climate_regions_densout-1}}

\end{minipage}%
\begin{minipage}{0.50\linewidth}

\centering{

\pandocbounded{\includegraphics[keepaspectratio]{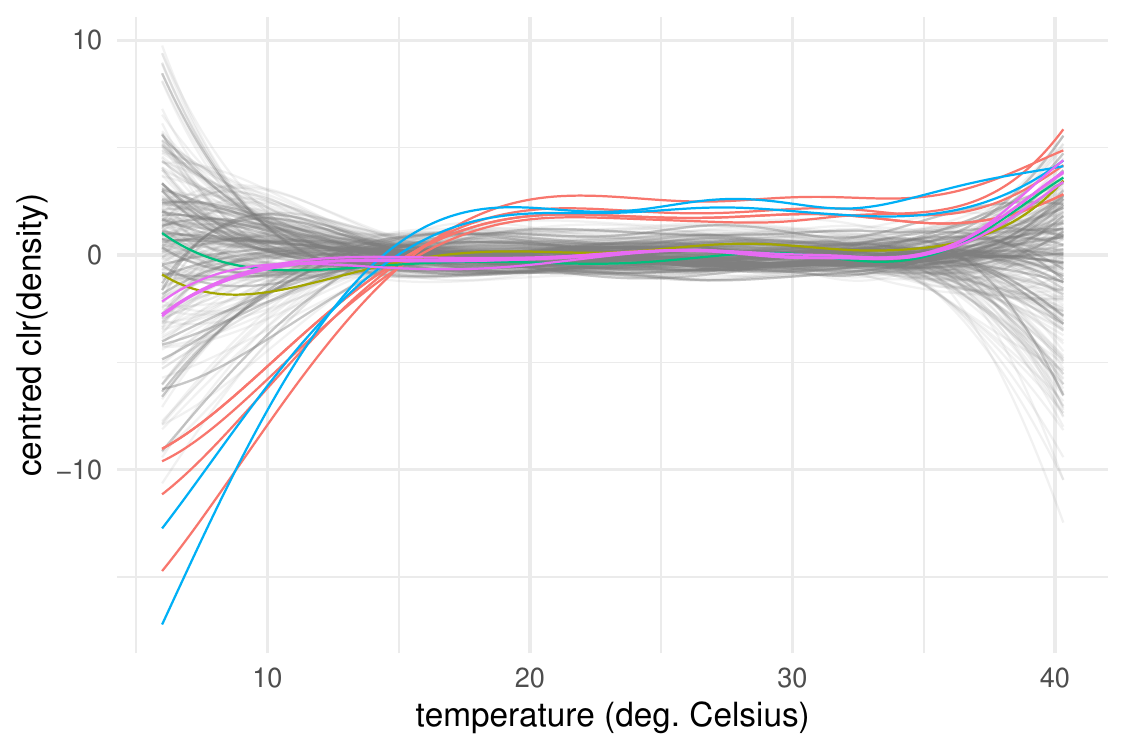}}

}

\subcaption{\label{fig-vnt_climate_regions_densout-2}}

\end{minipage}%

\caption{\label{fig-vnt_climate_regions_densout}Maximum temperature
densities (left panel) and their centred log-ratio transforms (right
panel) for the 13 provinces in the S3 climate region of Northern
Vietnam, 1987-2016, outlying densities are colour-coded by year.}

\end{figure}%

\subsection{Influence of the preprocessing
parameters}\label{sec-app_smooth}

\phantomsection\label{cell-fig-vnt_climate_regions_grid_outliers}
\begin{figure}[H]

\centering{

\pandocbounded{\includegraphics[keepaspectratio]{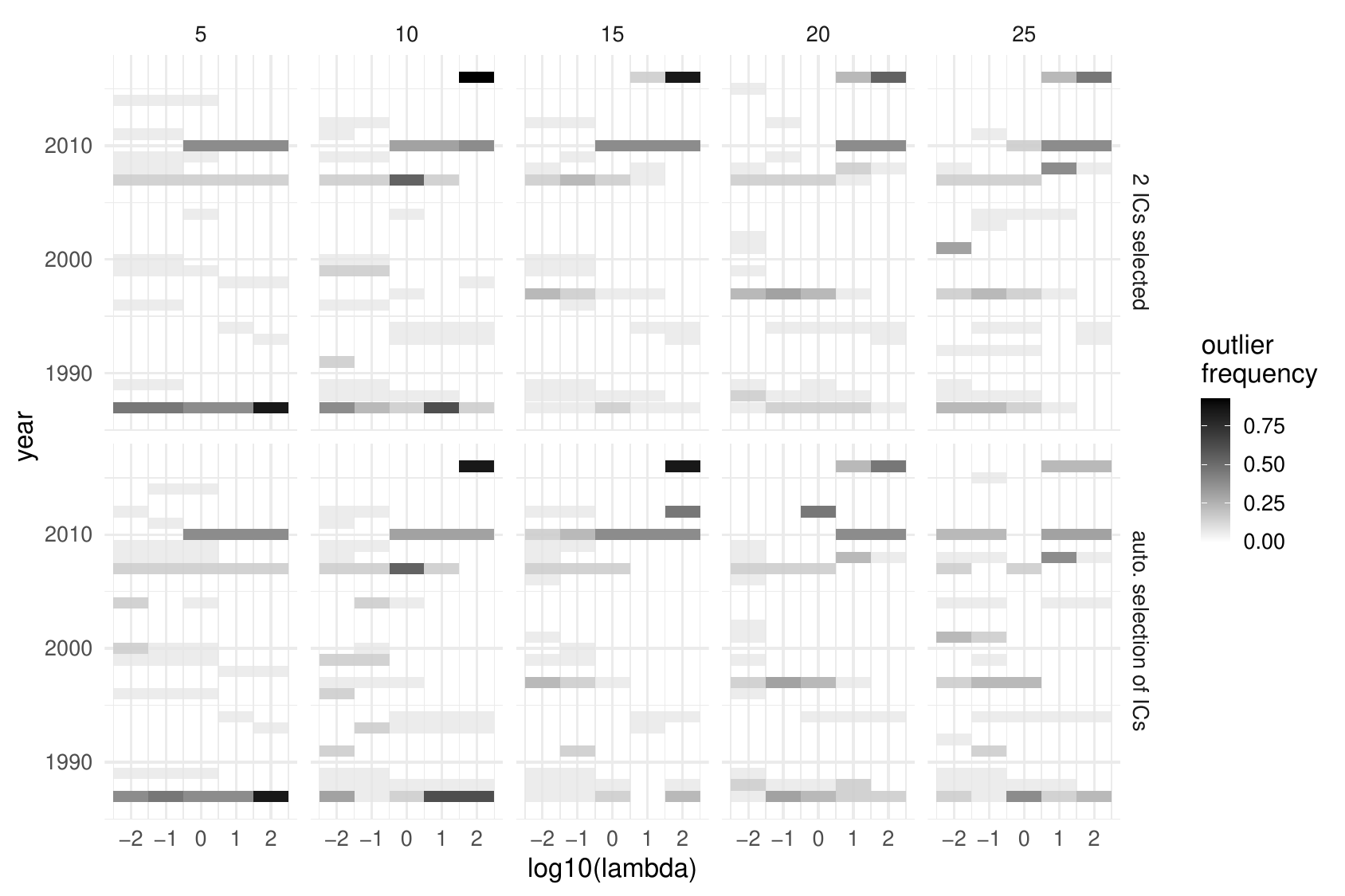}}

}

\caption{\label{fig-vnt_climate_regions_grid_outliers}Outlier detection
by ICS across smoothing parameters for the Vietnam climate data.
\emph{Top:} 2 invariant components selected; \emph{Bottom:} automatic
selection through D'Agostino tests. \emph{\(y\)-axis:} year;
\emph{\(x\)-axis:} \(\lambda\) parameter. Columns correspond to knot
numbers (5-25). Outliers are marked as light gray to black squares
depending on their detection frequency.}

\end{figure}%

\phantomsection\label{cell-fig-vnt_climate_regions_grid_summary}
\begin{figure}[H]

\centering{

\pandocbounded{\includegraphics[keepaspectratio]{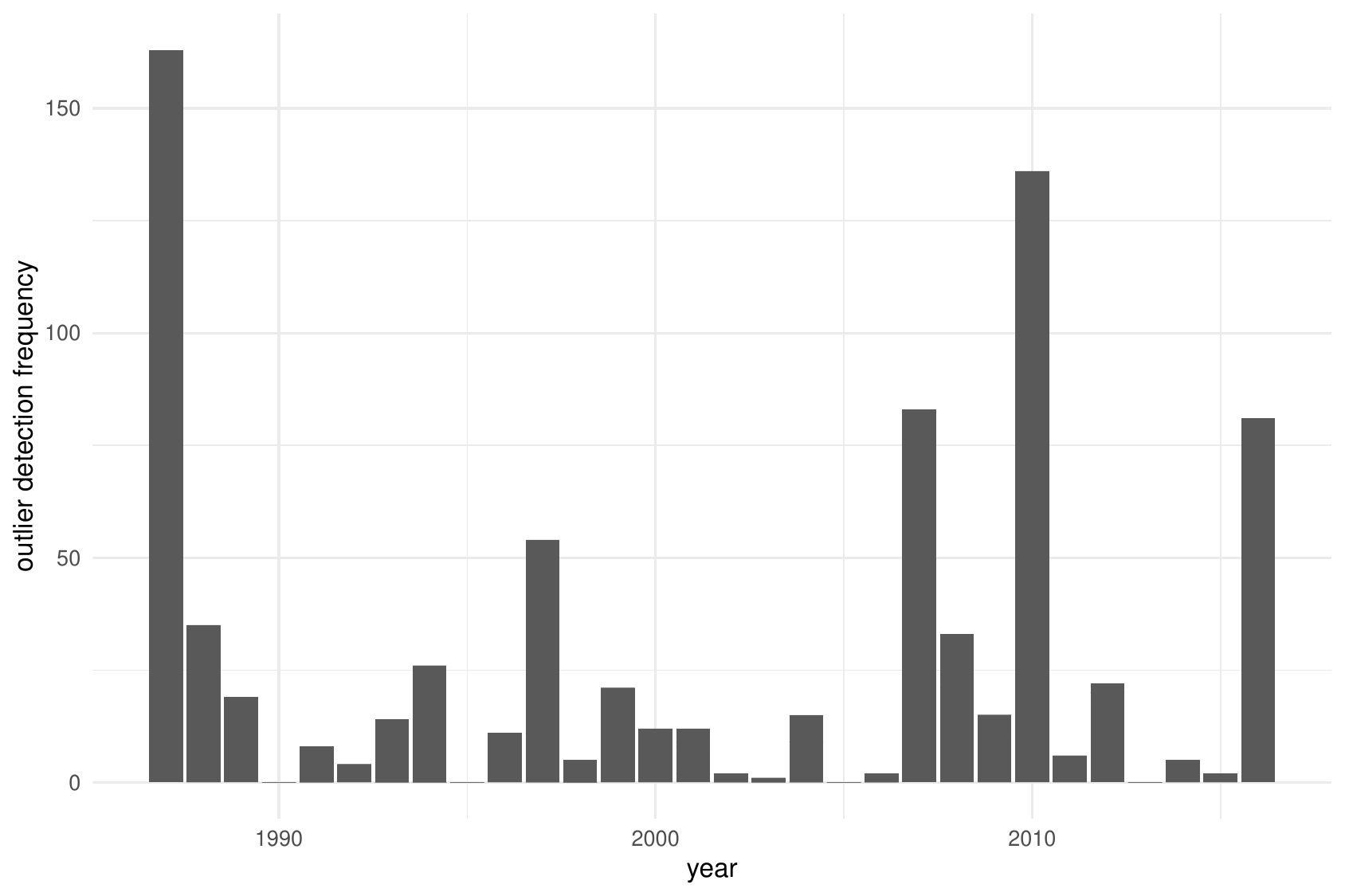}}

}

\caption{\label{fig-vnt_climate_regions_grid_summary}Frequency of
outlier detection by ICS across all 25 scenarios with varying smoothing
parameters and all 13 provinces, for each year in the Vietnamese climate
dataset.}

\end{figure}%

As mentioned in Section~\ref{sec-toyex}, we can validate the atypical
nature of observations by running the ICSOutlier procedure multiple
times with varying smoothing parameter values. Following the rule of
thumb of one dimension per 10 observations, with 390 observations, we
should consider less than 35 interior knots. In what follows, we take 5,
10, 15, 20 and 25 interior knots and we consider base-10 logarithm
values for \(\lambda\) equal to -2, -1, 0, 1 and 2. The number of
selected ICS components is either fixed equal to 2, or is automatically
determined using the D'Agostino normality test described in
Section~\ref{sec-icsout}. We compute the squared ICS distances of the
390 observations, and observations are classified as outliers when their
squared distance exceeds the threshold based on a \(2.5\%\) level as
detailed in Section~\ref{sec-icsout}.

We plot in Figure~\ref{fig-vnt_climate_regions_grid_outliers} the years
on the \(y\)-axes for the 25 smoothing parameter setups, indicating
outlying years with light gray to black squares depending on their
detection frequency. Figure~\ref{fig-vnt_climate_regions_grid_summary}
displays a bar plot of the frequency of outlier detection (across the 25
setups and the 13 provinces) for each year. Note that the choice of the
number of selected invariant components has minimal impact. Both
Figure~\ref{fig-vnt_climate_regions_grid_outliers} and
Figure~\ref{fig-vnt_climate_regions_grid_summary} confirm the results of
the previous section. Most provinces are outlying in 1987 and several
are also outlying in 2007 and 2010. For large values of \(\lambda\),
many provinces are also detected as outliers in 2016. Some provinces are
detected quite often over the years: THANH HOA, HAI PHONG and HOA BINH.
Note that in \autocite{stojanovic_trends_2020}, the province of THANH
HOA extends across two climatic regions (S3 and S4) which could explain
why it is very often detected as an outlier.

An overall comment regarding the outlier detection procedure that we use
in the present application is that, from our experience on other data
sets, an outlying density is often characterised by a behaviour that
differs from the other densities in the tails of the distribution. This
is not surprising because the Bayes inner product defined by equation
Equation~\ref{eq-innerpdt} involves the ratio of densities which can be
large when a density is small (at the tails of the distribution).

\section{Conclusion and perspectives}\label{sec-conclusion}

We propose a coordinate-free presentation of ICS that allows ICS to be
applied to more complex objects than the coordinates vectors of
multivariate analysis. We focus on the case of distributional data and
describe an outlier detection procedure based on ICS. However, one of
the limitations of the coordinate-free approach is that it is mainly
adapted to pairs of weighted covariance operators, because they have a
coordinate-free definition. These pairs of operators include the
well-known \((\operatorname{Cov}, \operatorname{Cov}_4)\) pair. Its
scatter counterpart in the multivariate context is the one recommended
by \textcite{archimbaud_detection_2018} for a small proportion of
outliers. But it is unclear how we could generalise other well-known
scatter matrices (such as M-estimators, pairwise-based weighted
estimators, or Minimum Covariance Determinant estimators) which are
useful when using ICS as a preprocessing step for clustering
\autocite[see][]{alfons_tandem_2024}.

Concerning a further adaptation of ICSOutlier to density objects, one
perspective to our work is to take into account different settings for
the preprocessing parameters and aggregate the results in a single
outlyingness index. Another perspective is to consider multivariate
densities (e.g., not only maximum density temperature but also minimum
density temperature, precipitation,\ldots) and generalise the ICSOutlier
procedure as in \autocite{archimbaud_ics_2022} for multivariate
functional data.

This coordinate-free framework for ICS lays the groundwork for a
generalisation to infinite-dimensional Hilbert spaces. Many difficulties
arise, such as the compactness of the covariance operator which makes it
non surjective, so that one cannot easily define a Mahalanobis distance,
on which our definition of weighted covariance operators relies.
Moreover, the existence of solutions and other properties of ICS proved
in this paper come from the fact that one of the scatter operators is an
automorphism, so it cannot be compact (in particular not the
covariance). Finally, \textcite{tyler_note_2010} proved that, whenever
the dimension \(p\) is larger than the number of observations \(n\), all
affine equivariant scatter operators are proportional, which is a bad
omen for a straight generalisation to infinite-dimensional Hilbert
spaces. One can partially circumvent these difficulties by assuming that
the data is almost surely in a deterministic finite-dimensional subspace
\(E\) of \(H\) (which is the case for density data after our
preprocessing) and applying coordinate-free ICS. Another option could be
to alleviate the affine equivariance assumption.

\section*{Acknowledgments}\label{acknowledgments}
\addcontentsline{toc}{section}{Acknowledgments}

The major part of this work was completed while the authors were
visiting the Vietnam Institute for Advanced Study in Mathematics (VIASM)
in Hanoi and the authors express their gratitude to VIASM. This paper
has also been funded by the Agence Nationale de la Recherche under grant
ANR-17-EURE-0010 (Investissements d'Avenir program). We thank Thibault
Laurent for attracting our attention on the climate regions partition of
Vietnam. We also thank the two reviewers who gave us constructive
comments that allowed us to improve our article.

\section*{Appendix}\label{appendix}
\addcontentsline{toc}{section}{Appendix}

\subsection*{Scatter operators for random objects in a Hilbert
space}\label{scatter-operators-for-random-objects-in-a-hilbert-space}
\addcontentsline{toc}{subsection}{Scatter operators for random objects
in a Hilbert space}

Let us first discuss some definitions relative to scatter operators in
the framework of a Hilbert space \((E, \langle \cdot, \cdot \rangle)\).
We consider an \(E\)-valued random object \(X: \Omega \rightarrow E\)
where \(\Omega\) is a probability space and \(E\) is a Hilbert space
equipped with the Borel \(\sigma\)-algebra. In order to define ICS, we
need at least two scatter operators, which generalise the covariance
operator defined on \(E\) by
\begin{equation}\phantomsection\label{eq-cov}{\forall (x,y) \in E^2, \langle \operatorname{Cov} [X] x, y \rangle = \mathbb E \left[ \langle X - \mathbb EX, x \rangle \langle X - \mathbb EX, y \rangle \right],}\end{equation}
while keeping its affine equivariance property:
\[\forall A \in \mathcal{GL} (E), \forall b \in E, \operatorname{Cov} [AX+b] = A \operatorname{Cov} [X] A^*,\]
where the Hilbert norm of \(X\) is assumed to be square-integrable, and
\(A^*\) is the adjoint linear operator of \(A\) in the Hilbert space
\(E\), represented by the transpose of the matrix that represents \(A\).

\begin{definition}[Scatter
operators]\protect\hypertarget{def-scatter}{}\label{def-scatter}

Let \((E, \langle \cdot, \cdot \rangle)\) be a Hilbert space of
dimension \(p\), \(\mathcal E\) an affine invariant set of \(E\)-valued
random objects, i.e.~that verifies:
\begin{equation}\phantomsection\label{eq-affine-inv-set}{\forall X \in \mathcal E, \forall A \in \mathcal{GL} (E), \forall b \in E, AX+b \in \mathcal E.}\end{equation}
An operator \(S: \mathcal E \rightarrow \mathcal S^+ (E)\) (where
\(\mathcal S^+ (E)\) is the space of non-negative symmetric operators on
\(E\)) is called an (affine equivariant) scatter operator (defined on
\(\mathcal E\)) if it satisfies the following two properties:

\begin{enumerate}
\def\labelenumi{\arabic{enumi}.}
\item
  Invariance by equality in distribution:
  \[\forall (X,Y) \in \mathcal E^2, X \sim Y \Rightarrow S[X] = S[Y].\]
\item
  Affine equivariance:
  \[\forall X \in \mathcal E, \forall A \in \mathcal{GL} (E), \forall b \in E, S[AX+b] = A S[X] A^*.\]
\end{enumerate}

\end{definition}

We do not know whether there exist other scatter operators than the
covariance when the Hilbert space has infinite dimension.

\subsection*{Details on coordinate-free
ICS}\label{details-on-coordinate-free-ics}
\addcontentsline{toc}{subsection}{Details on coordinate-free ICS}

The problem \(\operatorname{ICS} (X, S_1, S_2)\) defined by
Equation~\ref{eq-icsdef} is equivalent to assuming that \(S_1 [X]\) is
injective and finding an orthonormal basis \(H\) that diagonalises the
non-negative symmetric operator \(S_1 [X]^{-1} S_2 [X]\) in the
Euclidean space \((E, \langle S_1[X] \cdot, \cdot \rangle)\). The
\(\operatorname{ICS} (X,{S_1},{S_2})\) spectrum \(\Lambda\) is unique
and is simply the spectrum of \(S_1 [X]^{-1} S_2 [X]\).

\begin{proposition}[Existence of
solutions]\protect\hypertarget{prp-existence}{}\label{prp-existence}

Let \((E, \langle \cdot, \cdot \rangle)\) be a Euclidean space of
dimension \(p\), \(\mathcal E \subseteq L^1 (\Omega, E)\) an affine
invariant set of integrable \(E\)-valued random objects, \(S_1\) and
\(S_2\) two scatter operators on \(\mathcal E\). For any
\(X \in \mathcal E\) such that \(S_1 [X]\) is an automorphism, there
exists at least one solution \((H, \Lambda)\) to the problem
\(\operatorname{ICS}(X, S_1, S_2)\), and \(\Lambda\) is a uniquely
determined non-increasing sequence of positive real numbers.

\end{proposition}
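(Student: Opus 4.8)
The plan is to reduce $\operatorname{ICS}(X,S_1,S_2)$ to the finite-dimensional spectral theorem, following the reformulation given just before the statement. First I would note that $S_1[X]$, being the value of a scatter operator, lies in $\mathcal S^+(E)$, so it is symmetric and non-negative; since it is moreover assumed to be an automorphism of $E$, it is in fact positive definite. Hence $\langle x,y\rangle_1 := \langle S_1[X]\,x,y\rangle$ defines a genuine inner product on $E$, and $(E,\langle\cdot,\cdot\rangle_1)$ is again a Euclidean space of dimension $p$. This is the one place where the hypothesis ``$S_1[X]$ is an automorphism'' is essential: it is exactly what makes $\langle\cdot,\cdot\rangle_1$ non-degenerate.

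Next I would consider the endomorphism $T := S_1[X]^{-1}S_2[X]$ of $E$ and check that it is self-adjoint and non-negative with respect to $\langle\cdot,\cdot\rangle_1$: indeed $\langle Tx,y\rangle_1=\langle S_1[X]S_1[X]^{-1}S_2[X]x,y\rangle=\langle S_2[X]x,y\rangle$, which is symmetric in $(x,y)$ because $S_2[X]$ is symmetric, and non-negative for $x=y$ because $S_2[X]\in\mathcal S^+(E)$. The spectral theorem in finite dimension then yields an $\langle\cdot,\cdot\rangle_1$-orthonormal basis $H=(h_1,\dots,h_p)$ of eigenvectors of $T$, whose (real, non-negative) eigenvalues I would list in non-increasing order as $\Lambda=(\lambda_1\geq\dots\geq\lambda_p)$.

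It then remains to translate these two facts into the identities of Definition~\ref{def-ics}. Orthonormality of $H$ for $\langle\cdot,\cdot\rangle_1$ is precisely $\langle S_1[X]h_j,h_{j'}\rangle=\delta_{jj'}$; and $Th_j=\lambda_j h_j$ rewrites as $S_2[X]h_j=\lambda_j S_1[X]h_j$, so that $\langle S_2[X]h_j,h_{j'}\rangle=\lambda_j\langle S_1[X]h_j,h_{j'}\rangle=\lambda_j\delta_{jj'}$. Thus $(H,\Lambda)$ solves the problem. For the uniqueness of $\Lambda$, I would argue that for any solution $(H',\Lambda')$ the matrix of $S_1[X]$ in the basis $H'$ is $I_p$ and that of $S_2[X]$ is $\operatorname{diag}(\Lambda')$, hence the matrix of $T$ in $H'$ is $\operatorname{diag}(\Lambda')$; so $\Lambda'$ is the multiset of eigenvalues of $T$, a basis-independent invariant, and the non-increasing ordering determines it uniquely. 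Non-negativity of the $\lambda_j$ follows from $\lambda_j=\langle S_2[X]h_j,h_j\rangle\geq 0$.

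I expect the only genuinely delicate point to be the \emph{strict} positivity asserted in the statement: since $\lambda_j=\langle S_2[X]h_j,h_j\rangle$, having all $\lambda_j>0$ is equivalent to $S_2[X]$ being nonsingular, which the above argument does not deliver from the bare definition of a scatter operator. I would therefore either invoke an implicit non-degeneracy assumption on $S_2[X]$ (as is standard for the pairs of scatters used in practice) or derive it from non-degeneracy of $X$; everything else is routine linear algebra once the twisted inner product $\langle\cdot,\cdot\rangle_1$ is available.
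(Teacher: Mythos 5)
Your proof takes exactly the same route as the paper's: pass to the inner product $\langle S_1[X]\cdot,\cdot\rangle$, check that $S_1[X]^{-1}S_2[X]$ is self-adjoint for it, and apply the finite-dimensional spectral theorem; your version is in fact more complete, since the paper's proof does not spell out the translation back to the defining identities or the uniqueness of $\Lambda$. Your caveat about strict positivity is also well taken: the argument only yields $\lambda_j=\langle S_2[X]h_j,h_j\rangle\geq 0$, so the positivity asserted in the statement indeed requires $S_2[X]$ to be injective, an assumption the paper leaves implicit.
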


\begin{proof}
Since \(S_1 [X]\) is non-singular, \(S_1 [X]^{-1} S_2 [X]\) exists and
is symmetric in the Euclidean space
\((E, \langle {S_1}[X] \cdot, \cdot \rangle)\), because
\[\begin{aligned}
\forall (x,y) \in E^2, \langle S_1[X]  S_1 [X]^{-1} S_2 [X] x, y \rangle &= \langle S_2 [X]x, y \rangle = \langle S_2 [X] y, x \rangle \\
&= \langle S_1[X] S_1 [X]^{-1} S_2 [X] y, x \rangle.
\end{aligned}\] Thus, the spectral theorem guarantees that there exists
an orthonormal basis \(H\) of
\((E, \langle {S_1}[X] \cdot, \cdot \rangle)\) in which
\(S_1 [X]^{-1} S_2[X]\) is diagonal.~
\end{proof}

This methodology does not generalise to the infinite-dimensional case,
because the inner product space
\((\mathcal H, \langle \cdot, S_1 [X] \cdot \rangle)\) is not
necessarily complete, so the spectral theorem does not apply.

\begin{remark}[Courant-Fischer variational principle]
The ICS problem Equation~\ref{eq-icsdef} can be stated as a maximisation
problem. If \(1 \leq j \leq p\), the following equalities hold:
\begin{equation}\phantomsection\label{eq-Courant}{h_j \in \underset{h \in E, \langle S_1 [X] h, h_{j'} \rangle = 0 \text{ if } 0 < j' < j}{\operatorname{argmax}} \frac{\langle{S_2} [X] h, h \rangle}{\langle{S_1} [X] h, h \rangle} \;\text{ and }\;
\lambda_j = \max_{h \in E, \langle S_1 [X] h, h_{j'} \rangle = 0 \text { if } 0 < j' < j} \frac{\langle{S_2} [X] h, h \rangle}{\langle{S_1} [X] h, h \rangle}.}\end{equation}
\end{remark}

The following reconstruction formula, extended from multivariate to
complex data, is useful to interpret the ICS dual eigenbasis
\(H^* = (h^*_j)_{1 \leq j \leq p}\), which is defined as the only basis
of the space \(E\) that satisfies
\[\langle h_j, h^*_{j'} \rangle = \delta_{jj'} \text{ for all }1 \leq j, j' \leq p.\]

\begin{proposition}[Reconstruction
formula]\protect\hypertarget{prp-reconstruction}{}\label{prp-reconstruction}

Let \((E, \langle \cdot, \cdot \rangle)\) be a Euclidean space of
dimension \(p\), \(\mathcal E \subseteq L^1 (\Omega, E)\) an affine
invariant set of integrable \(E\)-valued random objects, \(S_1\) and
\(S_2\) two scatter operators on \(\mathcal E\). For any
\(X \in \mathcal E\) such that \(S_1 [X]\) is an automorphism and any
\(\operatorname{ICS}(X, S_1, S_2)\) eigenbasis
\({H} = (h_1, \dots, h_p)\) of \(E\), we have
\[X = \mathbb EX + \sum_{j=1}^p z_j h^*_j,\] where the
\(z_j, 1 \leq j \leq p\) are defined as in Equation~\ref{eq-ic} and
\(H^* = (h^*_j)_{1 \leq j \leq p} = ({S_1} [X] h_j)_{1 \leq j \leq p}\)
is the dual basis of \(H\).

\end{proposition}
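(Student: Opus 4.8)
The plan is to first verify that the family $H^* = (S_1[X] h_j)_{1 \le j \le p}$ genuinely is the dual basis of $H$, and then to expand the centred random object $X - \mathbb EX$ in this basis and read off its coefficients.

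First I would note that, since $S_1[X]$ is an automorphism of $E$ and $H$ is a basis of the $p$-dimensional space $E$, the family $H^* = (S_1[X] h_1, \dots, S_1[X] h_p)$ is again a basis of $E$. Moreover, a scatter operator takes values in $\mathcal S^+(E)$ (Definition~\ref{def-scatter}), so $S_1[X]$ is symmetric, and the first equation of the ICS problem Equation~\ref{eq-icsdef} yields, for all $1 \le j, j' \le p$,
\[
\langle h_j, S_1[X] h_{j'} \rangle = \langle S_1[X] h_j, h_{j'} \rangle = \delta_{jj'}.
\]
Thus $H^*$ satisfies the biorthogonality relation $\langle h_j, h^*_{j'} \rangle = \delta_{jj'}$ that characterises the dual basis of $H$, and by uniqueness of the dual basis, $H^* = (h^*_j)_{1 \le j \le p} = (S_1[X] h_j)_{1 \le j \le p}$.

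Next, since $H^*$ is a basis of $E$, for each $\omega \in \Omega$ the vector $X(\omega) - \mathbb EX \in E$ decomposes uniquely as $X(\omega) - \mathbb EX = \sum_{j=1}^p c_j(\omega) h^*_j$ for some scalars $c_j(\omega)$; equivalently $X - \mathbb EX = \sum_{j=1}^p c_j h^*_j$ with the $c_j$ real random variables. To identify these coefficients I would pair the identity with $h_k$ for a fixed $1 \le k \le p$ and use the biorthogonality just established:
\[
\langle X - \mathbb EX, h_k \rangle = \sum_{j=1}^p c_j \langle h^*_j, h_k \rangle = \sum_{j=1}^p c_j \delta_{jk} = c_k.
\]
By the definition Equation~\ref{eq-ic} of the invariant coordinates, the left-hand side is exactly $z_k$, so $c_k = z_k$ for every $k$, which gives $X = \mathbb EX + \sum_{j=1}^p z_j h^*_j$.

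The argument is pure linear algebra once the defining properties of a scatter operator are invoked, so I do not expect any serious obstacle; the only points deserving a little care are that the decomposition of $X - \mathbb EX$ is taken pointwise in $\omega$ (so that the coefficients are legitimate random variables, with no integrability issue since $H^*$ is a fixed basis), and that it is precisely the symmetry of $S_1[X]$ from Definition~\ref{def-scatter} that turns the first ICS equation into the biorthogonality relation $\langle h_j, S_1[X] h_{j'} \rangle = \delta_{jj'}$ identifying $H^*$ as the dual basis.
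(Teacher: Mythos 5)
Your proof is correct and follows essentially the same route as the paper's: both arguments rest on the biorthogonality relation $\langle S_1[X] h_j, h_{j'} \rangle = \delta_{jj'}$ supplied by the first ICS equation, the only difference being that the paper expands $S_1[X]^{-1}(X - \mathbb EX)$ in the $S_1[X]$-orthonormal basis $H$ and then applies $S_1[X]$, whereas you expand $X - \mathbb EX$ directly in $H^*$ and identify the coefficients by pairing with $h_k$. You are in fact slightly more explicit than the paper in verifying that $(S_1[X]h_j)_j$ is indeed the dual basis, which is a welcome touch.
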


\subsection*{Reminder on Bayes spaces}\label{reminder-on-bayes-spaces}
\addcontentsline{toc}{subsection}{Reminder on Bayes spaces}

The most recent and complete description of the Bayes spaces approach
can be found in \autocite{van_den_boogaart_bayes_2014}. For the present
work, we will identify the elements of a Bayes space, as defined by
\textcite{van_den_boogaart_bayes_2014}, with their Radon--Nikodym
derivative with respect to a reference measure \(\lambda\). This leads
to the following framework: let \((a,b)\) be a given interval of the
real line equipped with the Borel \(\sigma\)-algebra, let \(\lambda\) be
a finite reference measure on \((a,b)\). Let \(B^2(a,b)\) be the space
of square-log integrable probability densities
\(\frac{d\mu}{d\lambda},\) where \(\mu\) is a measure that is equivalent
to \(\lambda\), which means that \(\mu\) and \(\lambda\) are absolutely
continuous with respect to each other.

Note that the simplex \(\mathcal S^p\) used in compositional data
analysis can be seen as a Bayes space when considering, instead of an
interval \((a,b)\) equipped with the Lebesgue measure, the finite set
\(\{1, \dots, p+1\}\) equipped with the counting measure \autocite[see
Example 2 in][]{van_den_boogaart_bayes_2014}.

Let us first briefly recall the construction of the Hilbert space
structure of \(B^2(a,b)\). For a density \(f\) in \(B^2(a,b)\), the clr
transformation is defined by
\[\operatorname{clr} f(.) = \log f(.) - \frac{1}{\lambda(a,b)}\int_{a}^{b}\log
    f(t)d\lambda(t).\] The clr transformation maps an element of
\(B^2(a,b)\) into an element of the space \(L^2_0(a,b)\) of functions
which are square-integrable with respect to \(\lambda\) on \((a,b)\) and
whose integral is equal to zero. The clr inverse of a function \(u\) of
\(L^2_0(a,b)\) is \({B}^2\)-equivalent to \(\exp(u).\) More precisely,
if \(u \in L^2_0 (a,b),\)
\[\operatorname{clr}^{-1}(u)(.) = \frac{\exp u(.)}{\int_a^b \exp{u(t)  d\lambda(t)}}.\]
A vector space structure on \(B^2(a,b)\) is readily obtained by
transporting the vector space structure of \(L^2_0 (a,b)\) to
\(B^2(a,b)\) using the clr transformation and its inverse, see for
example \textcite{van_den_boogaart_bayes_2014}. Its operations, denoted
by \(\oplus\) and \(\odot\), are called perturbation (the ``addition'')
and powering (the ``scalar multiplication'').

For the definition of the inner product, we adopt a normalization
different from that of \textcite{egozcue_hilbert_2006} and of
\textcite{van_den_boogaart_bayes_2014} in the sense that we choose the
classical definition of inner product in \(L^2_0(a,b),\) for two
functions \(u\) and \(v\) in \(L^2_0 (a,b)\)
\begin{equation}\phantomsection\label{eq-inner}{\langle u,v \rangle_{L^2_0} = \int_a^b u(t)v(t)d\lambda(t),}\end{equation}
so that the corresponding inner product between two densities \(f\) and
\(g\) in the Bayes space \(B^2(a,b)\) is given by
\begin{equation}\phantomsection\label{eq-innerpdt}{\langle f ,g \rangle_{B^2} = \frac{1}{2 \lambda(a,b)}\int_a^b \int_a^b 
    (\log f(t) - \log f(s))
    (\log g(t) - \log g(s))
    d \lambda(t)d \lambda(s).}\end{equation} This normalization yields
an inner product which is homogeneous to the measure \(\lambda\) whereas
the \textcite{van_den_boogaart_bayes_2014} normalization is unitless.
Note that, for clarity and improved readability, the interval over which
the spaces \(L^2_0\) and \(B^2\) are defined are omitted from some
notations.

For a random density \(f(.)\) in the infinite-dimensional space
\({B}^2(a,b)\), the expectation and covariance operators can be defined
as follows, whenever they exist:
\begin{equation}\phantomsection\label{eq-covw2}{\begin{aligned}
\mathbb{E}^{B^2} [ f ] &= \operatorname{clr}^{-1} \mathbb{E} [ \operatorname{clr} f] \in {B}^2 (a,b) \\
\operatorname{Cov}^{B^2} [f] g &= \mathbb{E}^{B^2} \left[ \langle f \ominus \mathbb{E}^{B^2} [ f ], g \rangle_{B^2} \odot ( f \ominus \mathbb{E}^{B^2} [ f ]) \right] \nonumber \\
&= \operatorname{clr}^{-1}\mathbb{E} [ \langle f, g \rangle_{B^2} \operatorname{clr}f ]  \nonumber \\
&= \operatorname{clr}^{-1}\mathbb{E} [ \langle \operatorname{clr}f, \operatorname{clr} g \rangle_{{ L}^2_0} \operatorname{clr}f ] \quad \text{ for any } g \in B^2(a,b),
\end{aligned}}\end{equation} where \(\ominus\) is the negative
perturbation defined by \(f \ominus g = f \oplus [(-1) \odot g]\).

\subsection*{Reminder on compositional
splines}\label{reminder-on-compositional-splines}
\addcontentsline{toc}{subsection}{Reminder on compositional splines}

Following \autocite{machalova_compositional_2021}, in order to construct
a basis of \(E = \mathcal C^{\Delta \gamma}_d (a,b)\), which is required
in practice, it is convenient to first construct a basis of a
finite-dimensional spline subspace of \(L^2_0(a,b)\), which we then
transfer to \(B^2(a,b)\) by the inverse \(\operatorname{clr}\)
transformation. More precisely, \textcite{machalova_preprocessing_2016}
propose a basis of zero-integral splines in \(L^2_0(a,b)\) that are
called ZB-splines. The corresponding inverse images of these basis
functions by clr are called CB-splines.

A ZB-spline basis, denoted by \(Z = \{Z_1, \ldots, Z_{k+d-1}\},\) is
characterised by the spline of degree less than or equal to \(d\) (order
\(d+1\)), the number \(k\) and the positions of the so-called inside
knots \(\Delta \gamma = \{ \gamma_1, \ldots, \gamma_d\}\) in \((a,b)\).
The dimension of the resulting subspace \({\cal Z}_d^{\Delta \gamma}\)
is \(p=k+d\). Let \({\cal C}_d^{\Delta \gamma}\) be the subspace
generated by \(C=\{C_1, \ldots, C_p\}\) in \(B^2(a,b)\), where
\(C_j = \operatorname{clr}^{-1}(Z_j)\) are the back-transforms in
\(B^2(a,b)\) of the basis functions of the subspace
\({\cal Z}_d^{\Delta \gamma}\). The expansion of a density \(f\) in
\(B^2(a,b)\) is then given by
\begin{equation}\phantomsection\label{eq-expan_B2}{f(t) = \bigoplus_{j=1}^p [f]_{C_j} C_j(t),}\end{equation}
so that the corresponding expansion of its \(\operatorname{clr}\) in
\(L^2_0(a,b)\) is given by
\begin{equation}\phantomsection\label{eq-expan_L2}{\operatorname{clr} f(t) = \sum_{j=1}^p [f]_{C_j} Z_j(t).}\end{equation}

Note that the coordinates of \(f\) in the basis \(C\) are the same as
the coordinates of \(\operatorname{clr}(f)\) in the basis \(Z\), for
\(j=1, \ldots, p, [f]_{C_j} = [\operatorname{clr} f]_{Z_j}.\) Following
\textcite{machalova_preprocessing_2016}, the basis functions of
\({\cal Z}_d^{\Delta \gamma}\) can be written in a B-spline basis, see
\textcite{schumaker_spline_1981}, which is convenient to allow using
existing code for their computation.

\subsection*{Proofs}\label{proofs}
\addcontentsline{toc}{subsection}{Proofs}

\begin{proof}[Proposition~\ref{prp-isometry}]
First, let us verify that the problem
\(\operatorname{ICS} (X^{\mathcal F}, S_1^{\mathcal F}, S_2^{\mathcal F})\)
is well defined on \(F\):

\begin{enumerate}
\def\labelenumi{(\alph{enumi})}
\item
  The application \(\varphi\) is linear so it is measurable. Moreover,
  if \(X \in \mathcal E\), \(A \in \mathcal{GL} (F)\) and \(b \in F\),
  then \[\|\varphi(X)\|_F = \| X \|_E\] and
  \[A \varphi(X) + b = \varphi \left( \varphi^{-1} \circ A \circ \varphi (X) + \varphi^{-1}(b) \right) \text{ where } \varphi^{-1} \circ A \circ \varphi (X) + \varphi^{-1}(b) \in \mathcal E.\]
\item
  If \(X \in \mathcal E\),
  \(S_\ell^{\mathcal F} [\varphi (X)] = \varphi \circ S_\ell^{\mathcal E} [X] \circ \varphi^{-1}\)
  is a non-negative symmetric operator and if \(Y \in \mathcal E\)
  verifies \(\varphi (X) \sim \varphi (Y)\), then \(X \sim Y\) (because
  the Borel \(\sigma\)-algebra on \(E\) is the pullback by \(\varphi\)
  of that on \(F\)) so that, for \(\ell \in \{ 1, 2 \}\),
  \[S_\ell^{\mathcal F} [\varphi (X)] = \varphi \circ S_\ell^{\mathcal E} [X] \circ \varphi^{-1} = \varphi \circ S_\ell^{\mathcal E} [Y] \circ \varphi^{-1} = S_\ell^{\mathcal F} [\varphi (Y)]\]
  and \[\begin{gathered}
          S_\ell^{\mathcal F} [A \varphi(X) + b]
          = \varphi \circ S_\ell^{\mathcal E} [\varphi^{-1} \circ A \circ \varphi (X) + \varphi^{-1}(b)] \circ \varphi^{-1} \\
          = A \circ \varphi \circ S_\ell^{\mathcal E} [X] \circ \varphi^{-1} \circ A^*
          = A S_\ell^{\mathcal F} [\varphi(X)] A^*.
  \end{gathered}\]
\item
  The isometry \(\varphi\) preserves the linear rank of any finite
  sequence of vectors of \(E\).
\end{enumerate}

Now, \((H^{\mathcal E}, \Lambda)\) solves
\(\operatorname{ICS} (X^{\mathcal E}, S_1^{\mathcal E}, S_2^{\mathcal E})\)
in the space \(E\) if and only if \[\begin{aligned}
    &\left\{
    \begin{aligned}
        \langle S_1^{\mathcal E} [X] h_j^{\mathcal E}, h_{j'}^{\mathcal E} \rangle_E &= \delta_{jj'} \text{ for all } 1 \leq j, j' \leq p  \\
        \langle S_2^{\mathcal E} [X] h_j^{\mathcal E}, h_{j'}^{\mathcal E} \rangle_E &= \lambda_j \delta_{jj'} \text{ for all } 1 \leq j, j' \leq p 
    \end{aligned}
    \right. \\
    \iff &\left\{
    \begin{aligned}
        \langle \varphi (S_1^{\mathcal E} [X] h_j^{\mathcal E}), \varphi (h_{j'}^{\mathcal E}) \rangle_F &= \delta_{jj'} \text{ for all } 1 \leq j, j' \leq p  \\
        \langle \varphi (S_2^{\mathcal E} [X] h_j^{\mathcal E}), \varphi (h_{j'}^{\mathcal E}) \rangle_F &= \lambda_j \delta_{jj'} \text{ for all } 1 \leq j, j' \leq p 
    \end{aligned}
    \right. \\
    \iff &\left\{
    \begin{aligned}
        \langle S_1^{\mathcal F} [X] h_j^{\mathcal F}, h_{j'}^{\mathcal F} \rangle_F &= \delta_{jj'} \text{ for all } 1 \leq j, j' \leq p  \\
        \langle S_2^{\mathcal F} [X] h_j^{\mathcal F}, h_{j'}^{\mathcal F} \rangle_F &= \lambda_j \delta_{jj'} \text{ for all } 1 \leq j, j' \leq p,
    \end{aligned}
    \right.
\end{aligned}\] which is equivalent to the fact that
\((H^{\mathcal F}, \Lambda)\) solves
\(\operatorname{ICS} (X^{\mathcal F}, S_1^{\mathcal F}, S_2^{\mathcal F})\)
in the space \(F\).~
\end{proof}

\begin{proof}[Corollary~\ref{cor-isometry-covw}]
Let \(\ell \in \{ 1, 2 \}\) and \(\tilde X = X - \mathbb EX\). In order
to prove the equation Equation~\ref{eq-isometry-covw}, we will need to
prove that, for any \((x,y) \in F^2\),
\begin{equation}\phantomsection\label{eq-cor-isometry-covw}{\begin{aligned}
    \langle \varphi \circ \operatorname{Cov}_{w_\ell}^E [X] \circ \varphi^{-1} (x), y \rangle_F
    &= \langle \operatorname{Cov}_{w_\ell}^E [X] \varphi^{-1} (x), \varphi^{-1} (y) \rangle_E \\
    &= \mathbb E [ w_\ell (\| \operatorname{Cov}^E [X]^{-1/2} \tilde X \|_E)^2 \langle \tilde X, \varphi^{-1} (x) \rangle_E \langle \tilde X, \varphi^{-1} (y) \rangle_E ] \\
    &= \mathbb E [ w_\ell (\| \operatorname{Cov}^F [\varphi(X)]^{-1/2} \varphi(\tilde X) \|_F)^2 \langle \varphi(\tilde X), x \rangle_F \langle \varphi(\tilde X), y \rangle_F ]  \\
    \langle \varphi \circ \operatorname{Cov}_{w_\ell}^E [X] \circ \varphi^{-1} (x), y \rangle_F
    &= \langle \operatorname{Cov}_{w_\ell}^F [\varphi(X)] x, y \rangle_F.
\end{aligned}}\end{equation} It is enough to show the equality between
Equation~\ref{eq-cor-isometry-covw} (2) and
Equation~\ref{eq-cor-isometry-covw} (3), for which we treat differently
the cases \(w_\ell = 1\) and \(w_\ell \neq 1\). If \(w_\ell = 1\), there
is nothing to prove, so that the equation
Equation~\ref{eq-isometry-covw} holds for the covariance operator. If
\(w_\ell \neq 1\), we now know from the case \(w_\ell = 1\) that
\[\operatorname{Cov}^F [\varphi(X)]^{-1/2} = \varphi \circ \operatorname{Cov}^E [X]^{-1/2} \circ \varphi^{-1}\]
so that
\begin{equation}\phantomsection\label{eq-isometry-mahalanobis}{\| \operatorname{Cov}^E [X]^{-1/2} \tilde X \|_E
    = \| \operatorname{Cov}^F [\varphi(X)]^{-1/2} \varphi(\tilde X) \|_F}\end{equation}
Once the equation Equation~\ref{eq-isometry-covw} is proved, one only
needs to apply Proposition~\ref{prp-isometry} to finish the proof.~
\end{proof}

\begin{proof}[Corollary~\ref{cor-coord}]
Applying Corollary~\ref{cor-isometry-covw} to the isometry
\[\varphi_B: \left\{\begin{array}{ccc}
        (E, \langle \cdot, \cdot \rangle_E) &\rightarrow& (\mathbb R^p, \langle \cdot, \cdot \rangle_{\mathbb R^p}) \\
        x &\mapsto& G_B^{1/2} [x]_B,
    \end{array}\right.\] we obtain the equivalence between the following
assertions:

\begin{enumerate}
\def\labelenumi{(\roman{enumi})}
\item
  \((H, \Lambda)\) solves
  \(\operatorname{ICS} (X, \operatorname{Cov}_{w_1}, \operatorname{Cov}_{w_2})\)
  in the space \(E\)
\item
  \((G_B^{1/2} [H]_B, \Lambda)\) solves
  \(\operatorname{ICS} (G_B^{1/2} [X]_B, \operatorname{Cov}_{w_1}, \operatorname{Cov}_{w_2})\)
  in the space \(\mathbb R^p\),
\end{enumerate}

which gives the equivalence between the assertions (1) and (2). The
equivalence between the other assertions are deduced from the fact that
for any \(\ell \in \{ 1, 2 \}\) and any \((x,y) \in E^2\):
\begin{equation}\phantomsection\label{eq-cor-coord}{\begin{aligned}
\langle \operatorname{Cov}_{w_\ell}^E [X] x, y \rangle_E
&= \langle \operatorname{Cov}_{w_\ell} (G_B^{1/2} [X]_B) G_B^{1/2} [x]_B, G_B^{1/2} [y]_B \rangle_{\mathbb R^p} \\
&= \langle \operatorname{Cov}_{w_\ell} (G_B [X]_B) [x]_B, [y]_B \rangle_{\mathbb R^p} \\
&= \langle \operatorname{Cov}_{w_\ell} ([X]_B) G_B [x]_B, G_B [y]_B \rangle_{\mathbb R^p},
\end{aligned}}\end{equation} where Equation~\ref{eq-cor-coord} (1) comes
from the equation Equation~\ref{eq-isometry-covw}, and the equalities
Equation~\ref{eq-cor-coord} (2) and Equation~\ref{eq-cor-coord} (3) come
from the affine equivariance of \(\operatorname{Cov}_{w_\ell}\).~
\end{proof}

\begin{proof}[Proposition~\ref{prp-reconstruction}]
Let us decompose \(S_1[X]^{-1} (X - \mathbb EX)\) over the basis \(H\),
which is orthonormal in \((E, \langle \cdot, S_1[X] \cdot \rangle)\):
\[\begin{aligned}
S_1[X]^{-1} (X - \mathbb EX)
&= \sum_{j=1}^p \langle S_1[X]^{-1} (X - \mathbb EX), S_1[X] h_j \rangle h_j \\
&= \sum_{j=1}^p \langle X - \mathbb EX, h_j \rangle h_j \\
S_1[X]^{-1} (X - \mathbb EX)
&= \sum_{j=1}^p z_j h_j.
\end{aligned}\] The dual basis \(H^*\) of \(H\) is the one that
satisfies \(\langle h_j, h^*_{j'} \rangle = \delta_{jj'}\) for all
\(1 \leq j, j' \leq p\) and we know from the definition of ICS that this
holds for \((S_1[X] h_j)_{1 \leq j \leq p}\).~
\end{proof}

\subsection*{Code \& reproducibility}\label{code-reproducibility}
\addcontentsline{toc}{subsection}{Code \& reproducibility}

In order to implement coordinate-free ICS, we created the R package
\texttt{ICSFun}, which is used to generate the figures (see the code in
this HTML version of the article).

\printbibliography

@article{azcorra_unsupervised_2018,
	title = {Unsupervised {Scalable} {Statistical} {Method} for {Identifying} {Influential} {Users} in {Online} {Social} {Networks}},
	volume = {8},
	copyright = {2018 The Author(s)},
	issn = {2045-2322},
	url = {https://www.nature.com/articles/s41598-018-24874-2},
	doi = {10.1038/s41598-018-24874-2},
	abstract = {Billions of users interact intensively every day via Online Social Networks (OSNs) such as Facebook, Twitter, or Google+. This makes OSNs an invaluable source of information, and channel of actuation, for sectors like advertising, marketing, or politics. To get the most of OSNs, analysts need to identify influential users that can be leveraged for promoting products, distributing messages, or improving the image of companies. In this report we propose a new unsupervised method, Massive Unsupervised Outlier Detection (MUOD), based on outliers detection, for providing support in the identification of influential users. MUOD is scalable, and can hence be used in large OSNs. Moreover, it labels the outliers as of shape, magnitude, or amplitude, depending of their features. This allows classifying the outlier users in multiple different classes, which are likely to include different types of influential users. Applying MUOD to a subset of roughly 400 million Google+ users, it has allowed identifying and discriminating automatically sets of outlier users, which present features associated to different definitions of influential users, like capacity to attract engagement, capacity to attract a large number of followers, or high infection capacity.},
	language = {en},
	number = {1},
	urldate = {2025-04-03},
	journal = {Scientific Reports},
	author = {Azcorra, A. and Chiroque, L. F. and Cuevas, R. and Fernández Anta, A. and Laniado, H. and Lillo, R. E. and Romo, J. and Sguera, C.},
	month = may,
	year = {2018},
	note = {Publisher: Nature Publishing Group},
	keywords = {Applied mathematics, Computational science, Computer science, Statistics},
	pages = {6955},
}

@article{sun_functional_2011,
	title = {Functional {Boxplots}},
	volume = {20},
	issn = {1061-8600},
	url = {https://doi.org/10.1198/jcgs.2011.09224},
	doi = {10.1198/jcgs.2011.09224},
	abstract = {This article proposes an informative exploratory tool, the functional boxplot, for visualizing functional data, as well as its generalization, the enhanced functional boxplot. Based on the center outward ordering induced by band depth for functional data, the descriptive statistics of a functional boxplot are: the envelope of the 50\% central region, the median curve, and the maximum non-outlying envelope. In addition, outliers can be detected in a functional boxplot by the 1.5 times the 50\% central region empirical rule, analogous to the rule for classical boxplots. The construction of a functional boxplot is illustrated on a series of sea surface temperatures related to the El Niño phenomenon and its outlier detection performance is explored by simulations. As applications, the functional boxplot and enhanced functional boxplot are demonstrated on children growth data and spatio-temporal U.S. precipitation data for nine climatic regions, respectively. This article has supplementary material online.},
	number = {2},
	urldate = {2025-04-03},
	journal = {Journal of Computational and Graphical Statistics},
	author = {Sun, Ying and Genton, Marc G.},
	month = jan,
	year = {2011},
	note = {Publisher: ASA Website
\_eprint: https://doi.org/10.1198/jcgs.2011.09224},
	keywords = {Depth, Functional data, Growth data, Precipitation data, Space–time data, Visualization},
	pages = {316--334},
}

@article{lopez-pintado_concept_2009,
	title = {On the {Concept} of {Depth} for {Functional} {Data}},
	volume = {104},
	issn = {0162-1459},
	url = {https://doi.org/10.1198/jasa.2009.0108},
	doi = {10.1198/jasa.2009.0108},
	abstract = {The statistical analysis of functional data is a growing need in many research areas. In particular, a robust methodology is important to study curves, which are the output of many experiments in applied statistics. As a starting point for this robust analysis, we propose, analyze, and apply a new definition of depth for functional observations based on the graphic representation of the curves. Given a collection of functions, it establishes the “centrality” of an observation and provides a natural center-outward ordering of the sample curves. Robust statistics, such as the median function or a trimmed mean function, can be defined from this depth definition. Its finite-dimensional version provides a new depth for multivariate data that is computationally feasible and useful for studying high-dimensional observations. Thus, this new depth is also suitable for complex observations such as microarray data, images, and those arising in some recent marketing and financial studies. Natural properties of these new concepts are established and the uniform consistency of the sample depth is proved. Simulation results show that the corresponding depth based trimmed mean presents better performance than other possible location estimators proposed in the literature for some contaminated models. Data depth can be also used to screen for outliers. The ability of the new notions of depth to detect “shape” outliers is presented. Several real datasets are considered to illustrate this new concept of depth, including applications to microarray observations, weather data, and growth curves. Finally, through this depth, we generalize to functions the Wilcoxon rank sum test. It allows testing whether two groups of curves come from the same population. This functional rank test when applied to children growth curves shows different growth patterns for boys and girls.},
	number = {486},
	urldate = {2025-04-03},
	journal = {Journal of the American Statistical Association},
	author = {López-Pintado, Sara and Romo, Juan},
	month = jun,
	year = {2009},
	note = {Publisher: ASA Website
\_eprint: https://doi.org/10.1198/jasa.2009.0108},
	keywords = {Data depth, Functional data, Rank test for functions},
	pages = {718--734},
}

@misc{ojo_fdaoutlier_2023,
	title = {fdaoutlier: {Outlier} {Detection} {Tools} for {Functional} {Data} {Analysis}},
	copyright = {GPL-3},
	shorttitle = {fdaoutlier},
	url = {https://cran.r-project.org/web/packages/fdaoutlier/index.html},
	abstract = {A collection of functions for outlier detection in functional data analysis. Methods implemented include directional outlyingness by Dai and Genton (2019) {\textless}doi:10.1016/j.csda.2018.03.017{\textgreater}, MS-plot by Dai and Genton (2018) {\textless}doi:10.1080/10618600.2018.1473781{\textgreater}, total variation depth and modified shape similarity index by Huang and Sun (2019) {\textless}doi:10.1080/00401706.2019.1574241{\textgreater}, and sequential transformations by Dai et al. (2020) {\textless}doi:10.1016/j.csda.2020.106960 among others. Additional outlier detection tools and depths for functional data like functional boxplot, (modified) band depth etc., are also available.},
	urldate = {2025-03-19},
	author = {Ojo, Oluwasegun Taiwo and Lillo, Rosa Elvira and Anta, Antonio Fernandez},
	month = sep,
	year = {2023},
	keywords = {FunctionalData},
}

@misc{murph_deboinr_2023,
	title = {{DeBoinR}: {Box}-{Plots} and {Outlier} {Detection} for {Probability} {Density} {Functions}},
	copyright = {MIT + file LICENSE},
	shorttitle = {{DeBoinR}},
	url = {https://cran.r-project.org/web/packages/DeBoinR/},
	abstract = {Orders a data-set consisting of an ensemble of probability density functions on the same x-grid. Visualizes a box-plot of these functions based on the notion of distance determined by the user. Reports outliers based on the distance chosen and the scaling factor for an interquartile range rule. For further details, see: Alexander C. Murph et al. (2023). "Visualization and Outlier Detection for Probability Density Function Ensembles." {\textless}https://sirmurphalot.github.io/publications{\textgreater}.},
	urldate = {2024-03-27},
	author = {Murph, Alexander C. and Strait, Justin D.},
	month = dec,
	year = {2023},
}

@article{ojo_detecting_2022,
	title = {Detecting and classifying outliers in big functional data},
	volume = {16},
	issn = {1862-5355},
	doi = {10.1007/s11634-021-00460-9},
	abstract = {We propose two new outlier detection methods, for identifying and classifying different types of outliers in (big) functional data sets. The proposed methods are based on an existing method called Massive Unsupervised Outlier Detection (MUOD). MUOD detects and classifies outliers by computing for each curve, three indices, all based on the concept of linear regression and correlation, which measure outlyingness in terms of shape, magnitude and amplitude, relative to the other curves in the data. ‘Semifast-MUOD’, the first method, uses a sample of the observations in computing the indices, while ‘Fast-MUOD’, the second method, uses the point-wise or \$\$L\_1\$\$median in computing the indices. The classical boxplot is used to separate the indices of the outliers from those of the typical observations. Performance evaluation of the proposed methods using simulated data show significant improvements compared to MUOD, both in outlier detection and computational time. We show that Fast-MUOD is especially well suited to handling big and dense functional datasets with very small computational time compared to other methods. Further comparisons with some recent outlier detection methods for functional data also show superior or comparable outlier detection accuracy of the proposed methods. We apply the proposed methods on weather, population growth, and video data.},
	language = {en},
	number = {3},
	urldate = {2024-12-27},
	journal = {Advances in Data Analysis and Classification},
	author = {Ojo, Oluwasegun Taiwo and Fernández Anta, Antonio and Lillo, Rosa E. and Sguera, Carlo},
	month = sep,
	year = {2022},
	keywords = {62-08 Computational methods for problems pertaining to statistics, 62P99 Application of Statistics, 62R07 Statistical aspects of big data and data science, 62R10 Functional data analysis, Fast-MUOD, Functional data analysis, MUOD, Outlier detection, Semifast-MUOD},
	pages = {725--760},
}

@article{ferraty_functional_2002,
	title = {The {Functional} {Nonparametric} {Model} and {Application} to {Spectrometric} {Data}},
	volume = {17},
	issn = {1613-9658},
	doi = {10.1007/s001800200126},
	abstract = {The aim of this paper is to present a nonparametric regression model with scalar response when the explanatory variables are curves. In this context, the crucial problem of dimension reduction is overriden by the use of an implicit fractal dimension hypothesis. For such a functional nonparametric regression model we introduce and study both practical and theoretical aspects of some kernel type estimator. After a simulation study, it is shown how this procedure is well adapted to some spectrometric data set. Asymptotic results are described and in conclusion it turns out that this method combines advantages of easy implementation and good mathematical properties.},
	language = {en},
	number = {4},
	urldate = {2024-12-30},
	journal = {Computational Statistics},
	author = {Ferraty, Frédéric and Vieu, Philippe},
	month = dec,
	year = {2002},
	keywords = {Dimension Reduction, Functional Data, Nonparametric Estimate, Regression Model, Spectrometric Data},
	pages = {545--564},
}

@incollection{rousseeuw_multivariate_1985,
	address = {Dordrecht},
	title = {Multivariate {Estimation} with {High} {Breakdown} {Point}},
	isbn = {978-94-010-8901-2 978-94-009-5438-0},
	language = {en},
	urldate = {2024-12-30},
	booktitle = {Mathematical {Statistics} and {Applications}},
	publisher = {Springer Netherlands},
	author = {Rousseeuw, Peter},
	editor = {Grossmann, Wilfried and Pflug, Georg Ch. and Vincze, István and Wertz, Wolfgang},
	year = {1985},
	doi = {10.1007/978-94-009-5438-0_20},
	pages = {283--297},
}

@article{dai_functional_2020,
	title = {Functional outlier detection and taxonomy by sequential transformations},
	volume = {149},
	issn = {0167-9473},
	doi = {10.1016/j.csda.2020.106960},
	abstract = {Functional data analysis can be seriously impaired by abnormal observations, which can be classified as either magnitude or shape outliers based on their way of deviating from the bulk of data. Identifying magnitude outliers is relatively easy, while detecting shape outliers is much more challenging. We propose turning the shape outliers into magnitude outliers through data transformation and detecting them using the functional boxplot. Besides easing the detection procedure, applying several transformations sequentially provides a reasonable taxonomy for the flagged outliers. A joint functional ranking, which consists of several transformations, is also defined here. Simulation studies are carried out to evaluate the performance of the proposed method using different functional depth notions. Interesting results are obtained in several practical applications.},
	urldate = {2024-08-05},
	journal = {Computational Statistics \& Data Analysis},
	author = {Dai, Wenlin and Mrkvička, Tomáš and Sun, Ying and Genton, Marc G.},
	month = sep,
	year = {2020},
	keywords = {Data transformation, Functional boxplot, Magnitude outliers, Multivariate functional data, Shape outliers},
	pages = {106960},
}

@book{aggarwal_outlier_2017,
	address = {Cham},
	title = {Outlier {Analysis}},
	copyright = {http://www.springer.com/tdm},
	isbn = {978-3-319-47577-6 978-3-319-47578-3},
	language = {en},
	urldate = {2024-08-05},
	publisher = {Springer International Publishing},
	author = {Aggarwal, Charu C.},
	year = {2017},
	doi = {10.1007/978-3-319-47578-3},
	keywords = {Anomaly detection, Artificial intelligence, Data mining, Machine learning, Matrix factorization, Network outlier detection, Novelty detection, Outlier Analysis, Outlier detection, Outlier ensembles, Spatial outliers, Streaming outlier detection, Temporal anomaly detection, Temporal outlier detection, Text outliers},
}

@article{stojanovic_trends_2020,
	title = {Trends and {Extremes} of {Drought} {Episodes} in {Vietnam} {Sub}-{Regions} during 1980–2017 at {Different} {Timescales}},
	volume = {12},
	copyright = {http://creativecommons.org/licenses/by/3.0/},
	issn = {2073-4441},
	doi = {10.3390/w12030813},
	abstract = {This study investigated the temporal occurrence of dry conditions in the seven climatic sub-regions of Vietnam during the 1980–2017 period. This assessment was performed using the Standardized Precipitation Evapotranspiration Index (SPEI) and the Standardized Precipitation Index (SPI) at 1 to 24 months timescales. Results show that the main periods of extreme drought occurred simultaneously throughout the country in 1992–1993 and 2003–2004, except for 2015–2016, when it was not identified in the southern region. In addition, a slight temporal lag was identified latitudinally (north–south) at the beginning of dry conditions, revealing the largest difference between the northern and southern regions. A positive trend in the time series of both indices (SPEI and SPI) prevailed in all sub-regions, with the SPEI minus SPI difference always being negative, suggesting the importance of temperature and evapotranspiration for this trend. Further detailed analyses were then performed using SPEI at 1-month and 12-months timescales for all climate sub-regions, as well as the main indicators to characterize duration and severity. Results show that the number of drought episodes did not vary much between regions, but they did vary in duration and severity at the annual scale. Moreover, changes in the soil root zone are largely associated with dry and wet conditions not only from season to season, but also in longer accumulation periods and more strongly in the northern regions of Vietnam. Indeed, a study of the most severe drought episodes also revealed the occurrence of negative anomalies of the root-soil moisture in the subsequent four or more months. Dynamic atmospheric conditions associated with the peak of most severe drought episodes show the crucial role of subsidence of dry air in the middle and high atmosphere, which prevents convection in the lower troposphere. Finally, the linkages between drought conditions in Vietnam and large-scale atmospheric and oceanic teleconnection patterns were revealed to be quite different among northern and southern sub-regions. During the positive phase of El Niño–Southern Oscillation (ENSO), drought episodes at different timescales were identified in the southern climate sub-regions, while the negative phase was associated with drought conditions in the northern regions.},
	language = {en},
	number = {3},
	urldate = {2024-06-26},
	journal = {Water},
	author = {Stojanovic, Milica and Liberato, Margarida L. R. and Sorí, Rogert and Vázquez, Marta and Phan-Van, Tan and Duongvan, Hieu and Hoang Cong, Tin and Nguyen, Phuong N. B. and Nieto, Raquel and Gimeno, Luis},
	month = mar,
	year = {2020},
	note = {Number: 3},
	keywords = {SPEI, Vietnam, droughts, soil moisture, teleconnection patterns},
	pages = {813},
}

@article{silverman_estimation_1982,
	title = {On the {Estimation} of a {Probability} {Density} {Function} by the {Maximum} {Penalized} {Likelihood} {Method}},
	volume = {10},
	issn = {0090-5364, 2168-8966},
	doi = {10.1214/aos/1176345872},
	abstract = {A class of probability density estimates can be obtained by penalizing the likelihood by a functional which depends on the roughness of the logarithm of the density. The limiting case of the estimates as the amount of smoothing increases has a natural form which makes the method attractive for data analysis and which provides a rationale for a particular choice of roughness penalty. The estimates are shown to be the solution of an unconstrained convex optimization problem, and mild natural conditions are given for them to exist. Rates of consistency in various norms and conditions for asymptotic normality and approximation by a Gaussian process are given, thus breaking new ground in the theory of maximum penalized likelihood density estimation.},
	number = {3},
	urldate = {2023-12-05},
	journal = {The Annals of Statistics},
	author = {Silverman, B. W.},
	month = sep,
	year = {1982},
	keywords = {46E35, 62E20, 62G05, 65D10, Gaussian process, Probability density estimate, Sobolev space, asymptotic normality, consitency, convex optimization existence and uniqueness, data analysis, penalized likelihood, rates, roughness penalty, smoothing, strong approximation, ‎reproducing kernel Hilbert ‎space},
	pages = {795--810},
}

@article{lei_functional_2023,
	title = {Functional {Outlier} {Detection} for {Density}-{Valued} {Data} with {Application} to {Robustify} {Distribution}-to-{Distribution} {Regression}},
	volume = {65},
	issn = {0040-1706},
	doi = {10.1080/00401706.2022.2164063},
	abstract = {Distributional data analysis, concerned with the statistical analysis of data objects consisting of random probability distributions in the framework of functional data analysis (FDA), has received considerable interest in recent years and is increasingly applied in various fields including engineering. Outlier detection and robustness are of great practical interest; however, these aspects remain unexplored for distributional data. To this end, this study focuses on density-valued outlier detection and its application in robust distributional regression. Specifically, we propose a transformation-based approach for single-dataset outlying density detection with an emphasis on converting the less detectable shape outliers to easily detectable magnitude outliers. We also propose a distributional regression-based approach for detecting the abnormal associations of the density-valued two-tuples associated with two datasets. Then, the proposed outlier detection methods are applied to robustify a distribution-to-distribution regression method used in engineering, and we develop a robust estimator for the regression operator by downweighting the detected outliers. The proposed methods are validated and evaluated via extensive simulation studies. The relevant results reveal the superiority of our method over other competitors in distributional outlier detection. A case study in structural health monitoring demonstrates the great potential of our proposal in engineering applications. Supplementary materials for this article are available online.},
	number = {3},
	urldate = {2024-03-27},
	journal = {Technometrics},
	author = {Lei, Xinyi and Chen, Zhicheng and Li, Hui},
	month = jul,
	year = {2023},
	keywords = {Abnormal association, Distributional data, Functional data analysis, Reproducing kernel Hilbert space, Structural health monitoring, Transformation},
	pages = {351--362},
}

@article{virta_independent_2020,
	title = {Independent component analysis for multivariate functional data},
	volume = {176},
	doi = {10.1016/j.jmva.2019.104568},
	journal = {Journal of Multivariate Analysis},
	author = {Virta, Joni and Li, Bing and Nordhausen, Klaus and Oja, Hannu},
	year = {2020},
	pages = {104568},
}

@article{archimbaud_detection_2018,
	title = {Détection non-supervisée d’observations atypiques en contrôle de qualité: un survol},
	volume = {159},
	number = {3},
	journal = {Journal de la Société Française de Statistique},
	author = {Archimbaud, Aurore},
	year = {2018},
	pages = {1--39},
}

@book{pawlowskyglahn_modeling_2015,
	edition = {1},
	title = {Modeling and {Analysis} of {Compositional} {Data}},
	isbn = {978-1-118-44306-4 978-1-119-00314-4},
	abstract = {Modeling and Analysis of Compositional Data presents a practical and comprehensive introduction to the analysis of compositional data along with numerous examples to illustrate both theory and application of each method. Based upon short courses delivered by the authors, it provides a complete and current compendium of fundamental to advanced methodologies along with exercises at the end of each chapter to improve understanding, as well as data and a solutions manual which is available on an accompanying website. Complementing Pawlowsky-Glahns earlier collective text that provides an overview of the state-of-the-art in this field, Modeling and Analysis of Compositional Data fills a gap in the literature for a much-needed manual for teaching, self learning or consulting.},
	language = {en},
	urldate = {2024-07-19},
	publisher = {Wiley},
	author = {Pawlowsky‐Glahn, Vera and Egozcue, Juan José and Tolosana‐Delgado, Raimon},
	month = mar,
	year = {2015},
	doi = {10.1002/9781119003144},
}

@misc{nordhausen_icsoutlier_2023,
	title = {{ICSOutlier}: {Outlier} {Detection} {Using} {Invariant} {Coordinate} {Selection}},
	copyright = {GPL-2 {\textbar} GPL-3 [expanded from: GPL (≥ 2)]},
	shorttitle = {{ICSOutlier}},
	url = {https://cran.r-project.org/web/packages/ICSOutlier/},
	abstract = {Multivariate outlier detection is performed using invariant coordinates where the package offers different methods to choose the appropriate components. ICS is a general multivariate technique with many applications in multivariate analysis. ICSOutlier offers a selection of functions for automated detection of outliers in the data based on a fitted ICS object or by specifying the dataset and the scatters of interest. The current implementation targets data sets with only a small percentage of outliers.},
	urldate = {2024-07-19},
	author = {Nordhausen, Klaus and Archimbaud, Aurore and Ruiz-Gazen, Anne},
	month = dec,
	year = {2023},
}

@phdthesis{rendon_aguirre_clustering_2017,
	title = {Clustering in high dimension for multivariate and functional data using extreme kurtosis projections},
	abstract = {Cluster analysis is a problem that consists of the analysis of the existence of clusters in a multivariate sample. This analysis is performed by algorithms that differ significantly in their notion of what constitutes a cluster and how to find them efficiently. In this thesis we are interested in large data problems and therefore we consider algorithms that use dimension reduction techniques for the identification of interesting structures in large data sets. Particularly in those algorithms that use the kurtosis coefficient to detect the clusters present in the data. The thesis extends the work of Peña and Prieto (2001a) of identifying clusters in multivariate data using the univariate projections of the sample data on the directions that minimize and maximize the kurtosis coefficient of the projected data, and Peña et al. (2010) who used the eigenvalues of a kurtosis matrix to reduce the dimension. This thesis has two main contributions: First, we prove that the extreme kurtosis projections have some optimality properties for mixtures of normal distributions and we propose an algorithm to identify clusters when the data dimension and the number of clusters present in the sample are high. The good performance of the algorithm is shown through a simulations study where it is compared it with MCLUST, K-means and CLARA methods. Second, we propose the extension of multivariate kurtosis for functional data, and we analyze some of its properties for clustering. Additionally, we propose an algorithm based on kurtosis projections for functional data. Its good properties are compared with the results obtained by Functional Principal Components, Functional K-means and FunClust method. The thesis is structured as follows: Chapter 1 is an introductory Chapter where we will review some theoretical concepts that will be used throughout the thesis. In Chapter 2 we review in detail the concept of kurtosis. We study the properties of kurtosis. Give a detailed description of some algorithms proposed in the literature that use the kurtosis coefficient to detect the clusters present in the data. In Chapter 3 we study the directions that may be interesting for the detection of several clusters in the sample and we analyze how the extreme kurtosis directions are related to these directions. In addition, we present a clustering algorithm for high-dimensional data using extreme kurtosis directions. In Chapter 4 we introduce an extension of the multivariate kurtosis for the functional data and we analyze the properties of this measure regarding the identification of clusters. In addition, we present a clustering algorithm for functional data using extreme kurtosis directions. We finish with some remarks and conclusions in the final Chapter.},
	language = {eng},
	urldate = {2024-03-14},
	school = {Universidad Carlos III de Madrid},
	author = {Rendón Aguirre, Janeth Carolina},
	month = may,
	year = {2017},
	doi = {10016/25286},
}

@article{alfons_tandem_2024,
	title = {Tandem clustering with invariant coordinate selection},
	issn = {2452-3062},
	doi = {10.1016/j.ecosta.2024.03.002},
	abstract = {For multivariate data, tandem clustering is a well-known technique aiming to improve cluster identification through initial dimension reduction. Nevertheless, the usual approach using principal component analysis (PCA) has been criticized for focusing solely on inertia so that the first components do not necessarily retain the structure of interest for clustering. To address this limitation, a new tandem clustering approach based on invariant coordinate selection (ICS) is proposed. By jointly diagonalizing two scatter matrices, ICS is designed to find structure in the data while providing affine invariant components. Certain theoretical results have been previously derived and guarantee that under some elliptical mixture models, the group structure can be highlighted on a subset of the first and/or last components. However, ICS has garnered minimal attention within the context of clustering. Two challenges associated with ICS include choosing the pair of scatter matrices and selecting the components to retain. For effective clustering purposes, it is demonstrated that the best scatter pairs consist of one scatter matrix capturing the within-cluster structure and another capturing the global structure. For the former, local shape or pairwise scatters are of great interest, as is the minimum covariance determinant (MCD) estimator based on a carefully chosen subset size that is smaller than usual. The performance of ICS as a dimension reduction method is evaluated in terms of preserving the cluster structure in the data. In an extensive simulation study and empirical applications with benchmark data sets, various combinations of scatter matrices as well as component selection criteria are compared in situations with and without outliers. Overall, the new approach of tandem clustering with ICS shows promising results and clearly outperforms the PCA-based approach.},
	urldate = {2024-07-18},
	journal = {Econometrics and Statistics},
	author = {Alfons, Andreas and Archimbaud, Aurore and Nordhausen, Klaus and Ruiz-Gazen, Anne},
	month = mar,
	year = {2024},
	keywords = {Linear discriminant analysis, Minimum covariance determinant, Principal component analysis, Robustness, Scatter matrices, means},
}

@article{murph_visualisation_2024,
	title = {Visualisation and outlier detection for probability density function ensembles},
	volume = {13},
	copyright = {Published 2024. This article is a U.S. Government work and is in the public domain in the USA. Stat published by John Wiley \& Sons Ltd.},
	issn = {2049-1573},
	doi = {10.1002/sta4.662},
	abstract = {Exploratory data analysis (EDA) for functional data—data objects where observations are entire functions—is a difficult problem that has seen significant attention in recent literature. This surge in interest is motivated by the ubiquitous nature of functional data, which are prevalent in applications across fields such as meteorology, biology, medicine and engineering. Empirical probability density functions (PDFs) can be viewed as constrained functional data objects that must integrate to one and be nonnegative. They show up in contexts such as yearly income distributions, zooplankton size structure in oceanography and in connectivity patterns in the brain, among others. While PDF data are certainly common in modern research, little attention has been given to EDA specifically for PDFs. In this paper, we extend several methods for EDA on functional data for PDFs and compare them on simulated data that exhibit different types of variation, designed to mimic that seen in real-world applications. We then use our new methods to perform EDA on the breakthrough curves observed in gas transport simulations for underground fracture networks.},
	language = {en},
	number = {2},
	urldate = {2024-04-12},
	journal = {Stat},
	author = {Murph, Alexander C. and Strait, Justin D. and Moran, Kelly R. and Hyman, Jeffrey D. and Stauffer, Philip H.},
	year = {2024},
	keywords = {band depth, discrete fracture networks, functional boxplot, probability densities},
	pages = {e662},
}

@book{schumaker_spline_1981,
	edition = {3},
	title = {Spline {Functions}: {Basic} {Theory}},
	isbn = {978-0-521-70512-7 978-0-511-61899-4},
	shorttitle = {Spline {Functions}},
	abstract = {This classic work continues to offer a comprehensive treatment of the theory of univariate and tensor-product splines. It will be of interest to researchers and students working in applied analysis, numerical analysis, computer science, and engineering.  The material covered provides the reader with the necessary tools for understanding the many applications of splines in such diverse areas as approximation theory, computer-aided geometric design, curve and surface design and fitting, image processing, numerical solution of differential equations, and increasingly in business and the biosciences. This new edition includes a supplement outlining some of the major advances in the theory since 1981, and some 250 new references. It can be used as the main or supplementary text for courses in splines, approximation theory or numerical analysis.},
	urldate = {2023-10-20},
	publisher = {Cambridge University Press},
	author = {Schumaker, Larry},
	year = {1981},
	doi = {10.1017/CBO9780511618994},
}

@article{egozcue_hilbert_2006,
	title = {Hilbert {Space} of {Probability} {Density} {Functions} {Based} on {Aitchison} {Geometry}},
	volume = {22},
	copyright = {http://www.springer.com/tdm},
	issn = {1439-8516, 1439-7617},
	doi = {10.1007/s10114-005-0678-2},
	abstract = {The set of probability functions is a convex subset of L1 and it does not have a linear space structure when using ordinary sum and multiplication by real constants. Moreover, diﬃculties arise when dealing with distances between densities. The crucial point is that usual distances are not invariant under relevant transformations of densities. To overcome these limitations, Aitchison’s ideas on compositional data analysis are used, generalizing perturbation and power transformation, as well as the Aitchison inner product, to operations on probability density functions with support on a ﬁnite interval. With these operations at hand, it is shown that the set of bounded probability density functions on ﬁnite intervals is a pre-Hilbert space. A Hilbert space of densities, whose logarithm is square-integrable, is obtained as the natural completion of the pre-Hilbert space.},
	language = {en},
	number = {4},
	urldate = {2024-04-08},
	journal = {Acta Mathematica Sinica, English Series},
	author = {Egozcue, J. J. and Díaz–Barrero, J. L. and Pawlowsky–Glahn, V.},
	month = jul,
	year = {2006},
	pages = {1175--1182},
}

@article{tyler_invariant_2009,
	title = {Invariant co-ordinate selection},
	volume = {71},
	issn = {13697412, 14679868},
	doi = {10.1111/j.1467-9868.2009.00706.x},
	abstract = {A general method for exploring multivariate data by comparing different estimates of multivariate scatter is presented. The method is based on the eigenvalue–eigenvector decomposition of one scatter matrix relative to another. In particular, it is shown that the eigenvectors can be used to generate an afﬁne invariant co-ordinate system for the multivariate data. Consequently, we view this method as a method for invariant co-ordinate selection. By plotting the data with respect to this new invariant co-ordinate system, various data structures can be revealed. For example, under certain independent components models, it is shown that the invariant coordinates correspond to the independent components. Another example pertains to mixtures of elliptical distributions. In this case, it is shown that a subset of the invariant co-ordinates corresponds to Fisher’s linear discriminant subspace, even though the class identiﬁcations of the data points are unknown. Some illustrative examples are given.},
	language = {en},
	number = {3},
	urldate = {2022-10-13},
	journal = {Journal of the Royal Statistical Society: Series B (Statistical Methodology)},
	author = {Tyler, David E. and Critchley, Frank and Dümbgen, Lutz and Oja, Hannu},
	month = jun,
	year = {2009},
	pages = {549--592},
}

@article{tyler_note_2010,
	title = {A note on multivariate location and scatter statistics for sparse data sets},
	volume = {80},
	issn = {0167-7152},
	doi = {10.1016/j.spl.2010.05.006},
	abstract = {For a p-dimensional data set of size n≤p+1 in general position, it is shown that the only affine equivariant multivariate location statistic is the sample mean vector and that any affine equivariant scatter matrix must be proportional to the sample covariance matrix, with the proportionality constant not being dependent on the data.},
	number = {17},
	urldate = {2024-03-13},
	journal = {Statistics \& Probability Letters},
	author = {Tyler, David E.},
	month = sep,
	year = {2010},
	keywords = {Affine equivariance, Robust multivariate statistics, Sparse data},
	pages = {1409--1413},
}

@book{eubank_nonparametric_2014,
	address = {Boca Raton},
	edition = {2},
	title = {Nonparametric {Regression} and {Spline} {Smoothing}},
	isbn = {978-0-429-18267-9},
	abstract = {Provides a unified account of the most popular approaches to nonparametric regression smoothing. This edition contains discussions of boundary corrections for trigonometric series estimators; detailed asymptotics for polynomial regression; testing goodness-of-fit; estimation in partially linear models; practical aspects, problems and methods for co},
	publisher = {CRC Press},
	author = {Eubank, Randall L.},
	month = apr,
	year = {2014},
	doi = {10.1201/9781482273144},
}

@article{van_den_boogaart_bayes_2014,
	title = {Bayes {Hilbert} {Spaces}},
	volume = {56},
	issn = {13691473},
	doi = {10.1111/anzs.12074},
	abstract = {A Bayes linear space is a linear space of equivalence classes of proportional  -ﬁnite measures, including probability measures. Measures are identiﬁed with their density functions. Addition is given by Bayes’ rule and substraction by Radon–Nikodym derivatives. The present contribution shows the subspace of square-log-integrable densities to be a Hilbert space, which can include probability and inﬁnite measures, measures on the whole real line or discrete measures. It extends the ideas from the Hilbert space of densities on a ﬁnite support towards Hilbert spaces on general measure spaces. It is also a generalisation of the Euclidean structure of the simplex, the sample space of random compositions. In this framework, basic notions of mathematical statistics get a simple algebraic interpretation. A key tool is the centred-log-ratio transformation, a generalization of that used in compositional data analysis, which maps the Hilbert space of measures into a subspace of square-integrable functions. As a consequence of this structure, distances between densities, orthonormal bases, and Fourier series representing measures become available. As an application, Fourier series of normal distributions and distances between them are derived, and an example related to grain size distributions is presented. The geometry of the sample space of random compositions, known as Aitchison geometry of the simplex, is obtained as a particular case of the Hilbert space when the measures have discrete and ﬁnite support. Key words: Aitchison geometry of the simplex; distance between measures; Fourier coefﬁcients; inﬁnite measures; normal distribution; perturbation; probability measures.},
	language = {en},
	number = {2},
	urldate = {2023-09-15},
	journal = {Australian \& New Zealand Journal of Statistics},
	author = {Van Den Boogaart, Karl Gerald and Egozcue, Juan José and Pawlowsky-Glahn, Vera},
	month = jun,
	year = {2014},
	pages = {171--194},
}

@article{machalova_preprocessing_2016,
	title = {Preprocessing of centred logratio transformed density functions using smoothing splines},
	volume = {43},
	issn = {0266-4763, 1360-0532},
	doi = {10.1080/02664763.2015.1103706},
	language = {en},
	number = {8},
	urldate = {2024-03-12},
	journal = {Journal of Applied Statistics},
	author = {Machalová, J. and Hron, K. and Monti, G.S.},
	month = jun,
	year = {2016},
	pages = {1419--1435},
}

@article{archimbaud_ics_2018,
	title = {{ICS} for multivariate outlier detection with application to quality control},
	volume = {128},
	issn = {01679473},
	doi = {10.1016/j.csda.2018.06.011},
	abstract = {In high reliability standards fields such as automotive, avionics or aerospace, the detection of anomalies is crucial. An efficient methodology for automatically detecting multivariate outliers is introduced. It takes advantage of the remarkable properties of the Invariant Coordinate Selection (ICS) method which leads to an affine invariant coordinate system in which the Euclidian distance corresponds to a Mahalanobis Distance (MD) in the original coordinates. The limitations of MD are highlighted using theoretical arguments in a context where the dimension of the data is large. Owing to the resulting dimension reduction, ICS is expected to improve the power of outlier detection rules such as MD-based criteria. The paper includes practical guidelines for using ICS in the context of a small proportion of outliers. The use of the regular covariance matrix and the so called matrix of fourth moments as the scatter pair is recommended. This choice combines the simplicity of implementation together with the possibility to derive theoretical results. The selection of relevant invariant components through parallel analysis and normality tests is addressed. A simulation study confirms the good properties of the proposal and provides a comparison with Principal Component Analysis and MD. The performance of the proposal is also evaluated on two real data sets using a user-friendly R package accompanying the paper.},
	language = {en},
	urldate = {2022-10-13},
	journal = {Computational Statistics \& Data Analysis},
	author = {Archimbaud, Aurore and Nordhausen, Klaus and Ruiz-Gazen, Anne},
	month = dec,
	year = {2018},
	pages = {184--199},
}

@article{loperfido_theoretical_2021,
	title = {Some theoretical properties of two kurtosis matrices, with application to invariant coordinate selection},
	volume = {186},
	issn = {0047-259X},
	doi = {10.1016/j.jmva.2021.104809},
	abstract = {Invariant coordinate selection (ICS) is a multivariate statistical method aimed at detecting data structures by means of the simultaneous diagonalization of two scatter matrices. Statistical applications of ICS include cluster analysis, independent component analysis, outlier detection, regression analysis and projection pursuit. Scatter matrices based on fourth-order moments often appear in ICS, partly due to their known asymptotic behaviour. This paper focuses on their theoretical properties, with special emphasis on symmetric distributions, finite mixtures and stochastic processes. Theoretical results highlight both appealing properties and limitations of kurtosis-based ICS as a tool for detecting data structures.},
	urldate = {2024-03-13},
	journal = {Journal of Multivariate Analysis},
	author = {Loperfido, Nicola},
	month = nov,
	year = {2021},
	keywords = {Exchangeability, Kurtosis, Mixtures, Reversibility, Symmetry},
	pages = {104809},
}

@article{machalova_compositional_2021,
	title = {Compositional splines for representation of density functions},
	volume = {36},
	issn = {0943-4062, 1613-9658},
	doi = {10.1007/s00180-020-01042-7},
	abstract = {In the context of functional data analysis, probability density functions as non-negative functions are characterized by speciﬁc properties of scale invariance and relative scale which enable to represent them with the unit integral constraint without loss of information. On the other hand, all these properties are a challenge when the densities need to be approximated with spline functions, including construction of the respective spline basis. The Bayes space methodology of density functions enables to express them as real functions in the standard L2 space using the centered log-ratio transformation. The resulting functions satisfy the zero integral constraint. This is a key to propose a new spline basis, holding the same property, and consequently to build a new class of spline functions, called compositional splines, which can approximate probability density functions in a consistent way. The paper provides also construction of smoothing compositional splines and possible orthonormalization of the spline basis which might be useful in some applications. Finally, statistical processing of densities using the new approximation tool is demonstrated in case of simplicial functional principal component analysis with anthropometric data.},
	language = {en},
	number = {2},
	urldate = {2024-03-12},
	journal = {Computational Statistics},
	author = {Machalová, J. and Talská, Renáta and Hron, Karel and Gába, Aleš},
	month = jun,
	year = {2021},
	pages = {1031--1064},
}

@article{archimbaud_ics_2022,
	title = {{ICS} for multivariate functional anomaly detection with applications to predictive maintenance and quality control},
	issn = {2452-3062},
	doi = {10.1016/j.ecosta.2022.03.003},
	abstract = {Multivariate functional anomaly detection has received a large amount of attention recently. Accounting both the time dimension and the correlations between variables is challenging due to the existence of different types of outliers and the dimension of the data. In the context of predictive maintenance and quality control, data sets often contain a large number of functional variables. However, most of the existing methods focus on a small number of functional variables. Moreover, in fields that have high reliability standards, detecting a small number of potential multivariate functional outliers with as few false positives as possible is crucial. In such a context, the adaptation of the Invariant Coordinate Selection (ICS) method from the multivariate to the multivariate functional case is of particular interest. Two extensions of ICS are proposed: point-wise and global. For both methods, the choice of the relevant components together with outlier identification and interpretation are discussed. A comparison is made on a predictive maintenance example from the avionics field and a quality control example from the microelectronics field. It appears that in such a context, point-wise and global ICS with a small number of selected components can be recommended.},
	urldate = {2024-01-10},
	journal = {Econometrics and Statistics},
	author = {Archimbaud, Aurore and Boulfani, Feriel and Gendre, Xavier and Nordhausen, Klaus and Ruiz-Gazen, Anne and Virta, Joni},
	month = mar,
	year = {2022},
	keywords = {Affine invariance, Functional outlier map, Global ICS, Outliers, Point-wise ICS, Scatter matrices},
}

@article{nordhausen_usage_2022,
	series = {50th {Anniversary} {Jubilee} {Edition}},
	title = {On the usage of joint diagonalization in multivariate statistics},
	volume = {188},
	issn = {0047-259X},
	doi = {10.1016/j.jmva.2021.104844},
	abstract = {Scatter matrices generalize the covariance matrix and are useful in many multivariate data analysis methods, including well-known principal component analysis (PCA), which is based on the diagonalization of the covariance matrix. The simultaneous diagonalization of two or more scatter matrices goes beyond PCA and is used more and more often. In this paper, we offer an overview of many methods that are based on a joint diagonalization. These methods range from the unsupervised context with invariant coordinate selection and blind source separation, which includes independent component analysis, to the supervised context with discriminant analysis and sliced inverse regression. They also encompass methods that handle dependent data such as time series or spatial data.},
	urldate = {2024-07-02},
	journal = {Journal of Multivariate Analysis},
	author = {Nordhausen, Klaus and Ruiz-Gazen, Anne},
	month = mar,
	year = {2022},
	keywords = {Blind source separation, Dimension reduction, Independent component analysis, Invariant component selection, Scatter matrices, Supervised dimension reduction},
	pages = {104844},
}

@article{archimbaud_numerical_2023,
	title = {Numerical {Considerations} and a new implementation for invariant coordinate selection},
	volume = {5},
	issn = {2577-0187},
	doi = {10.1137/22M1498759},
	abstract = {Invariant coordinate selection (ICS) is a multivariate data transformation and a dimension reduction method that can be useful in many different contexts. It can be used for outlier detection or cluster identification, and can be seen as an independent component or a non-Gaussian component analysis method. The usual implementation of ICS is based on a joint diagonalization of two scatter matrices, and may be numerically unstable in some ill-conditioned situations. We focus on one-step M-scatter matrices and propose a new implementation of ICS based on a pivoted QR factorization of the centered data set. This factorization avoids the direct computation of the scatter matrices and their inverse and brings numerical stability to the algorithm. Furthermore, the row and column pivoting leads to a rank revealing procedure that allows computation of ICS when the scatter matrices are not full rank. Several artificial and real data sets illustrate the interest of using the new implementation compared to the original one.},
	language = {en},
	number = {1},
	urldate = {2024-05-03},
	journal = {SIAM Journal on Mathematics of Data Science},
	author = {Archimbaud, Aurore and Drmač, Zlatko and Nordhausen, Klaus and Radojičić, Una and Ruiz-Gazen, Anne},
	month = mar,
	year = {2023},
	pages = {97--121},
}

@incollection{ruiz-gazen_detecting_2023,
	address = {Cham},
	title = {Detecting {Outliers} in {Compositional} {Data} {Using} {Invariant} {Coordinate} {Selection}},
	isbn = {978-3-031-22687-8},
	abstract = {Invariant coordinate (or component) selection (ICS) is a multivariate statistical method introduced by Tyler et al. (J R Stat Soc Ser B (Stat Methodol) 71(3):549–592, 2009) and based on the simultaneous diagonalization of two scatter matrices. A model-based approach of ICS, called invariant coordinate analysis, has already been adapted for compositional data in Muehlmann et al. (Independent component analysis for compositional data. In Daouia, A, Ruiz-Gazen A (eds) Advances in Contemporary Statistics and Econometrics: Festschrift in Honor of Christine Thomas-Agnan. Springer, New York, pp. 525–545, 2021). In a model-free context, ICS is also helpful at identifying outliers Nordhausen and Ruiz-Gazen (J Multivar Anal 188:104844, 2022). We propose to develop a version of ICS for outlier detection in compositional data. This version is first introduced in coordinate space for a specific choice of isometric log-ratio coordinate system associated to a contrast matrix and follows the outlier detection procedure proposed by Archimbaud et al. (Comput Stat Data Anal 128:184–199, 2018a). We then show that the procedure is independent of the choice of contrast matrix and can be defined directly in the simplex. To do so, we establish some properties of the set of matrices satisfying the zero-sum property and introduce a simplex definition of the Mahalanobis distance and the one-step M-estimators class of scatter matrices. We also need to define the family of elliptical distributions in the simplex. We then show how to interpret the results directly in the simplex using two artificial datasets and a real dataset of market shares in the automobile industry.},
	language = {en},
	urldate = {2023-10-12},
	booktitle = {Robust and {Multivariate} {Statistical} {Methods}: {Festschrift} in {Honor} of {David} {E}. {Tyler}},
	publisher = {Springer International Publishing},
	author = {Ruiz-Gazen, Anne and Thomas-Agnan, Christine and Laurent, Thibault and Mondon, Camille},
	editor = {Yi, Mengxi and Nordhausen, Klaus},
	year = {2023},
	doi = {10.1007/978-3-031-22687-8_10},
	keywords = {Anomaly detection, Elliptical distribution, Log-ratio transformation, Market share data, Scatter matrix, Ternary diagram},
	pages = {197--224},
}

@misc{li_functional_2021,
	title = {Functional independent component analysis : an extension of fourth-order blind identification},
	shorttitle = {Functional independent component analysis},
	url = {https://sites.google.com/site/germainvanbever/publica},
	abstract = {We extend Independent Component Analysis, and in particular Fourth-Order Blind Identification, to functional data. Two major problems arise in this extension: (i) the notion of “marginals” is not naturally defined for functional data and (ii) the covariance operator is, in general, non invertible. These limitations are tackled by reformulating the problem in a coordinate-free manner and by imposing natural restrictions on the mixing model. The proposed procedure, which involves simultaneous diagonalisation of secondand fourth-order scatter operators, is shown to be Fisher consistent and affine invariant. A sample estimator is provided and illustrated on simulated and real datasets. In particular, it is shown to uncover particular structures that are missed by classical PCA in an Australian precipitation dataset.},
	urldate = {2023-10-18},
	author = {Li, Bing and Van Bever, Germain and Oja, Hannu and Sabolová, Radka and Critchley, Frank},
	year = {2021},
}

@misc{trinh_discrete_2023,
	series = {{TSE} {Working} {Paper}},
	title = {Discrete and {Smooth} {Scalar}-on-{Density} {Compositional} {Regression} for {Assessing} the {Impact} of {Climate} {Change} on {Rice} {Yield} in {Vietnam}},
	url = {https://www.tse-fr.eu/publications/discrete-and-smooth-scalar-density-compositional-regression-assessing-impact-climate-change-rice},
	number = {23-1410},
	author = {Trinh, Thi Huong and Thomas-Agnan, Christine and Simioni, Michel},
	month = feb,
	year = {2023},
}

@book{stone_coordinate-free_1987,
	address = {Oxford},
	series = {Oxford statistical science series},
	title = {Coordinate-{Free} {Multivariable} {Statistics}: {An} {Illustrated} {Geometric} {Progression} from {Halmos} to {Gauss} and {Bayes}},
	isbn = {978-0-19-852210-2},
	shorttitle = {Coordinate-{Free} {Multivariable} {Statistics}},
	language = {eng},
	number = {2},
	publisher = {Clarendon Pr},
	author = {Stone, Mervyn},
	year = {1987},
}

@misc{menafoglio_anomaly_2021,
	type = {{IASC}-{ERS} {Course}},
	title = {Anomaly detection for density data based on control charts},
	url = {https://iasc-isi.org/events/iasc-ers-course-an-introduction-to-functional-data-analysis-for-density-functions-in-bayes-spaces/},
	author = {Menafoglio, Alessandra},
	month = apr,
	year = {2021},
}

@misc{ramsay_fda_2024,
	title = {fda: {Functional} {Data} {Analysis}},
	copyright = {GPL-2 {\textbar} GPL-3 [expanded from: GPL (≥ 2)]},
	shorttitle = {fda},
	url = {https://cran.r-project.org/web/packages/fda/},
	abstract = {These functions were developed to support functional data analysis as described in Ramsay, J. O. and Silverman, B. W. (2005) Functional Data Analysis. New York: Springer and in Ramsay, J. O., Hooker, Giles, and Graves, Spencer (2009). Functional Data Analysis with R and Matlab (Springer). The package includes data sets and script files working many examples including all but one of the 76 figures in this latter book. Matlab versions are available by ftp from {\textless}https://www.psych.mcgill.ca/misc/fda/downloads/FDAfuns/{\textgreater}.},
	urldate = {2024-03-16},
	author = {Ramsay, James and Hooker, Giles and Graves, Spencer},
	month = mar,
	year = {2024},
	keywords = {FunctionalData},
}

\end{document}